\newcommand{\threecolE}{\text{\sc 3-Colourability}^e}
\newcommand{\fourcolE}{\text{\sc 4-Colourability}^e}
\newcommand{\mbd}{\text{\sc Model-Based Diagnosis}^e}
\newcommand{\abd}{\text{\sc Abduction}^e}
\newcommand{\abddec}{\text{\sc Abduction}}
\newcommand{\repair}{\text{\sc Repair}^e}
\newcommand{\pExist}[1]{\text{\sc Exist}{\normalfont \_}#1}
\newcommand{\pCheck}[1]{\text{\sc Check}{\normalfont \_}#1}
\newcommand{\pExtendable}[1]{\text{\sc Extendable}{\normalfont \_}#1}
\newcommand{\pExtSol}[1]{\text{\sc ExtSol}{\normalfont \_}#1}
\newcommand{\pEnum}[1]{\text{\sc Enum}{\normalfont \_}#1}
\newcommand{\pCount}[1]{\text{\sc Count}{\normalfont \_}#1}
\newcommand{\pAnotherSol}[1]{\text{\sc AnotherSol}{\normalfont \_}#1}
\newcommand{\pExistAnotherSol}[1]{\text{\sc Exist-AnotherSol}{\normalfont \_}#1}
\newcommand{\pEnumSat}[1]{\mathrm{SAT}(#1)^e}
\newcommand{\sigmaSAT}[1]{\Sigma_{#1}\mathrm{SAT}}
\newcommand{\sigmaSATe}[1]{\Sigma_{#1}\mathrm{SAT}^e}
\newcommand{\piSATe}[1]{\Pi_{#1}\mathrm{SAT}^e}
\newcommand{\SAT}[1]{\mathrm{SAT}(#1)}
\newcommand{\SATe}{\mathrm{SAT}^e}
\newcommand{\circumscription}{\text{\sc Circumscription}}
\newcommand{\cardminsat}{\text{\sc CardMinSAT}}
\newcommand{\trans}{\mathrm{Trans}}
\newcommand{\domenum}{\text{\sc Dom-Enum}}
\newcommand{\transenum}{\text{\sc Trans-Enum}}
\newcommand{\calO}{{\mathcal O}}
\newcommand{\calA}{{\mathcal A}}
\newcommand{\calB}{{\mathcal B}}
\newcommand{\calC}{{\mathcal C}}
\newcommand{\calH}{{\mathcal H}}
\newcommand{\calP}{{\mathcal P}}
\newcommand{\SHP}{\ensuremath{\mathsf{\#P}}\xspace}
\newcommand{\SHdP}{\#\cdot\mathsf{P}}
\newcommand{\SHdNP}{\#\cdot\mathsf{NP}}
\newcommand{\SHdcoNP}{\#\cdot\mathsf{coNP}}
\newcommand{\SHdC}{\#\cdot{\cal C}}
\newcommand{\SHdSigmaPTwo}{\#\cdot\Sigma_{2}^{P}}
\newcommand{\SHdPiPTwo}{\#\cdot\Pi_{2}^{P}}
\newcommand{\ptime}{\text{\sf P}}
\newcommand{\Pcomp}{\text{\sc P}}
\newcommand{\NP}{\text{\sf NP}\xspace}
\newcommand{\EXP}{\text{\sf EXP}}
\newcommand{\EXPTIME}{\text{\sf EXPTIME}}
\newcommand{\coNP}{\text{\sf coNP}}
\newcommand{\PiP}[1]{\ensuremath{{\Pi}_{#1}^{P}}\xspace}
\newcommand{\SigmaP}[1]{\ensuremath{{\Sigma}_{#1}^{P}}\xspace}
\newcommand{\DeltaP}[1]{\ensuremath{{\Delta}_{#1}^{P}}\xspace}
\newcommand{\newoutput}{\mathsf{NOO}}
\newcommand{\ase}{\text{\sc AnotherSolExt}_{R}^{\mathcal{A}}}
\newcommand{\DTIME}{\text{\sf DTIME}}
\newcommand{\poly}{\operatorname{poly}}
\newcommand{\totalP}{\text{\sf TotalP}}
\newcommand{\outputP}{\text{\sf OutputP}}
\newcommand{\delayP}{\text{\sf DelayP}}
\newcommand{\incP}{\text{\sf IncP}}
\newcommand{\nop}[1]{}
\newcommand{\var}{\mathrm{var}}
\newcommand{\horn}{\protect\ensuremath{\mathrm{Horn}}}
\newcommand{\dualhorn}{\protect\ensuremath{\mathrm{dual Horn}}}
\newcommand{\bijunctive}{\protect\ensuremath{\mathrm{bijunctive}}}
\newcommand{\affine}{\protect\ensuremath{\mathrm{affine}}}
\newcommand{\T}{\protect\ensuremath{\mathrm{T}}}
\newcommand{\F}{\protect\ensuremath{\mathrm{F}}}
\newcommand{\Imp}{\protect\ensuremath{\mathrm{Imp}}}
\newtheorem{theorem}{Theorem}
\newtheorem{lemma}[theorem]{Lemma}
\newtheorem{proposition}[theorem]{Proposition}
\newtheorem{corollary}[theorem]{Corollary}
\newdefinition{example}{Example}
\newdefinition{definition}{Definition}
\begin{document}

\begin{frontmatter}

\title{A Complexity Theory for Hard Enumeration Problems}

\author[amu]{Nadia Creignou}
\ead{nadia.creignou@univ-amu.fr}

\author[tuw]{Markus Kr\"{o}ll}
\ead{kroell@dbai.tuwien.ac.at}

\author[tuw]{Reinhard Pichler}
\ead{pichler@dbai.tuwien.ac.at}

\author[tuw]{Sebastian Skritek}
\ead{skritek@dbai.tuwien.ac.at}

\author[luh]{Heribert Vollmer}
\ead{vollmer@thi.uni-hannover.de}

\address[tuw]{TU Wien, Vienna, Austria}
\address[amu]{Aix-Marseille Univ, CNRS, Marseille, France}
\address[luh]{Leibniz Universit\"at Hannover, Hannover, Germany}

\begin{abstract}
Complexity theory provides a wealth of complexity classes for 
analyzing the complexity of decision and counting problems. 
Despite the practical relevance of enumeration problems, the tools
provided by complexity theory for this important class of problems
are very limited.
In particular, complexity classes analogous to the polynomial hierarchy
and an appropriate notion of problem reduction are missing. 
In this work, we lay the foundations for a complexity theory of 
hard enumeration problems by proposing a hierarchy of complexity classes and 
by investigating notions of reductions for enumeration problems. 
\end{abstract}
\end{frontmatter}

\section{Introduction}
\label{sect:introduction}

While decision problems often ask for the {\em existence of a solution\/} to some problem instance, enumeration problems aim at outputting {\em all solutions\/}. 
In many domains, enumeration problems are thus the most natural kind of problems.
Just take the database area (usually the user is interested in all answer tuples and 
not just in a yes/no answer) or diagnosis (where the user wants to retrieve possible
explanations, and not only whether one exists) as two examples.
Nevertheless, the complexity of enumeration problems is far less studied than the complexity of decision problems.

It should be noted that even simple enumeration problems may produce big output. 
To capture the intuition of easy to enumerate problems 
-- despite a possibly exponential number of output values --
various notions of tractable enumeration classes have been proposed in \cite{JPY1988}.
The class $\delayP$ (``polynomial delay'')
contains all enumeration problems where, for given instance $x$, 
(1) the time to compute the first solution, (2) the time between outputting any two consecutive solutions, and (3) the time to detect that no further solution exists, are all polynomially bounded in the size of $x$. 
The class $\incP$ (``incremental polynomial time'')
contains those enumeration problems where, for given instance $x$, 
the time to compute the next solution and for detecting that no further solution exists
is polynomially bounded in the size of both $x$ and of the already computed solutions. 
Obviously, the relationship $\delayP \subseteq \incP$
 holds.
In \cite{Strozecki2010}, the proper inclusion $\delayP \subsetneq \incP$
is mentioned. 
For these tractable enumeration classes, a variety of membership results exist, a few
examples are given in 
\cite{DBLP:journals/jcss/LucchesiO78,DBLP:journals/tods/KimelfeldK14,DBLP:journals/fuin/CreignouV15,DBLP:conf/pods/CarmeliKK17,bagan2007acyclic,DSS2014}

There has also been work on intractable enumeration problems. 
Intractability of enumeration is typically proved by showing intractability of a 
related decision problem rather than directly proving lower bounds by relating one enumeration problem to the other. 
Tools for a more fine-grained analysis of intractable enumeration problems are 
missing to date. For instance, up to now we are not able to make a differentiated analysis of the complexity of the following typical enumeration problems:

\medskip
\fbox{
\begin{tabular}{ll}
\multicolumn{2}{l}
{$\piSATe{k}$ / $\sigmaSATe{k}$} \\
  {Instance}: & $\psi=\forall y_1\exists y_2\ldots Q_k y_k\phi(\vec{x},\vec{y})$ 
  /  $\psi=\exists y_1\forall y_2\ldots Q_k y_k\phi(\vec{x},\vec{y})$\\
  {Output}: & All assignments for $\vec{x}$ such that $\psi$ is true.\\
\end{tabular}
}
\medskip

This is in sharp contrast to decision problems, where the 
polynomial hierarchy is crucial for a detailed 
complexity analysis. As a matter of fact, it makes a big difference,
if an \NP-hard problem is in \NP or not.
Indeed, \NP-complete problems have an efficient transformation into SAT and can therefore be solved by making use of powerful SAT-solvers. 
Similarly, problems in \SigmaP{2} can be solved by using ASP-solvers. 
Finally, also for problems on higher levels of the polynomial hierarchy, the 
number of quantifier alternations in the QBF-encoding matters when using 
QBF-solvers. 
For counting problems, an analogue of the 
polynomial hierarchy has been defined 
in form of the 
$\SHdC$--classes with 
${\cal C} \in \{\ptime, \coNP, \PiP{2}, \dots\}$
\cite{DBLP:journals/sigact/HemaspaandraV95,DBLP:journals/siamcomp/Toda91}.
For enumeration problems, no such analogue has been studied.

\smallskip
\noindent
{\bf Goal and Results.} 
The goal of this work is to lay the foundations for a complexity theory of hard enumeration problems by defining  appropriate complexity classes for intractable enumeration
and a suitable notion of problem reductions.
We propose to extend tractable enumeration classes by
oracles. We will thus get a hierarchy of classes $\delayP^\calC$, $\incP^\calC$, where
various complexity classes ${\cal C}$ are used as oracles. 
As far as the definition of an appropriate notion of reductions is concerned, 
we follow the usual philosophy of reductions: if some enumeration problem can
be reduced to another one, then we can use this reduction together with an 
enumeration algorithm for the latter problem to solve the first one.
We observe that two principal kinds of reductions are used for decision problems, 
namely many-one reductions and Turing reductions.
Similarly, we shall define a more declarative-style and a more procedural-style
notion of reduction for enumeration problems. 
Our results are summarized below.

\begin{itemize}
\item 
{\em Enumeration complexity classes.} 
In  Section \ref{sect:classes}, we introduce a hierarchy of complexity classes of intractable enumeration via oracles and prove that it is strict unless the polynomial hierarchy collapses.

\item
{\em Declarative-style reductions.} 
In Section \ref{sect:first-reduction}, 
we introduce a declarative-style notion of reductions. While  they
enjoy some desirable properties, we   do not succeed in exhibiting complete problems under this type of reductions.

\item
{\em Procedural-style reductions.} 
In Section \ref{sect:second-reduction}, we introduce a procedural-style notion of reductions,
show that they have the desirable properties any reduction should possess,
and show that the enumeration problems associated with the typical quantified Boolean satisfiability problems are complete in our hierarchy of enumeration problems.

\item
{\em Completeness results for natural problems.}
Starting with these completeness results, we establish a chain of reductions among several natural enumeration problems from areas such as generalized satisfiability, circumscription, model-based diagnosis, abduction, and repairs of inconsistent databases in Section~\ref{sect:completeness}, thus proving completeness of these problems in classes of our hierarchy. 

\end{itemize}

This work is an extension of the conference article \cite{DBLP:conf/lata/CreignouKPSV17}.

\section{Preliminaries}
\label{sect:preliminaries}

In the following, $\Sigma$ denotes a finite alphabet and $R$ denotes a polynomially bounded, binary relation
$R\subseteq \Sigma^*\times\Sigma^*$, i.e., there is a polynomial $p$
such that for all $(x,y)\in R$, $|y|\leq p(|x|)$ holds.
For every string $x$, $R(x)=\{y\in \Sigma^*\mid (x,y)\in R\}$.
A string $y\in R(x)$ is called a \emph{solution} for $x$.
With 
such a relation $R$, we can associate several natural computational problems:
\medskip

\framebox[10cm][l]{\begin{tabular}{ll}
 \multicolumn{2}{l}{$\pExist{R}$} \\
  {Instance}: &  $x\in\Sigma^*$ \\ 
  {Question}: &  Is $R(x)\neq\emptyset$?
\end{tabular}}

\framebox[10cm][l]{\begin{tabular}{ll}
 \multicolumn{2}{l}{$\pExistAnotherSol{R}$} \\
  {Instance}: & $x \in \Sigma^*, Y\subseteq R(x)$ \\ 
  {Output}: & Is $(R(x)\setminus Y) \neq \emptyset$?
 \end{tabular}}

\framebox[10cm][l]{\begin{tabular}{ll}
 \multicolumn{2}{l}{ $\pAnotherSol{R}$} \\
  {Instance}: & $x \in \Sigma^*, Y\subseteq R(x)$ \\ 
  {Output}: &  $y\in R(x)\!\setminus\!Y$ 
  or declare that no such $y$ exists.
 \end{tabular}}

\framebox[10cm][l]{\begin{tabular}{ll}
 \multicolumn{2}{l}{$\pCheck{R}$}\\
  {Instance}: & $(x,y) \in \Sigma^*\times\Sigma^*$\hspace{-.1em} \\
  {Question}: & Is $(x,y)\in R$?\\
\end{tabular}}

\framebox[10cm][l]{\begin{tabular}{ll}
 \multicolumn{2}{l}{$\pExtSol{R}$} \\
  {Instance}: &  $(x,y) \in \Sigma^*\times\Sigma^*$ \\ 
  {Question}: &  Is there some $y'\in\Sigma^*$ such that $(x,yy')\in R$? 
\end{tabular}}
\bigskip

\noindent
In counting complexity, the following computational problem arises:

\smallskip
\framebox[10cm][l]{\begin{tabular}{ll}
 \multicolumn{2}{l}{$\pCount{R}$} \\
  {Instance}: &  $x\in\Sigma^*$ \\ 
  {Output}: &  $\bigl|\{ y\in\Sigma^* \mid (x,y)\in R\}\bigr|$.\\
\end{tabular}}
\bigskip

\noindent
A binary relation $R$ also gives rise to an enumeration problem, which 
aims at outputting the 
set $R(x)$ of solutions for $x$.
%

\smallskip
\framebox[10cm][l]{\begin{tabular}{ll}
 \multicolumn{2}{l}{$\pEnum{R}$} \\
  {Instance}: &  $x\in\Sigma^*$ \\ 
  {Output}: &  $R(x)=\{ y\in\Sigma^* \mid (x,y)\in R\}$.\\
\end{tabular}}
\bigskip

We assume the reader to be familiar with the polynomial hierarchy --
the complexity classes $\ptime$, $\NP$, $\coNP$ and, more generally, 
$\DeltaP{k}$, $\SigmaP{k}$, and $\PiP{k}$ for $k \in \{0,1, \dots\}$.


As we usually deal with exponential runtime of algorithms in the course of an enumeration process,
we will also make use of the weak $\EXP$ hierarchy (see \cite{hemachandra1987strong}) in the following. It is an exponential time analogue
of the polynomial hierarchy defined as follows:

\begin{definition}[Exponential Hierarchy]
\label{def:exp-hierarchy}
\begin{align*}
\Delta_0^\EXP &=\Sigma_0^\EXP = \Pi_0^\EXP = \EXP\\
\Delta_k^\EXP &= \EXP^{\SigmaP{k-1}},\text{ for }k\geq1\\
\Sigma_k^\EXP &= \text{\sf NEXP}^{\SigmaP{k-1}},\text{ for }k\geq1\\
\Pi_k^\EXP &= \text{\sf coNEXP}^{\SigmaP{k-1}},\text{ for }k\geq1
\end{align*}
\end{definition}

\smallskip

A hierarchy of counting
complexity classes similar to the polynomial hierarchy was defined 
\cite{DBLP:journals/sigact/HemaspaandraV95,DBLP:journals/siamcomp/Toda91}. 
Let ${\cal C}$ be a complexity class of decision problems. Then 
$\SHdC$ denotes the class of all counting problems ${\cal P}$, such that 
there exists a relation $R$ with $\pCheck{R} \in {\cal C}$ and ${\cal P} = \pCount{R}$.
For $\SHP$ -- the best known counting complexity class --  
we have $\SHP = \SHdP$.
Moreover, the inclusions  
$\SHP \subseteq \SHdNP \subseteq \SHdcoNP \subseteq \SHdSigmaPTwo 
\subseteq \SHdPiPTwo \subseteq \dots $
hold.

\smallskip

A complexity class $\calC$ is \emph{closed under a reduction $\leq_r$}
if, for any two binary relations $R_1$ and $R_2$ we have that
$R_2 \in \calC$ and $R_1 \leq_r R_2$ imply $R_1 \in \calC$. Furthermore,
a reduction $\leq_r$ is \emph{transitive} if for any three binary relations
$R_1, R_2, R_3$, it is the case that $R_1 \leq_r R_2$ and $R_2 \leq_r R_3$ implies
$R_1 \leq_r R_3$.

\section{Complexity Classes for Enumeration}
\label{sect:classes}

In Section~\ref{sect:introduction}, 
we have already recalled two important tractable enumeration complexity classes,
$\delayP$ and $\incP$
from \cite{JPY1988}. 
Note that in \cite{Strozecki2010,DBLP:journals/mst/Strozecki13}, 
these classes are defined slightly differently by restricting 
$\delayP$ and $\incP$ to those problems $\pEnum{R}$ such that 
the corresponding problem $\pCheck{R}$ is in $\ptime$ --- in addition
to the constraints on the allowed delays.
%
We adhere to the definition of tractable enumeration classes 
from \cite{JPY1988}. 


In contrast to counting complexity, defining a hierarchy of enumeration  problems via the $\pCheck{R}$ problem of binary relations $R$ does not seem appropriate.
%
Note that while the same counting problem $\calP$ can be defined by different relations, i.e.,
${\cal P} = \pCount{R_1}$ and ${\cal P} = \pCount{R_2}$ for $R_1\neq R_2$, the relation associated with an enumeration problem is fixed. 

Thus, we need an alternative approach for defining meaningful enumeration complexity classes. 
To this end, we first fix our computation model:
Observe that there is a subtle difference between how Turing Machines and
random access machines (RAMs) can access data in their ``memory''. Due to the linear
nature of the tapes of Turing Machines, accessing an exponential size data structure
requires exponential time, even if just a small portion of its data is actually read.
This is usually not a problem when
studying decision or search problems, since writing an exponentially sized data structure
onto the tape already requires exponential time, and as a result the additional exponential
time to read the data structure has no more effect on the overall runtime. 

This situation, however, changes for enumeration problems and polynomial delay, since it 
is now possible that, while computing an exponential number of solutions, an exponential
size data structure accumulates. Thus computing another solution with only polynomial
delay is not possible with Turing Machines if this requires some information from such a
data structure. However, by maintaining suitable index structures, it might be possible to
gather the necessary information in polynomial time from such a data structure on a 
random access machine (RAM) that allows to access its memory directly and not first needs
to move the read/write head over the correct tape position.
In fact, due to this property that allows to access (polynomially sized) parts of exponential
size data in polynomial time, it is common to use the RAM model as a computational model
for the study of enumeration problems (cf.\ \cite{Strozecki2010}).
We restrict ourselves here to polynomially bounded RAMs, i.e., 
throughout the computation of such a machine, the size of the content of each register is polynomially bounded in the size of the input.

For enumeration, we will also make use of RAMs with an 
$\mathsf{output}$-instruction, as defined in \cite{Strozecki2010}. This model can be extended 
further by introducing decision oracles. The input to the oracle is stored in 
special registers and the oracle takes consecutive non-empty registers as input. 
Moreover, following \cite{book/AroraBarak2009},
we use a computational model
that does not delete the input of an oracle call once such a call is made. For a detailed
definition, refer to \cite{Strozecki2010}.
It is important to note that due to the exponential runtime of an enumeration algorithm and the 
fact that the input to an oracle is not deleted when the oracle is executed, 
the input to an oracle call may eventually become exponential as well. 
Clearly, this can only happen if exponentially many consecutive special registers are non-empty, since we assume
also each special register to be polynomially bounded.

Using this we
define a collection of enumeration complexity classes via oracles: 

\begin{definition}[enumeration complexity classes]
\label{def:enumeration-classes-with-oracles}
Let $R$ be a binary relation, and $\calC$ a decision complexity class. 
Then we say that:
\begin{itemize}
 \item $\pEnum{R}\in\delayP^\calC$ if there is a RAM $M$ with an oracle $L$ in $\calC$ such that
 for any instance $x$,  $M$ enumerates $R(x)$
 with polynomial delay. The class $\incP^\calC$ 
is defined analogously.
\item $\pEnum{R}\in\delayP_p^\calC$ if
there is a RAM $M$ with an oracle $L$ in $\calC$
such that for any instance $x$,  $M$ enumerates $R(x)$ with polynomial delay and 
the size of the input to every oracle call is polynomially bounded in $|x|$.
\end{itemize}
\end{definition}

\noindent
Note that the restriction of the oracle inputs to polynomial size 
makes a crucial difference when it comes to $\delayP^\calC$,
where we have a discrepancy between the
polynomial restriction (w.r.t. the input $x$) 
on the time between two consecutive solutions are output
and the possibly exponential size (w.r.t. the input $x$) of oracle calls. 
No such discrepancy exists for 
$\incP^\calC$,
where the same polynomial upper bound w.r.t.\
the already computed solutions (resp.\ all solutions) applies both to the allowed
time and to the size of the oracle calls.
In fact, the lower computational power of $\delayP$ compared with $\incP$ can be compensated
by  equipping the lower class with a slightly more powerful oracle.

\begin{theorem}\label{theorem:deltacollapse}
Let $k\geq 0$. Then $\delayP^{\DeltaP{k+1}}= \incP^{\SigmaP{k}}$.
\end{theorem}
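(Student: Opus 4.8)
The plan is to prove the two inclusions $\delayP^{\DeltaP{k+1}} \subseteq \incP^{\SigmaP{k}}$ and $\incP^{\SigmaP{k}} \subseteq \delayP^{\DeltaP{k+1}}$ separately. For the inclusion $\incP^{\SigmaP{k}} \subseteq \delayP^{\DeltaP{k+1}}$, the standard idea is to buffer solutions so as to smooth out the delay. Given a RAM $M$ witnessing $\pEnum{R} \in \incP^{\SigmaP{k}}$, its running time before producing the $i$-th solution is bounded by a polynomial in $|x|$ and in the sizes of the $i-1$ already computed solutions; since solutions are polynomially bounded in $|x|$, this is bounded by a polynomial $q(|x|, i)$. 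I would run $M$ internally without emitting its output immediately; instead, after carefully counting steps, I release the buffered solutions at a regulated, polynomial pace. The subtlety is that $q(|x|, i)$ grows with $i$, so a naive "one step of simulation per output" schedule does not obviously keep the delay polynomial in $|x|$ alone. The usual fix (à la the $\delayP \subsetneq \incP$ literature) is to let the delay be polynomial in $|x|$ but to exploit the extra oracle power: a $\DeltaP{k+1} = \ptime^{\SigmaP{k}}$ oracle can, in one call, answer questions that require polynomially many adaptive $\SigmaP{k}$ queries plus polynomial deterministic computation. In particular, one can ask the oracle to simulate $M$ for a polynomial (in $|x|$) number of further steps and report the next solution found, thereby obtaining polynomial delay. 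Here one must be careful that the internal configuration of $M$ — including its accumulated data structures — can be exponential, so it cannot be passed as an oracle input under the $\delayP_p$ restriction; but for plain $\delayP^\calC$ this is allowed, which is exactly why the theorem is stated for $\delayP$ and not $\delayP_p$.

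For the converse inclusion $\delayP^{\DeltaP{k+1}} \subseteq \incP^{\SigmaP{k}}$, let $M$ be a RAM with a $\DeltaP{k+1}$ oracle $L$ enumerating $R(x)$ with polynomial delay. The difficulty is that $L \in \DeltaP{k+1}$ is a $\ptime^{\SigmaP{k}}$ computation whose input may, over the course of the enumeration, grow to exponential size; so we cannot simply replace each call to $L$ by a polynomial-in-$|x|$ computation with a $\SigmaP{k}$ oracle. However, in an $\incP^{\SigmaP{k}}$ algorithm the allowed running time between solutions is polynomial in $|x|$ and in the total size of all previously output solutions, which after $i$ solutions is already polynomial in $|x|$ and $i$ — comparable to the time $M$ itself has spent. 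The idea is therefore to simulate $M$ directly: each time $M$ makes a call to the $\DeltaP{k+1}$ oracle $L$ on an input of current size $n$, we instead run the underlying $\ptime$-machine for $L$ ourselves, which takes time polynomial in $n$ and makes polynomially-many (in $n$) calls to a $\SigmaP{k}$ oracle, which we have. Since $M$ has polynomial delay, between two outputs it spends time polynomial in $|x|$, hence every oracle input it constructs in that window has size polynomial in $|x|$ — except that an oracle input may reference exponentially-long stretches of special registers built up over the whole run. This is precisely the scenario the preliminaries warn about; but the cumulative work done by $M$ up to the $i$-th solution, including building those registers, is itself bounded by the incremental-polynomial budget, so the direct step-by-step simulation of $M$ (replacing $L$-calls by explicit $\ptime^{\SigmaP{k}}$ computations) stays within the $\incP^{\SigmaP{k}}$ time bound. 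Checking this bookkeeping is the crux.

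The main obstacle, then, is handling the exponential-size oracle inputs correctly in both directions: in the forward direction, arguing that the extra adaptivity of a $\DeltaP{k+1}$ oracle lets a polynomial-delay machine absorb the $\incP$ slack even though intermediate data is exponential; in the backward direction, arguing that unrolling a $\ptime^{\SigmaP{k}}$ oracle into explicit $\SigmaP{k}$ queries only costs time polynomial in the current oracle-input size, and that this size is under control relative to the incremental budget. I expect the cleanest route is to first establish the backward inclusion by the straightforward unrolling simulation (noting $\DeltaP{k+1} = \ptime^{\SigmaP{k}}$), and then to prove the forward inclusion by the buffering argument, taking care that when $k = 0$ the statement specializes to the known relationship between $\delayP$ and $\incP$ with an $\NP$-style oracle, which serves as a useful sanity check.
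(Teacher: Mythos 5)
Your proposal follows essentially the same route as the paper in both directions: the backward inclusion is the same step-by-step unrolling of the $\ptime^{\SigmaP{k}}$ machine behind the $\DeltaP{k+1}$ oracle, with the same bookkeeping that after $n$ outputs the accumulated oracle input has size at most $p(|x|)\cdot n$, so each simulated call costs time polynomial in $|x|$ and $n$ and fits the incremental budget; the forward inclusion is likewise the paper's idea of shipping the (possibly exponential) enumeration history to a $\DeltaP{k+1}$ oracle that redoes the incremental-delay computation, which is polynomial in the oracle's own input size rather than in $|x|$. The one detail you elide is that a decision oracle cannot literally ``report the next solution found'': the paper formalizes the oracle as the prefix problem $\ase$ (given $y_1,\dots,y_n,y',x$, is $y'$ a prefix of the $(n+1)$-st solution?) and recovers each solution via polynomially many prefix queries, each answered in a single step regardless of the growing input size --- it is this, not a single call doing work polynomial in $|x|$, that keeps the delay polynomial in $|x|$ alone.
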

\begin{proof}
Let $k\geq 0$.
We start the proof by showing that $\delayP^{\DeltaP{k+1}}\supseteq \incP^{\SigmaP{k}}$. 
So let $\pEnum{R}\in\incP^{\SigmaP{k}}$ with corresponding binary relation $R$
and let $x\in\Sigma^*$. Fix
an incremental delay algorithm $\calA$ which uses a $\SigmaP{k}$-oracle
witnessing the membership $\pEnum{R}\in\incP^{\SigmaP{k}}$ and let $<^*$ be
an order on $R(x)$ induced by algorithm $\calA$, i.e. the $i$-th output of $\calA$ on input $x$ is the
$i$-th element in $<^*$. We define the following decision problem:
\medskip 

\framebox[10cm][l]{\begin{tabular}{ll}
 \multicolumn{2}{l}{$\ase$} \\
  {Instance}: &  $y_1,\ldots,y_n,y',x\in\Sigma^*$ \\ 
  {Question}: & Is $y'$ a prefix of $y_{n+1}$, where $y_{n+1}$ is the\\
  &$(n+1)$-th element in $R(x)$ w.r.t. $<^*$?\\
\end{tabular}}

\medskip 

We first note that $\ase\in\DeltaP{k+1}$. Indeed, assume that we have given an instance 
$y_1,\ldots,y_n,y',x\in\Sigma^*$. Then we can use $\calA$ to enumerate the
first $n+1$ elements of $R(x)$ in time $\calO(\poly(|x|,|n+1|))=
\calO(\poly(|y_1|+\ldots+|y_n|+|y'|+|x|))$ and then check whether $y'$ is a prefix of $y_{n+1}$. As
$\calA$ uses a $\SigmaP{k}$-oracle, this decision can be made 
within $\ptime^{\SigmaP{k}}=\DeltaP{k+1}$. The membership 
$\pEnum{R}\in\delayP^{\DeltaP{k+1}}$ follows immediately, as we can construct a polynomial
delay algorithm with an $\ase$-oracle that enumerates $\pEnum{R}$ in a similar way to
commonly used enumeration
algorithms, see~\cite{Strozecki2010,DBLP:journals/fuin/CreignouV15} or Proposition~\ref{prop:new}.
This enumeration algorithm starts by computing some $a\in\Sigma$ such that the oracle on this input
returns 'yes'. Then $y_1$ (and also every other $y_i$) can be computed by
repeatedly extending the previous input to the oracle by the unique element in $\Sigma$ such that
the decision oracle $\ase$ returns 'yes'.\\
Next we need to show that $\delayP^{\DeltaP{k+1}}\subseteq \incP^{\SigmaP{k}}$,
so let $\pEnum{R}\in\delayP^{\DeltaP{k+1}}$, and let $\calA$ be an algorithm witnessing
this membership. Moreover, let $L\in\ptime^{\SigmaP{k}}$ be the language used for the 
$\DeltaP{k+1}$-oracle in $\calA$, with a polynomial $q$ and a language $L'\in\SigmaP{k}$ such
that $L\in\DTIME^{L'}(q(n))$. Let $p$ be be the polynomial for the delay of $\calA$.
We describe an algorithm $\calB$ that has an incremental delay of $p(X)\cdot Y\cdot q(p(X)\cdot Y)$
(here the indeterminate $X$ stands for the size of the input and the indeterminate $Y$
for the number of previously output solutions)
that uses a $\SigmaP{k}$-oracle, such that $\calB$ enumerates $\pEnum{R}$. For this,
let $x\in\Sigma^*$ and assume that we want to enumerate $R(x)$. The
algorithm $\calB$ works as follows:
\begin{itemize}
\item Let $y_1$ be the first element output by algorithm $\calA$ on input $x$. 
As this output can be computed by $\calA$ in time $p(|x|)$, at most $p(|x|)$ many calls 
to the $L$-oracle have been made with an input of size at most $p(|x|)$. Thus the
answer of every oracle call can be computed in time $q(p(|x|))$ using an $L'$-oracle.
Therefore $y_1$ can be output by $\calB$ in time $p(|x|)\cdot q(p(|x|))$ 
by running $\calA$ until the first output, and simulating
the oracle calls accordingly.
\item For $n\geq 2$, let $y_n$ be the $n$-th element output by $\calA$. As with $y_1$, we can
make the $n$-th output of $\calB$ by running $\calA$ until $y_n$ is output and simulating
the oracle calls accordingly. Indeed, $\calA$ takes $p(|x|)\cdot n$ steps to output $y_n$, 
with at most $p(|x|)\cdot n$ oracle calls to $L$  with an input of size bounded by $p(|x|)\cdot n$.
Thus $y_n$ can be computed in time $p(|x|)\cdot n\cdot q(p(|x|\cdot n))$ using an $L'$-oracle.
\end{itemize}
\end{proof}

In fact, given that the classes $\DeltaP{k}$ describe problems that can be solved in polynomial
time by a deterministic machine with a $\SigmaP{k-1}$-oracle, it is not very surprising that
for enumeration algorithms, $\DeltaP{k}$-oracles can be simulated by $\SigmaP{k-1}$-oracles,
as long as one accounts for the potentially accumulating input to the $\DeltaP{k}$-oracles.

This property is not restricted to the case shown in
Theorem~\ref{theorem:deltacollapse}, but applies to all $\DeltaP{k}$-oracles.
As a result, in the remainder of this work, we omit any result regarding the classes
$\delayP_p^{\DeltaP{k}}, \delayP^{\DeltaP{k}}$ or $\incP^{\DeltaP{k}}$,
as they are given implicitly by a result on
$\delayP_p^{\SigmaP{k}}$ or $\incP^{\SigmaP{k}}$ respectively.
Indeed, Theorem~\ref{theorem:deltacollapse} shows the equality of $\delayP^{\DeltaP{k}}$
and $\incP^{\SigmaP{k-1}}$.
Concerning the remaining classes, observe that 
$\delayP_p^{\ptime^{\calC}} = \delayP_p^{\calC}$ for any decision complexity class $\calC$; 
in particular we obtain
$\delayP_p^{\DeltaP{k}} = \delayP_p^{\SigmaP{k-1}}$.
A similar results holds for the $\incP^{\calC}$ classes even without the bound on the input 
size to the decision oracle. This is due to the fact that the size of any input $\alpha$ to the 
decision oracle is at most polynomial in the combined size of the instance and number of 
outputs. Thus an incremental delay gives us enough time to make 
$\mathsf{poly}(|\alpha|)$ many oracles calls, meaning that 
$\incP^{\DeltaP{k}}=\incP^{\SigmaP{k-1}}$. 

\subsection{A Hierarchy of Enumeration Complexity Classes}

We now prove that our classes provide strict hierarchies under the assumption that the polynomial hierarchy is strict.

\begin{theorem}\label{theorem:delaypccollapse}
Let $k\geq0$.  Then, unless the polynomial hierarchy collapses to the 
$(k+1)$-st
level, 
\begin{align*}
&\delayP_p^{\SigmaP{k}}\subsetneq\delayP_p^{\SigmaP{k+1}},
\delayP^{\SigmaP{k}}\subsetneq\delayP^{\SigmaP{k+1}},
\incP^{\SigmaP{k}}\subsetneq\delayP^{\SigmaP{k+1}}\\
&\text{ and }
\incP^{\SigmaP{k}}\subsetneq\incP^{\SigmaP{k+1}}.
\end{align*}

\end{theorem}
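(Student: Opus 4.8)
The plan is to reduce all four separations to a single "padding"/diagonalization argument that links the enumeration hierarchy to the polynomial hierarchy. The key observation is that a decision problem $L \in \SigmaP{k+1}$ can be encoded as an enumeration problem: given an instance $z$ of $L$, consider the relation $R$ with $R(z) = \{1\}$ if $z \in L$ and $R(z) = \emptyset$ otherwise; then $\pEnum{R}$ lies in $\delayP_p^{\SigmaP{k+1}}$ (one oracle call of polynomial size suffices). Hence if any of the smaller classes — say $\delayP^{\SigmaP{k}}$, which by Theorem \ref{theorem:deltacollapse}-style reasoning and the $\incP$/$\delayP$ relationships is the largest of the "small" classes in each stated inclusion — were to contain all of $\delayP_p^{\SigmaP{k+1}}$, we could decide every $\SigmaP{k+1}$ problem using a $\delayP^{\SigmaP{k}}$ machine: run the enumeration algorithm until it either outputs a solution (accept) or halts (reject), which happens within polynomial delay, hence polynomial total time, using a $\SigmaP{k}$ oracle. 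This would put $\SigmaP{k+1} \subseteq \ptime^{\SigmaP{k}} = \DeltaP{k+1}$, collapsing the polynomial hierarchy to level $k+1$.

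Concretely, I would proceed as follows. First, observe the trivial inclusions $\delayP^{\calC_1} \subseteq \delayP^{\calC_2}$ and $\incP^{\calC_1} \subseteq \incP^{\calC_2}$ whenever $\calC_1 \subseteq \calC_2$, and $\delayP^{\calC} \subseteq \incP^{\calC} \subseteq \delayP^{\calD}$-type relations from the preceding discussion; these give all the "$\subseteq$" directions in the four displayed statements (the inclusions $\delayP_p^{\SigmaP{k}} \subseteq \delayP_p^{\SigmaP{k+1}}$ etc. are immediate from $\SigmaP{k} \subseteq \SigmaP{k+1}$). Second, pick a $\SigmaP{k+1}$-complete problem $L$ (e.g.\ $\sigmaSAT{k+1}$) and define the "singleton" enumeration problem $\pEnum{R_L}$ as above; verify $\pEnum{R_L} \in \delayP_p^{\SigmaP{k+1}} \subseteq \incP^{\SigmaP{k+1}}$. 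Third, assume for contradiction that one of the four inclusions is an equality; in each case the equality forces $\pEnum{R_L}$ into a class whose oracle is at most $\SigmaP{k}$ and whose first-solution/termination time is polynomial in $|z|$ (for the $\incP$ case note that $|R_L(z)| \le 1$, so "incremental" time is still polynomial in $|z|$ alone). Fourth, extract from this enumeration machine a $\DeltaP{k+1}$ decision procedure for $L$, yielding $\SigmaP{k+1} \subseteq \DeltaP{k+1}$ and hence the claimed collapse.

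The main obstacle — and the point requiring the most care — is the $\delayP^{\SigmaP{k}} \subsetneq \delayP^{\SigmaP{k+1}}$ separation, because a $\delayP^{\SigmaP{k}}$ algorithm may issue oracle queries of exponential size (the accumulated-input phenomenon discussed before Definition \ref{def:enumeration-classes-with-oracles}), so one cannot naively claim the enumeration of $R_L(z)$ runs in polynomial \emph{total} time with a genuine $\SigmaP{k}$ oracle. Here I would use the singleton structure of $R_L$ decisively: since $R_L(z)$ has at most one element, the algorithm must output that element or terminate within the delay bound $p(|z|)$, i.e.\ within $p(|z|)$ steps, during which at most $p(|z|)$ oracle queries are made, each of size at most $p(|z|)$ — so the exponential-blowup issue simply cannot arise on these instances, and the simulation by a polynomial-time machine with a $\SigmaP{k}$ oracle goes through. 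For the $\incP^{\SigmaP{k}} \subsetneq \delayP^{\SigmaP{k+1}}$ and $\incP^{\SigmaP{k}} \subsetneq \incP^{\SigmaP{k+1}}$ separations the same $R_L$ works verbatim, since $|R_L(z)| \le 1$ makes incremental and polynomial-delay behaviour indistinguishable on these instances; and $\delayP_p^{\SigmaP{k}} \subsetneq \delayP_p^{\SigmaP{k+1}}$ is the cleanest case, as the polynomial bound on oracle inputs is built into the class definition. I would also double-check the boundary case $k=0$, where $\SigmaP{0} = \ptime$ and the statement reduces to the (known) strictness assumptions at the bottom of the polynomial hierarchy.
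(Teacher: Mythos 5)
Your proposal is correct and follows essentially the same route as the paper: both use the singleton relation $R_L=\{(x,1)\mid x\in L\}$ for a $\SigmaP{k+1}$-complete $L$, place $\pEnum{R_L}$ in $\delayP_p^{\SigmaP{k+1}}$, and argue that membership in any of the smaller classes would let a polynomial-time machine with a $\SigmaP{k}$-oracle decide $L$, collapsing the hierarchy to the $(k+1)$-st level. The only cosmetic difference is that the paper dispenses with the $\delayP^{\SigmaP{k}}$ case via the inclusion chain $\delayP^{\SigmaP{k}}\subseteq\incP^{\SigmaP{k}}\subseteq\delayP^{\SigmaP{k+1}}$ from Theorem~\ref{theorem:deltacollapse}, whereas you argue it directly by observing that the singleton structure prevents oracle inputs from accumulating; both are sound.
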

\begin{proof}

Let $k\geq0$.
 By Theorem~\ref{theorem:deltacollapse} and the definition of our classes we have that
\begin{equation*}
 \delayP^{\SigmaP{k}}\subseteq\delayP^{\DeltaP{k+1}}= \incP^{\SigmaP{k}}\subseteq \delayP^{\SigmaP{k+1}}
 \subseteq  \incP^{\SigmaP{k+1}},
 \end{equation*} 
so we only need to show that 
$\delayP_p^{\SigmaP{k}}\subsetneq\delayP_p^{\SigmaP{k+1}}$ and  $\incP^{\SigmaP{k}}\subsetneq\delayP^{\SigmaP{k+1}}$.
 Let $L$ be a $\SigmaP{k+1}$-complete problem. Define a relation $R_L=\{(x,1)\mid x\in L\}$. It
is clear that $\pCheck{R_L}$ is $\SigmaP{k+1}$-complete. Moreover, the enumeration
problem $\pEnum{R_L}$ is in $\delayP_p^{\SigmaP{k+1}}$ (thus also in $\delayP^{\SigmaP{k+1}}$).
Assume that $\pEnum{R_L}\in\delayP_p^{\SigmaP{k}}$ (or $\pEnum{R_L}\in\incP^{\SigmaP{k}}$).
Then, as there is only one output to the enumeration problem, $\pCheck{R_L}$ can be decided in polynomial time using a $\SigmaP{k}$-oracle,
meaning that $\pCheck{R_L}\in\DeltaP{k+1}$ and thus the polynomial hierarchy  collapses to the
$(k+1)$-st level.
\end{proof}

The statement of the following proposition is twofold: First, it shows that the complexity classes based on 
$\delayP$ with a polynomial bound on the decision oracle and the complexity classes based on $\delayP$
without such a bound, respectively, are very likely to be distinct. Second,
the gap between $\delayP_p^{\SigmaP{k}}$
and $\delayP^{\SigmaP{k}}$ cannot be overcome by equipping the oracle-bounded enumeration complexity class with
a slightly more powerful oracle, in contrast to the result of Theorem~\ref{theorem:deltacollapse}.
Note that the proposition refers to the weak $\EXP$ hierarchy as defined in Definition \ref{def:exp-hierarchy}.

\begin{proposition}\label{prop:delayppsubdelayp}
Let $k\geq 1$. If $\EXP\subsetneq\Delta_{k+1}^\EXP$, then 
\begin{equation*}
\delayP_p^{\SigmaP{k}}\subsetneq\delayP^{\SigmaP{k}}\not\subseteq\delayP_p^{\SigmaP{k+1}}.
\end{equation*}
\end{proposition}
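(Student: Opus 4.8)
The plan is to prove the two assertions $\delayP_p^{\SigmaP{k}}\subsetneq\delayP^{\SigmaP{k}}$ and $\delayP^{\SigmaP{k}}\not\subseteq\delayP_p^{\SigmaP{k+1}}$ separately, both by exhibiting a single witness enumeration problem whose unique solution encodes a decision problem that is too hard for the weaker model. The separating problem should be of the form $\pEnum{R_L}$ with $R_L=\{(x,1)\mid x\in L\}$, as in the proof of Theorem~\ref{theorem:delaypccollapse}, but now $L$ must be chosen so that it lies in the right \emph{exponential} class: the key idea is that when the oracle input is unrestricted, a $\delayP^{\calC}$ machine can, before producing its single output, spend exponential time writing an exponential-size string into the special oracle registers and then make one oracle call on it, thereby simulating an $\EXP^{\calC}$ computation; whereas a $\delayP_p^{\calC}$ machine can only feed polynomially-sized inputs to its oracle, so it can only decide problems in $\ptime^{\calC}$ on a one-output instance.

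For the first inclusion $\delayP_p^{\SigmaP{k}}\subsetneq\delayP^{\SigmaP{k}}$, I would take $L$ to be a $\Delta_{k+1}^\EXP$-complete problem, i.e. complete for $\EXP^{\SigmaP{k}}$. The plan is: (i) show $\pEnum{R_L}\in\delayP^{\SigmaP{k}}$ — on input $x$, the RAM runs the $\EXP^{\SigmaP{k}}$ algorithm for $L$, which takes exponential time but that is allowed since there is at most one solution to output and hence only the delay to the first solution matters; whenever that algorithm queries its $\SigmaP{k}$-oracle, the RAM writes the (polynomial-in-the-exponential-time, hence at most exponential) query into its special registers and invokes the $\SigmaP{k}$ oracle; if $x\in L$ it outputs $1$, otherwise it halts; (ii) show $\pEnum{R_L}\notin\delayP_p^{\SigmaP{k}}$ unless $\EXP^{\SigmaP{k}}\subseteq\ptime^{\SigmaP{k}}=\DeltaP{k+1}$, contradicting $\EXP\subsetneq\Delta_{k+1}^\EXP$ (note $\ptime^{\SigmaP{k}}\subseteq\EXP$, so $\EXP\subsetneq\EXP^{\SigmaP{k}}$ would follow, contradicting the hypothesis since $\EXP^{\SigmaP{k}}=\Delta_{k+1}^\EXP$ and $\DeltaP{k+1}\subseteq\EXP$): if a $\delayP_p^{\SigmaP{k}}$ machine enumerates $\pEnum{R_L}$, then because the time to the first (and only) output is polynomial in $|x|$ and all oracle inputs are polynomial in $|x|$, we can decide $x\in L$ in $\ptime^{\SigmaP{k}}$, so $L\in\DeltaP{k+1}$, whence $\EXP^{\SigmaP{k}}=\DeltaP{k+1}\subseteq\ptime^{\SigmaP{k}}$ and thus $\EXP=\Delta^\EXP_{k+1}$.

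For the non-inclusion $\delayP^{\SigmaP{k}}\not\subseteq\delayP_p^{\SigmaP{k+1}}$ I would use the \emph{same} witness $\pEnum{R_L}$ with $L$ being $\Delta_{k+1}^\EXP$-complete: part (i) above already shows $\pEnum{R_L}\in\delayP^{\SigmaP{k}}$; for the exclusion from $\delayP_p^{\SigmaP{k+1}}$, the same one-output argument shows that membership would give $L\in\ptime^{\SigmaP{k+1}}=\DeltaP{k+2}$. This by itself is not an immediate contradiction, so here I need to be slightly more careful about the choice of hardness: I would instead pick $L$ to be complete for $\Delta_{k+1}^\EXP$ and observe that $\Delta^\EXP_{k+1}\subseteq\ptime^{\SigmaP{k+1}}=\DeltaP{k+2}$ would entail a collapse contradicting the hypothesis $\EXP\subsetneq\Delta^\EXP_{k+1}$ via a padding/translational argument (an $\EXP^{\SigmaP{k}}$ computation on inputs of size $n$ becomes, after padding to size $2^n$, a $\ptime^{\SigmaP{k}}$ computation, so $\EXP^{\SigmaP{k}}\subseteq\ptime^{\SigmaP{k+1}}$ forces $\EXP\subseteq\ptime^{\SigmaP{k+1}\text{ padded down}}\subseteq\ldots$, ultimately $\EXP=\Delta^\EXP_{k+1}$). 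The main obstacle I anticipate is getting this last translational step exactly right: I must verify that assuming $\Delta^\EXP_{k+1}\subseteq\DeltaP{k+2}$ really does collapse $\EXP$ onto $\Delta^\EXP_{k+1}$, which is the content of a standard upward-translation lemma for the weak EXP hierarchy — this is where the precise definitions of $\Delta_k^\EXP=\EXP^{\SigmaP{k-1}}$ from Definition~\ref{def:exp-hierarchy} and a careful padding argument are essential, and it is the one place where a routine-looking calculation could hide a subtlety about whether the oracle in the padded simulation stays at level $k$ or creeps up. Everything else — the two "one solution, so only first-delay matters" simulations — is straightforward.
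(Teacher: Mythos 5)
There is a genuine gap, and it lies at the very first step: your claim that $\pEnum{R_L}\in\delayP^{\SigmaP{k}}$ because ``there is at most one solution to output and hence only the delay to the first solution matters'' misreads the definition of polynomial delay. The class $\delayP$ (and hence $\delayP^{\calC}$) requires that the time to compute the \emph{first} solution and the time to detect that no further solution exists are both polynomially bounded in $|x|$. A machine that runs an $\EXP^{\SigmaP{k}}$ algorithm before emitting its single output (or before halting) is therefore not a polynomial-delay machine. Moreover, on instances with a single solution the distinction between $\delayP^{\SigmaP{k}}$ and $\delayP_p^{\SigmaP{k}}$ evaporates: since the total running time is polynomial and the special oracle registers are filled one step at a time, only polynomially many of them can become non-empty, so every oracle query is automatically of polynomial size. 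Consequently a one-solution witness such as $R_L=\{(x,1)\mid x\in L\}$ can never separate $\delayP_p^{\SigmaP{k}}$ from $\delayP^{\SigmaP{k}}$; it can only separate oracle \emph{levels}, which is what it is used for in Theorem~\ref{theorem:delaypccollapse}.

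The paper circumvents exactly this obstacle by padding the solution set: its witness $\pEnum{D_0}$ outputs all $2^{q(|x|)}$ words of length $q(|x|)$ plus the symbol $2$ iff $x\in L$ for some $L\in\Delta^{\EXP}_{k+1}\setminus\EXP$ decidable in time $2^{q(n)}$ with a $\SigmaP{k}$-oracle. The exponentially many dummy outputs, each produced with polynomial delay, buy the algorithm exponential \emph{total} time in which to run the $\EXP^{\SigmaP{k}}$ computation (with exponentially long oracle queries) and decide whether to emit the final $2$; this is what puts $\pEnum{D_0}$ in $\delayP^{\SigmaP{k}}$. For the two non-memberships the paper then uses Lemma~\ref{lemma:inccheckexp}: for any $\pEnum{R}\in\delayP_p^{\SigmaP{k}}$ (indeed for any polynomially oracle-bounded level, including $\SigmaP{k+1}$), each oracle call can be simulated by brute force in singly exponential time because its input is polynomial in $|x|$, so $\pCheck{R}\in\EXP$; membership of $\pEnum{D_0}$ in either $\delayP_p^{\SigmaP{k}}$ or $\delayP_p^{\SigmaP{k+1}}$ would thus place $L$ in $\EXP$, a direct contradiction. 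This uniform argument also makes the upward-translation/padding lemma you anticipate needing for the second assertion unnecessary. To repair your proof you would need to replace your witness by one with exponentially many solutions and replace the collapse-of-the-hierarchy argument by the $\pCheck{}\in\EXP$ argument.
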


To show Proposition~\ref{prop:delayppsubdelayp}, we first need to prove the following result.

\begin{lemma}\label{lemma:inccheckexp}
Let $R$ be a binary relation and $k\geq 0$. If $\pEnum{R}\in \delayP_p^{\SigmaP{k}}$, 
then  $\pCheck{R}\in\EXP$.
\end{lemma}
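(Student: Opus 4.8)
The plan is to decide $\pCheck{R}$ on an input $(x,y)$ by directly simulating, on a deterministic Turing machine running in exponential time, the $\delayP_p^{\SigmaP{k}}$-algorithm $M$ that witnesses $\pEnum{R}\in\delayP_p^{\SigmaP{k}}$, and to accept iff $y$ occurs among the values that $M$ outputs. Fix a polynomial $p$ that simultaneously bounds the delay of $M$ and the size of every oracle query of $M$ (both in terms of $|x|$), and a polynomial $q$ with $|y'|\le q(|x|)$ for every $y'\in R(x)$; such a $q$ exists because $R$ is polynomially bounded. Since $M$ on input $x$ enumerates exactly $R(x)$, this simulation is correct, and the whole task reduces to bounding its running time by $2^{\poly(|x|)}$.

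First I would bound the number of RAM steps of $M$ on input $x$. As all solutions have length at most $q(|x|)$, we have $|R(x)|\le 2^{q(|x|)+1}$, and since the time up to the first output, between two consecutive outputs, and from the last output to termination are each at most $p(|x|)$, the machine $M$ halts after at most $(|R(x)|+1)\cdot p(|x|)\le 2^{\poly(|x|)}$ steps. Next, each non-oracle instruction of the polynomially bounded RAM manipulates register contents of size $\poly(|x|)$ (and hence register addresses drawn from a range of size $2^{\poly(|x|)}$), and over the whole run at most $2^{\poly(|x|)}$ distinct registers are touched; maintaining the non-empty registers in a balanced search tree keyed by address, a Turing machine simulates each such instruction in time $\poly(|x|)$, so all non-oracle instructions together cost $2^{\poly(|x|)}$.

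The hard part — and the reason the statement is about $\delayP_p$ rather than $\delayP$ — is the cost of the oracle calls. Because $M$ is a $\delayP_p$ machine, every query submitted to the $\SigmaP{k}$-oracle $L$ has size at most $p(|x|)$. Since $L\in\SigmaP{k}\subseteq\PSPACE$, it is decidable in deterministic time $2^{\poly(m)}$ on inputs of size $m$; with $m\le p(|x|)$ this is $2^{\poly(|x|)}$, i.e.\ exponential in $|x|$ and \emph{not} in the (exponential) running time of $M$. I would emphasize that this step genuinely breaks down without the polynomial bound on oracle inputs: a query of exponential size in $|x|$ would force a doubly exponential simulation, which is why $\delayP^{\SigmaP{k}}$ need not enjoy this property (cf.\ Proposition~\ref{prop:delayppsubdelayp}). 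Combining the estimates, the simulation of $M$ on $x$ runs in time at most $(\text{number of steps})\cdot(\text{max cost per step})\le 2^{\poly(|x|)}\cdot 2^{\poly(|x|)}=2^{\poly(|x|)}$. During this simulation we intercept every $\mathsf{output}$ instruction, compare the emitted value with $y$ (in time $\poly(|x|)$), accept if it matches, and reject if $M$ halts without ever having emitted $y$. This is a deterministic exponential-time decision procedure for $\pCheck{R}$, so $\pCheck{R}\in\EXP$.
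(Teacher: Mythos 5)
Your proof is correct and follows essentially the same route as the paper's: run the $\delayP_p^{\SigmaP{k}}$ enumeration algorithm on $x$, observe that the polynomially bounded oracle inputs let each $\SigmaP{k}$-query be decided deterministically in time $2^{\poly(|x|)}$, and conclude that the whole enumeration of the at most exponentially many solutions takes exponential time, after which membership of $y$ in $R(x)$ can be read off. The extra detail you supply on the RAM-to-Turing-machine simulation and on why the argument fails without the oracle-input bound is consistent with, and slightly more explicit than, the paper's argument.
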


\begin{proof}
Let $(x,y)$ be an instance of $\pCheck{R}$ and let $k\geq0$. 
Further let $\calA$ be an enumeration algorithm
witnessing the membership $\pEnum{R}\in \delayP_p^{\SigmaP{k}}$. In order to decide whether 
$(x,y)\in R$, we simply
enumerate all of 
$R(x)$ and check whether $y\in R(x)$.
Let $q,r$ be a polynomials such
that the decision of the $\SigmaP{k}$-oracle can be computed in $\calO(2^{q(n)})$,
and the size of the input to any such decision oracle while executing $\calA$ is bounded by $r$.
By the definition
of polynomial delay and the fact that $R$ is a polynomial relation, there is some polynomial $h$
such that 
$R(x)$ can be enumerated in time $\calO(2^{q(r(n))+h(n)})$, i.\,e., in exponential time.
\end{proof}

\begin{proof}[of Proposition~\ref{prop:delayppsubdelayp}]

 Assume that $\EXP\subsetneq\Delta_{k+1}^\EXP$. Then there exists some polynomial $q$ and a 
 language $L$ such that $L\in \Delta_{k+1}^\EXP\setminus\EXP$ and $L$ can be decided in 
 time $\calO(2^{q(n)})$ using a $\SigmaP{k}$-oracle. Define the following enumeration problem:
 
 \medskip 
 
\framebox[10cm][l]{\begin{tabular}{ll}
 \multicolumn{2}{l}{$\pEnum{D_0}$} \\
  {Instance}: & $x\in\Sigma^*$.\\
  {Output}: & All $\{0,1\}$-words of length $q(|x|)$, and $2$ if $x\in L$\\
\end{tabular}}

\medskip
 
 First note that $\pEnum{D_0}\in  \delayP^{\SigmaP{k}}$ by an algorithm $\calA$ that enumerates
 all $2^{q(|x|)}$ words in $\{0,1\}^{q(|x|)}$ in $\calO(2^{q(|x|)})$. While enumerating the trivial
 part of the output, $\calA$ also has enough time to compute whether $x\in L$, and then
 makes the last output ('2' or nothing) accordingly. 
Next assume that $\pEnum{D_0}\in  \delayP_p^{\SigmaP{k}}$ (or $\delayP_p^{\SigmaP{k+1}}$). Then,
by Lemma~\ref{lemma:inccheckexp}, $\pCheck{D_0}\in\EXP$. Therefore we can check for all $x\in\Sigma^*$ whether
$(x,2)\in D_0$, which is equivalent to $x\in L$. Thus we can decide $L$ in exponential time,
a contradiction. This proves also the second claim.
\end{proof}

 By generalizing the statement of Lemma~\ref{lemma:inccheckexp}, we even have that
$\delayP^{\calC}_p$ and $\delayP^{\SigmaP{k}}$ are incomparable for any $k\geq 1$ and any $\calC$ 
within $\EXP$ under the assumption that the weak $\EXP$ hierarchy does not collapse.
This is due to the fact that $\pCheck{R}$ for an enumeration problem
$\pEnum{R}\in\delayP^{\calC}_p$ is within $\EXP$, whereas $\delayP^{\SigmaP{k}}$ contains an
enumeration problem with corresponding check problem in $\Delta_{k+1}^{\EXP}\setminus\EXP$.\\

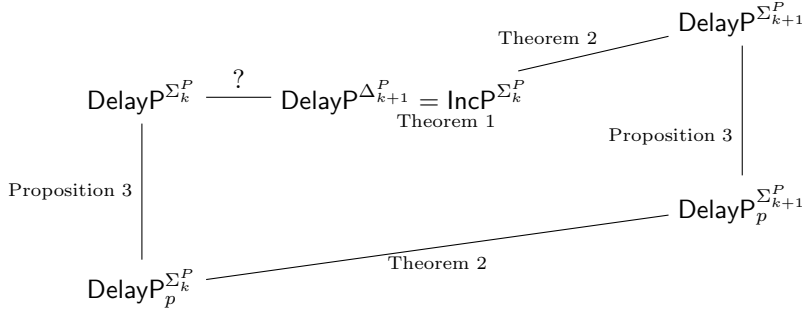
\begin{figure}[t]
\begin{center}
\begin{tikzpicture}
  \matrix (m) [matrix of nodes, row sep=1em,
    column sep=2.5em]{
    &  & & \node (1-4) {$\delayP^{\SigmaP{k+1}}$};\\
    \node (2-1) {$\delayP^{\SigmaP{k}}$}; &  \node (2-3) {$\delayP^{\DeltaP{k+1}}=\incP^{\SigmaP{k}}$}; &\\
    &  &  &\\
    & 
    & & \node (3-4) {$\delayP_p^{\SigmaP{k+1}}$};\\
    \node (4-1) {$\delayP_p^{\SigmaP{k}}$}; & & \\};
    
    
  \path            
  	(2-3) edge node[above] {?} (2-1) 
   				  edge node[left, yshift = 0.6em, xshift = 0.5em] {\scriptsize Theorem~\ref{theorem:delaypccollapse}} (1-4)
      

    (4-1)	 edge node[left] {\scriptsize Proposition~\ref{prop:delayppsubdelayp}} (2-1)
    		 edge node[below] {\scriptsize Theorem~\ref{theorem:delaypccollapse}} (3-4)

	(1-4) edge node[left,yshift=-1em] (h1) {\scriptsize Proposition~\ref{prop:delayppsubdelayp}} (3-4);
    
    
    \node[xshift=-8.5em,yshift=.6em] at (h1) {\scriptsize Theorem~\ref{theorem:deltacollapse}};

    \begin{pgfonlayer}{background}

\end{pgfonlayer}
    
\end{tikzpicture}
\end{center}
  \caption{Hierarchy among enumeration complexity classes for all $k\geq 1$.
  The lines without `?' represent strict inclusions under some reasonable
  complexity theoretic assumption.
  For the line with `?', inclusion holds, but it is not clear whether it is strict.
}
  \label{fig:hierarchy}
\end{figure}

The relations among the enumeration complexity classes introduced in this chapter are 
summarized in Figure~\ref{fig:hierarchy}. 

\subsection{Connections to Decision Complexity}

%
For a lot of problems, the set of all solutions can be enumerated by repeatedly solving
differing instances of a corresponding decision problem and using the information gained
from to construct the solutions. 

For the class $\delayP_p^{\calC}$, it turns out that the decision problem $\pExtSol{R}$
is most relevant for achieving the aforementioned goal.
Indeed, the standard enumeration algorithm that constructs solutions bit by bit and tests
whether the current partial candidate solution can still be extended to a solution 
\cite{Strozecki2010,DBLP:journals/fuin/CreignouV15}, and which outputs the solutions in 
lexicographical order, gives the following relationship.


%
\begin{proposition}\label{prop:new}
Let $R$ be a binary relation. Further let
$k\geq 0$ and $\calC\in\{\DeltaP{k},\SigmaP{k}\}$.
If $\pExtSol{R}\in\calC$ then $\pEnum{R}\in\delayP_p^{\calC}$. 
\end{proposition}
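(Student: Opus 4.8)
The plan is to describe the standard lexicographic bit-by-bit extension algorithm and verify that it runs with polynomial delay using only polynomially bounded oracle calls, given an oracle for $\pExtSol{R}$ in $\calC$. First I would fix the polynomial $p$ witnessing that $R$ is polynomially bounded, so that every $y \in R(x)$ satisfies $|y| \le p(|x|)$. The algorithm maintains a current partial word $w \in \Sigma^*$ and tries to extend it one symbol at a time: at each step, for the current $w$ it asks the oracle whether there is some $y'$ with $(x, w y') \in R$, i.e., whether $(x,w)$ is a yes-instance of $\pExtSol{R}$; if yes, it appends the lexicographically least symbol $a \in \Sigma$ for which $(x, wa)$ is still extendable (found by iterating over the constantly many symbols of $\Sigma$ and querying the oracle for each), and if $w$ itself is already a full solution (checked, again, since $\pCheck{R}$ reduces to a constant number of $\pExtSol{R}$ queries or can be folded in), it outputs $w$. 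Backtracking is handled in the usual way: once the subtree rooted at $w$ is exhausted, strip the last symbol and try the next symbol there.

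Next I would bound the delay. Between two consecutive outputs, the algorithm walks a path of length at most $p(|x|)$ down and back up the $|\Sigma|$-ary prefix tree, and at each node it issues at most $|\Sigma| = O(1)$ oracle calls; since each oracle call is answered in one step by the RAM's oracle mechanism, and the bookkeeping at each node is polynomial, the total work between outputs is $O(p(|x|))$ up to polynomial factors — hence polynomial delay in $|x|$. Crucially, every oracle query is of the form $(x, w)$ with $|w| \le p(|x|)$, so the oracle input size is polynomially bounded in $|x|$, which is exactly what the $\delayP_p^{\calC}$ definition demands. Since $\pExtSol{R} \in \calC$ by hypothesis, the oracle $L$ can be taken in $\calC$, giving $\pEnum{R} \in \delayP_p^{\calC}$.

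The one point needing a little care — and the main (mild) obstacle — is the case distinction on $\calC$: for $\calC = \SigmaP{k}$ the set $\{(x,w) \mid \exists y'\, (x,wy') \in R\}$ need not literally equal $\pExtSol{R}$ unless $\pExtSol{R}$ is defined as that language, but here it is, so membership in $\calC$ is immediate; for $\calC = \DeltaP{k}$ the same reasoning applies verbatim since we only ever ask membership queries to a fixed language in $\calC$. I would also note that deciding whether the current $w$ is itself a solution is subsumed: $\pCheck{R}$ is at most as hard as $\pExtSol{R}$ (take $y' = \varepsilon$, or observe $(x,w) \in R$ iff $(x,w)$ is extendable by $\varepsilon$ and not by anything else — more simply, one extra oracle-style check suffices, or one folds the test into the relation), so it does not increase the oracle power needed. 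With these observations the verification is routine, and the argument is essentially the one cited from \cite{Strozecki2010,DBLP:journals/fuin/CreignouV15}.
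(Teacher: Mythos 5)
Your proposal is correct and follows essentially the same route as the paper: the standard prefix-extension algorithm with backtracking, querying $\pExtSol{R}$ on inputs $(x,w)$ with $|w|$ bounded by the polynomial witnessing that $R$ is polynomially bounded, which gives both the polynomial delay and the polynomial bound on oracle-input size required for $\delayP_p^{\calC}$. The only wobble is your parenthetical claim that $\pCheck{R}$ reduces to $\pExtSol{R}$ by taking $y'=\varepsilon$ (extendability by $\varepsilon$ only says $w$ is a prefix of a solution, not that it is one), but the paper's own proof glosses over the same point and your fallback of folding the solution test into the bookkeeping is the standard fix.
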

\begin{proof}
Let $k\geq 1$. Further let $q$ be a polynomial such that for all $(x,y)\in R$, $|y|\leq q(|x|)$ holds. 
As $\delayP_p^{\DeltaP{k+1}}=\delayP_p^{\SigmaP{k}}$, we show that $\pEnum{R}\in\delayP_p^{\calC}$
by giving an enumeration algorithm $\calA$ with access to $\pExtSol{R}$ as an oracle. Assume that we have
given $x$ as an input for $\pEnum{R}$. For every $y_1\in\Sigma$, we make an oracle call with 
$(x,y_1)$ as input. If the oracle returns 'yes', $y_1$ can be extended to some $y\in R(x)$, so for every
$y_2\in\Sigma$, we query $(x,y_1y_2)$ to the oracle. Repeating this $q(|x|)$ times gives
a first answer $y\in R(x)$. With the use of backtracking, we can enumerate all of $R(x)$ with a polynomial delay and access to the oracle $\pExtSol{R}$, where every input to an oracle call is polynomially bounded by $q$.
\end{proof}

When considering decision problems, typically not the problem $\pExtSol{R}$
is studied, but instead the problem $\pExist{R}$ that just asks, given some instance,
whether there exists any solution at all. While in general unrelated, for a big class
of problems, the so-called self-reducible problems, the problem $\pExtSol{R}$ can be
solved via $\pExist{R}$.


%
\begin{definition}[self-reducibility]
\label{def:self-reducible}
Let $\leq_T$ denote Turing reductions.
We say that a binary relation $R$ is {\em self-reducible\/}, if
$\pExtSol{R}\leq_T\pExist{R}$.
\end{definition}

Self-reducibility thus allows to reduce enumeration to the typical decision
problem associated with $R$. More formally, for self-reducible problems,
Proposition~\ref{prop:new} can be refined as follows.

\begin{proposition}\label{prop:selfreducible}
Let $R$ be a binary relation, which is self-reducible, and
$k\geq 0$. Then $\pExist{R}\in\DeltaP{k}$
if and only if $\pEnum{R}\in\delayP_p^{\SigmaP{k}}$.
\end{proposition}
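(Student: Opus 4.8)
\textbf{Proof plan for Proposition~\ref{prop:selfreducible}.}
The plan is to establish both directions, with the forward direction being essentially immediate from earlier results and the backward direction requiring a careful argument about oracle input sizes.

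For the direction ``$\pExist{R}\in\DeltaP{k}$ implies $\pEnum{R}\in\delayP_p^{\SigmaP{k}}$'', I would first invoke self-reducibility: by Definition~\ref{def:self-reducible} we have $\pExtSol{R}\leq_T\pExist{R}$, so a Turing machine with a $\pExist{R}$-oracle decides $\pExtSol{R}$ in polynomial time; since $\pExist{R}\in\DeltaP{k}=\ptime^{\SigmaP{k-1}}$, composing the reduction gives $\pExtSol{R}\in\DeltaP{k}$ (using that $\DeltaP{k}$ is closed under polynomial-time Turing reductions). Then Proposition~\ref{prop:new}, applied with $\calC=\DeltaP{k}$, yields $\pEnum{R}\in\delayP_p^{\DeltaP{k}}$, and by the remark following Theorem~\ref{theorem:deltacollapse} that $\delayP_p^{\DeltaP{k}}=\delayP_p^{\SigmaP{k-1}}$ — or, reindexing, $\delayP_p^{\DeltaP{k+1}}=\delayP_p^{\SigmaP{k}}$ — I would match this up with the claimed bound. (One has to be mildly careful about whether ``$\SigmaP{k}$'' in the proposition statement is the one paired with $\DeltaP{k}$ or with $\DeltaP{k+1}$; the cleanest route is to note $\pExist{R}\in\DeltaP{k}\subseteq\SigmaP{k}$ directly and apply Proposition~\ref{prop:new} with $\calC=\SigmaP{k}$, giving $\pEnum{R}\in\delayP_p^{\SigmaP{k}}$ at once, so self-reducibility is not even needed here.)

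For the converse, ``$\pEnum{R}\in\delayP_p^{\SigmaP{k}}$ implies $\pExist{R}\in\DeltaP{k}$'', I would argue as follows. Given an instance $x$, run the $\delayP_p^{\SigmaP{k}}$ enumeration algorithm $\calA$ just long enough to produce its first output or to halt. By polynomial delay, this takes time bounded by a polynomial $p(|x|)$, during which at most $p(|x|)$ oracle calls are made, and by the $p$-subscript restriction each such call has input size polynomially bounded in $|x|$. Hence the whole simulation runs in polynomial time relative to a $\SigmaP{k}$-oracle, i.e.\ in $\ptime^{\SigmaP{k}}$. We output ``yes'' iff $\calA$ produced at least one solution, which is correct since $R(x)\neq\emptyset$ exactly when $\calA$ outputs something. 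Therefore $\pExist{R}\in\ptime^{\SigmaP{k}}=\DeltaP{k+1}$. To get the sharper bound $\DeltaP{k}$ claimed in the statement, one uses that $\pExist{R}$ itself only needs a $\SigmaP{k-1}$-oracle: more carefully, rerun the argument starting from $\delayP_p^{\SigmaP{k}}$ but recall that the $p$-restricted class coincides with $\delayP_p^{\DeltaP{k}}$, so one may take the enumeration algorithm to use a $\DeltaP{k}=\ptime^{\SigmaP{k-1}}$-oracle; then the first-output simulation runs in $\ptime^{\DeltaP{k}}=\DeltaP{k}$, giving $\pExist{R}\in\DeltaP{k}$.

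The main obstacle, and the one point deserving genuine care, is the index bookkeeping between $\SigmaP{k}$ and $\DeltaP{k}$ and the legitimate use of the identity $\delayP_p^{\DeltaP{k}}=\delayP_p^{\SigmaP{k-1}}$ (equivalently $\delayP_p^{\DeltaP{k+1}}=\delayP_p^{\SigmaP{k}}$) from the discussion after Theorem~\ref{theorem:deltacollapse}; without it the converse direction only delivers $\DeltaP{k+1}$ rather than $\DeltaP{k}$. Self-reducibility is what makes the statement an ``if and only if'' with $\pExist{R}$ rather than $\pExtSol{R}$ on the decision side — it is essential for transporting a hypothesis about $\pExist{R}$ into the enumeration algorithm of Proposition~\ref{prop:new}, but is not needed for the converse, where we extract a decision procedure directly from the enumeration algorithm. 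I would state both directions explicitly and keep the polynomial-bound accounting visible, since that is exactly where the $\delayP_p$ (as opposed to $\delayP$) hypothesis is used.
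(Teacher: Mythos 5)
Your overall architecture --- self-reducibility plus Proposition~\ref{prop:new} in one direction, a truncated simulation of the enumeration algorithm up to its first output in the other --- is exactly the intended argument (the paper states this proposition without an explicit proof, but this is the route suggested by the surrounding text and it mirrors the proofs of Proposition~\ref{prop:new} and Theorem~\ref{theorem:deltacollapse}). However, two of your steps fail. First, your ``cleanest route'' for the forward direction is invalid: the hypothesis of Proposition~\ref{prop:new} is $\pExtSol{R}\in\calC$, not $\pExist{R}\in\calC$, and these are in general very different problems (for $\circumscription^e$ the paper notes that the extension problem is $\SigmaP{2}$-complete while the existence problem is $\NP$-complete). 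So you cannot feed $\pExist{R}\in\SigmaP{k}$ into Proposition~\ref{prop:new}; self-reducibility is precisely what lets you pass from $\pExist{R}$ to $\pExtSol{R}$ and is indispensable here. Your first route (closure of $\DeltaP{k}$ under polynomial-time Turing reductions, then Proposition~\ref{prop:new} with $\calC=\DeltaP{k}$) is the correct one.

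Second, in the converse direction your simulation correctly yields $\pExist{R}\in\ptime^{\SigmaP{k}}=\DeltaP{k+1}$, but the attempted sharpening to $\DeltaP{k}$ rests on the identity $\delayP_p^{\SigmaP{k}}=\delayP_p^{\DeltaP{k}}$, which is the wrong way around: the paper's identity is $\delayP_p^{\DeltaP{k}}=\delayP_p^{\SigmaP{k-1}}$, i.e.\ $\delayP_p^{\SigmaP{k}}=\delayP_p^{\DeltaP{k+1}}$, and your version would force $\delayP_p^{\SigmaP{k}}=\delayP_p^{\SigmaP{k-1}}$, contradicting Theorem~\ref{theorem:delaypccollapse} unless the polynomial hierarchy collapses. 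This gap is not repairable: with the paper's convention $\DeltaP{k+1}=\ptime^{\SigmaP{k}}$ (so $\DeltaP{1}=\ptime$), the backward implication as printed already fails at $k=1$ for the satisfiability relation, which is self-reducible and has $\SATe\in\delayP_p^{\NP}$ by Proposition~\ref{prop:new}, yet has an $\NP$-complete existence problem. The statement should be read with $\DeltaP{k+1}$ in place of $\DeltaP{k}$, matching the pairing of $\DeltaP{k+1}$ with $\SigmaP{k}$ in Theorem~\ref{theorem:deltacollapse}; under that reading your simulation argument for the converse is correct exactly as it stands, and the forward direction goes through via your first route with $\calC=\DeltaP{k+1}$ and $\delayP_p^{\DeltaP{k+1}}=\delayP_p^{\SigmaP{k}}$. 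You correctly sensed that the index bookkeeping is the crux, but the resolution is to adjust the statement, not to invoke a collapse of the $p$-bounded classes.
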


The above proposition gives a characterization of the class $\delayP_p^{\SigmaP{k}}$
in terms of the complexity of decision problems in the case of self-reducible relations.
Analogously, the notion of ``enumeration self-reducibility'' introduced by Kimelfeld and 
Kolaitis, that directly relates the search problem $\pAnotherSol{R}$ to its corresponding
decision problem \cite{DBLP:journals/tods/KimelfeldK14}, allows a characterization of the
class $\incP^{\SigmaP{k}}$.

\begin{definition}[\cite{DBLP:journals/tods/KimelfeldK14}, enumeration self-reducibility]
\label{def:enumeration-self-reducible}
A binary relation $R$ is {\em enumeration self-reducible\/} if
$\pAnotherSol{R}\leq_T\pExistAnotherSol{R}$.
\end{definition}

The following proposition generalizes two results to the higher levels of the
enumeration hierarchy,
by following analogous arguments as for the basic level $\incP$.
The first result is given by Strozecki in \cite[Prop. 2.14]{Strozecki2010},
and the second result is given by Kimelfeld and 
Kolaitis \cite[Prop. 2.2]{DBLP:journals/tods/KimelfeldK14}.

\begin{proposition}\label{prop:characterizationIncP}
Let $R$ be a binary relation and let $k\geq 0$.
\begin{itemize}
\item The functional problem $\pAnotherSol{R}$ can be solved in polynomial time with access to
a $\DeltaP{k}$-oracle if and only if $\pEnum{R}\in\incP^{\SigmaP{k}}$.
\item Let $R$ be enumeration self-reducible. Then
$\pExistAnotherSol{R}\in\DeltaP{k}$
if and only if $\pEnum{R}\in\incP^{\SigmaP{k}}$.

\end{itemize}
\end{proposition}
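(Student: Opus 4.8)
The plan is to prove both equivalences by adapting the standard enumeration-via-decision arguments (Strozecki, Kimelfeld--Kolaitis) to the oracle setting, keeping careful track of which bound --- input size versus allotted time --- the incremental regime controls, and exploiting the observation already made after Theorem~\ref{theorem:deltacollapse} that for $\incP$-classes a $\DeltaP{k}$-oracle and a $\SigmaP{k-1}$-oracle are interchangeable even without a size restriction on oracle inputs, since incremental delay affords enough time to replay $\poly$-many $\SigmaP{k-1}$-queries in place of one $\DeltaP{k}$-query. For the first item I would argue both directions. For the ``if'' direction, suppose $\pEnum{R}\in\incP^{\SigmaP{k}}$ via a RAM $\calA$; then $\pAnotherSol{R}$ on input $(x,Y)$ with $|Y|=n$ is solved by running $\calA$ on $x$, discarding outputs that lie in $Y$, and returning the first output not in $Y$ (or declaring none exists when $\calA$ halts); since after producing $n+1$ outputs $\calA$ has run for time $\poly(|x|,n)$ and its $\SigmaP{k}$-queries can be answered within a $\DeltaP{k}\supseteq\SigmaP{k}$-oracle, this is a polynomial-time computation relative to a $\DeltaP{k}$-oracle (note the subtlety that $Y$ is part of the input to $\pAnotherSol{R}$, so ``polynomial'' legitimately means polynomial in $|x|+|Y|$). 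For the converse, suppose $\pAnotherSol{R}$ is solvable in polynomial time with a $\DeltaP{k}$-oracle; the enumeration algorithm maintains the list $Y$ of solutions output so far and repeatedly invokes this procedure with input $(x,Y)$, appending each returned solution to $Y$ until ``none exists'' is reported. After $Y$ has $n$ elements, the next call runs in time $\poly(|x|+|Y|)=\poly(|x|,n)$, which is exactly an incremental delay; replacing the $\DeltaP{k}$-oracle by a $\SigmaP{k-1}$-oracle costs only a further polynomial factor, so $\pEnum{R}\in\incP^{\SigmaP{k-1}}$, and shifting the index gives the claim (for $k=0$ the statement is the classical $\incP$ characterization and needs no oracle).

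For the second item I would use the first item as a black box: it suffices to show that, under enumeration self-reducibility, $\pAnotherSol{R}$ is polynomial-time solvable with a $\DeltaP{k}$-oracle if and only if $\pExistAnotherSol{R}\in\DeltaP{k}$. The ``only if'' direction is immediate and does not even need self-reducibility: given a procedure for $\pAnotherSol{R}$, decide $\pExistAnotherSol{R}$ on $(x,Y)$ by running it and answering ``yes'' iff it returns a solution; this is a polynomial-time computation relative to a $\DeltaP{k}$-oracle, and $\ptime^{\DeltaP{k}}=\DeltaP{k}$ absorbs it. For the ``if'' direction, assume $\pExistAnotherSol{R}\in\DeltaP{k}$; enumeration self-reducibility gives a Turing reduction $\pAnotherSol{R}\leq_T\pExistAnotherSol{R}$, i.e.\ a polynomial-time procedure that solves $\pAnotherSol{R}$ using $\pExistAnotherSol{R}$ as an oracle; composing this with the $\DeltaP{k}$ decision procedure for $\pExistAnotherSol{R}$ and using $\ptime^{\DeltaP{k}}=\DeltaP{k}$ yields a polynomial-time algorithm for $\pAnotherSol{R}$ with a $\DeltaP{k}$-oracle. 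Feeding this into the first item closes the chain.

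The step I expect to require the most care is the bookkeeping in the first item's converse direction, specifically verifying that the input $(x,Y)$ to successive $\pAnotherSol{R}$-calls, the time spent inside each call, and the size of the oracle queries issued during each call all stay bounded by a single fixed polynomial in $|x|$ and the number of solutions output so far --- this is exactly the discrepancy between $\delayP$-type and $\incP$-type classes discussed after Definition~\ref{def:enumeration-classes-with-oracles}, and it is the reason the characterization goes through for $\incP$ with no polynomial-size restriction on oracle inputs, whereas the analogous $\delayP$ statement in Proposition~\ref{prop:new} needs the subscript-$p$ variant. The rest is routine composition of polynomial-time reductions together with the identity $\ptime^{\DeltaP{k}}=\DeltaP{k}$ and the $\DeltaP{k}$-versus-$\SigmaP{k-1}$ interchange for incremental classes.
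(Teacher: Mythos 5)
Your overall strategy is exactly the one the paper has in mind: the paper gives no explicit proof of Proposition~\ref{prop:characterizationIncP} and merely points to the $k=0$ arguments of Strozecki and of Kimelfeld--Kolaitis, and your write-up is the natural relativization of those arguments (iterate $\pAnotherSol{R}$ while feeding back the set of solutions produced so far to obtain the enumeration algorithm; run the incremental enumerator for $|Y|+1$ outputs to solve $\pAnotherSol{R}$; use enumeration self-reducibility plus the absorption $\ptime^{\DeltaP{k}}=\DeltaP{k}$ for the second item). The structure, the bookkeeping of the incremental delay against $|x|+|Y|$, and the observation that no polynomial bound on oracle-input size is needed here (in contrast to the $\delayP_p$ setting of Proposition~\ref{prop:new}) are all correct and well placed.

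There is, however, one concretely false step: the inclusion ``$\DeltaP{k}\supseteq\SigmaP{k}$'' that you invoke in order to answer the enumerator's $\SigmaP{k}$-queries inside a $\DeltaP{k}$-relativized computation. Under the convention used everywhere else in the paper (e.g.\ Theorem~\ref{theorem:deltacollapse} pairs $\DeltaP{k+1}=\ptime^{\SigmaP{k}}$ with $\SigmaP{k}$), one has $\DeltaP{k}=\ptime^{\SigmaP{k-1}}\subseteq\SigmaP{k}$, and the reverse inclusion you assert would collapse the polynomial hierarchy. As a consequence your two directions do not meet at the same level: from $\pEnum{R}\in\incP^{\SigmaP{k}}$ you legitimately obtain that $\pAnotherSol{R}$ is solvable in polynomial time with a $\SigmaP{k}$-oracle, i.e.\ with a $\DeltaP{k+1}$-oracle, while from ``$\pAnotherSol{R}$ solvable in polynomial time with a $\DeltaP{k}$-oracle'' you obtain $\pEnum{R}\in\incP^{\DeltaP{k}}=\incP^{\SigmaP{k-1}}$ --- and ``shifting the index'' is not an available move; only the inclusion $\incP^{\SigmaP{k-1}}\subseteq\incP^{\SigmaP{k}}$ is, which suffices for that direction but exposes the mismatch. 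The biconditional closes only if the oracle in the statement is read as $\ptime^{\SigmaP{k}}=\DeltaP{k+1}$ (equivalently, a $\SigmaP{k}$-oracle), which is also the reading that makes the proposition the correct generalization of the $k=0$ base cases. The same shift is needed in the second item, where the forward direction as you argue it places $\pExistAnotherSol{R}$ in $\DeltaP{k+1}$ rather than $\DeltaP{k}$. If you adopt that reading and say so explicitly, the remainder of your argument --- including the reduction of $\pAnotherSol{R}$ to $\pExistAnotherSol{R}$ via enumeration self-reducibility --- goes through; as written, the quoted inclusion is simply wrong and the proof does not close.
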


\section{Declarative-style Reductions}
\label{sect:first-reduction}
As far as we know, only a few kinds of reductions between enumeration problems have been investigated so far. One such reduction is implicitly described in \cite{DurandG07}. 
It establishes a bijection between sets of solutions.
Different approaches introduced in \cite{BraultBaron13} and in \cite{Mary13} relax this condition and allow for
non-bijective reduction relations. 
We go further in that direction in proposing a declarative style reduction
relaxing the isomorphism requirement while 
closing the relevant enumeration classes.


\begin{definition}[e-reduction]
\label{definition:firstreduction}
Let $R_1,R_2\subseteq \Sigma^*$ be binary relations. Then 
$\pEnum{R_1}$ reduces to $\pEnum{R_2}$ via an $e$-reduction,
$\pEnum{R_1} \leq_e \pEnum{R_2}$, 
if there exist
a function $\sigma:\Sigma^*\rightarrow\Sigma^*$
computable in polynomial time
and a relation $\tau\subseteq\Sigma^*\times\Sigma^*\times\Sigma^*$, s.t.\ 
for all $x\in\Sigma^*$ the following holds.
For $y\in\Sigma^*$, let $\tau(x,y,-):=\{z\in \Sigma^*\mid (x,y,z)\in\tau\}$
and for $z\in\Sigma^*$, let $\tau(x,-,z):=\{y\in \Sigma^*\mid (x,y,z)\in\tau\}$.
Then: 
\begin{enumerate}
  \item ${R_1}(x)=\bigcup_{y\in {R_2}(\sigma(x))}\tau(x,y,-)$;
   \item 
   For all $ y\in {R_2}(\sigma(x))$,  either $\tau(x,y,-)=\emptyset$, or 
   $\emptyset \subsetneq \tau(x,y,-) \subseteq {R_1}(x)$ and 
   $\tau(x,y,-)$ can be enumerated with polynomial delay in $|x|$; 
moreover $\tau(x,y,-)=\emptyset$ can only hold for a number of $y$'s which is polynomially bounded in $|x|$;
  \item  For all $z\in  {R_1}(x)$, we have  
  $\tau(x,-,z) \subseteq {R_2}(\sigma(x))$ and 
    the size of $\tau(x,-,z)$ is polynomially bounded in $|x|$.
\end{enumerate}
\end{definition}
\begin{figure}[t]
\centerline{
\begin{tikzpicture}
	\node (h1) at (-2,2.2) {${R_1}(x)$};
	\node[circle,draw,minimum width=.1cm] (d1) at (-2,1) {};
	\node[circle,draw,minimum width=.1cm,dashed] (d2) at (-2,1.4) {};
	\node[rotate=90] (dots1) at (-2,.5) {\dots};
	\node[circle,draw,minimum width=.1cm] (di) at (-2,0) {};
	\node (h2) at (1,2.2) {${R_2}(\sigma(x))$};
	\node[circle,draw,minimum width=.1cm, dotted] (c1) at (1,1.4) {};
	\node[circle,draw,minimum width=.1cm] (c2) at (1,1) {};
	\node[rotate=90] (dots2) at (1,.5) {\dots};
	\node[circle,draw,minimum width=.1cm] (ci) at (1,0) {};
	\node[ellipse,draw,rotate=90,minimum width=2.3cm, minimum height=.8cm] (e1) at (-2,.7) {};
	\node[ellipse,draw,rotate=90,minimum width=2.3cm, minimum height=.8cm] (e2) at (1,.7) {};
	\draw[-stealth', dotted, bend right] (c1) to (d1);
	\draw[-stealth', dotted, bend right] (c1) to (di);
	\node[anchor=west] (t1) at (2.2,1.4) {\begin{minipage}{7cm}
    One solution of $R_2(\sigma(x))$ may map to an unbounded number of solutions of $R_1(x)$.\end{minipage}};
  \node[circle,draw,minimum width=.1cm, anchor=south west,xshift=.1cm,yshift=.1cm] (il1) at (t1.north west) {};
  \node[circle,draw,minimum width=.1cm, xshift=2.5cm, dotted] (il2) at (il1) {};
  \draw[-stealth', dotted] (il2) to (il1);  
	\draw[-stealth', dashed, bend right] (c2) to (d2);
	\draw[-stealth', dashed, bend right] (ci) to (d2);
	\node[anchor=west] (t2) at (2.2,-.1) {\begin{minipage}{7cm}At most polynomially
	many solutions of $R_2(\sigma(x))$ may map to one solution of $R_1(x)$.\end{minipage}};
  \node[circle,draw,minimum width=.1cm, anchor=south west,xshift=.1cm,yshift=.1cm, dashed] (il3) at (t2.north west) {};
  \node[circle,draw,minimum width=.1cm, xshift=2.5cm] (il4) at (il3) {};
  \draw[-stealth', dashed] (il4) to (il3);  
	 \node[] at (d1) {$z$};
	\node[] at (c2) {$y$};
\end{tikzpicture}
}
  \caption{Illustration of relation $\tau$ from 
  Definition~\ref{definition:firstreduction}.}
  \label{fig:reduction-with-tau}
\end{figure}
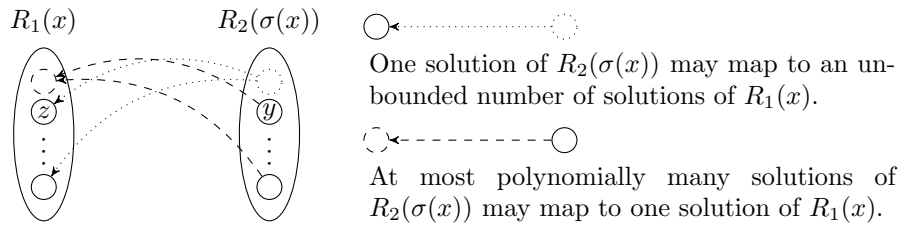

\medskip
Intuitively, $\tau$ 
establishes a relationship between instances 
$x$, solutions $y \in {R_2}(\sigma(x))$ and 
solutions $z \in {R_1}(x)$.
We can thus use $\tau$ to design an enumeration algorithm for ${R_1}(x)$
via an enumeration algorithm for ${R_2}(\sigma(x))$.
The conditions imposed on $\tau$ have the following meaning: 
By condition 1, the solutions $z \in {R_1}(x)$ can be computed by 
iterating through the solutions $y \in {R_2}(\sigma(x))$ and 
computing $\tau(x,y,-) \subseteq {R_1}(x)$.
Conditions 2 and 3 make sure that the delay of enumerating  ${R_1}(x)$ 
only differs by a polynomial from the delay of enumerating ${R_2}(\sigma(x))$: 
Condition 2 ensures that, for every $y$, 
the set $\tau(x,y,-)$ can be enumerated with polynomial delay and
 that we 
do not  encounter more than a polynomial number  of ``useless'' $y$ (i.e., a solution $y \in {R_2}(\sigma(x))$ 
which is associated with 
no solution $z \in {R_1}(x)$).

In principle, we may thus get duplicates $z$  associated with different values of $y$. 
However, Condition 3 ensures that each $z$ can be associated with at most
polynomially many values $y$. Using a priority queue storing all $z$ that are output,
we can avoid duplicates, cf.\ the proof of Proposition~\ref{prop:ereductionsComposeClosed}
or~\cite{Strozecki2010}.
Figure~\ref{fig:reduction-with-tau} illustrates the idea of $\tau$, and a concrete
example of an e-reduction is provided next.
\begin{example}
The idea of the relation $\tau$ can also be nicely demonstrated on an $e$-reduction
from $\threecolE$ to $\fourcolE$ (enumerating all valid 3- respectively
4-colourings of a graph). We intentionally choose this reduction since there is
no bijection between the solutions of the two problems. 

 Recall the classical many-one reduction between these problems, which takes a graph
 $G$ and defines a new graph $G'$ by adding an auxiliary vertex $v$ and connecting it
 to all the other ones. This reduction can be extended to an $e$-reduction with
 the following relation $\tau$:
 With every graph $G$ in the first component of $\tau$, we associate all valid 
 4-colourings (using 0, 1, 2, and 3) of $G'$ in the third component of $\tau$.
 With each of those
 we associate the corresponding
 3-colouring of $G$ in the second component.
 They 
 are obtained from
 the 4-colourings by first making sure that $v$ is coloured with $3$ (by ``switching''
 the colour of $v$ with 3) and then by simply reading off the colouring of the
 remaining vertices.
\end{example}

%
%

The $e$-reductions have two desirable and important  properties, as stated
next.
\begin{proposition}\label{prop:ereductionsComposeClosed}
\begin{enumerate}
	\item Reducibility via $e$-reductions is a transitive relation.
	\item Let $\calC \in \{\SigmaP{k}, \DeltaP{k} \mid k \geq 0\}$. The classes $\delayP_p^\calC$, $\delayP^\calC$, and $\incP^\calC$ are closed under $e$-reductions. 
\end{enumerate}
\end{proposition}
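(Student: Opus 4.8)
The plan is to prove the two parts separately, using standard enumeration-algorithm bookkeeping for the second part. For part (1), transitivity, I would take reductions $\pEnum{R_1}\leq_e\pEnum{R_2}$ witnessed by $(\sigma_1,\tau_1)$ and $\pEnum{R_2}\leq_e\pEnum{R_3}$ witnessed by $(\sigma_2,\tau_2)$, and build the composed witness $(\sigma,\tau)$ with $\sigma:=\sigma_2\circ\sigma_1$ and $\tau(x,w,z)$ defined by "there exists $y\in R_2(\sigma_1(x))$ with $(x,y,z)\in\tau_1$ and $(\sigma_1(x),w,y)\in\tau_2$", i.e. $\tau(x,-,z)$ is the union over $y\in\tau_1(x,-,z)$ of $\tau_2(\sigma_1(x),-,y)$, and dually $\tau(x,w,-)$ is the union over $y\in\tau_2(\sigma_1(x),w,-)$ of $\tau_1(x,y,-)$. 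Then I would check condition 1 (coverage) by composing the two coverage equations; condition 3 by noting $\tau(x,-,z)$ has size bounded by $|\tau_1(x,-,z)|$ times $\max_y|\tau_2(\sigma_1(x),-,y)|$, a product of two polynomials in $|x|$ (using that $|\sigma_1(x)|$ is polynomial in $|x|$); and condition 2 in two stages — the number of "useless" $w$ is bounded because a $w$ is useless only if every $y\in\tau_2(\sigma_1(x),w,-)$ is itself useless for $\tau_1$ or $\tau_2(\sigma_1(x),w,-)=\emptyset$, and both kinds of $y$ are polynomially bounded in number; and the polynomial-delay enumeration of $\tau(x,w,-)$ is obtained by enumerating $\tau_2(\sigma_1(x),w,-)$ with polynomial delay and, for each $y$ produced, enumerating $\tau_1(x,y,-)$ with polynomial delay, discarding duplicates via a priority queue.

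For part (2), closure under $e$-reductions, fix $\calC\in\{\SigmaP{k},\DeltaP{k}\}$, suppose $\pEnum{R_2}\in\delayP^\calC$ (resp. $\delayP_p^\calC$, $\incP^\calC$) and $\pEnum{R_1}\leq_e\pEnum{R_2}$ via $(\sigma,\tau)$. The enumeration algorithm for $R_1(x)$ first computes $\sigma(x)$ in polynomial time, then runs the $\calC$-oracle enumeration algorithm for $R_2(\sigma(x))$; each time a solution $y$ is produced, it enumerates $\tau(x,y,-)$ with polynomial delay, outputting each $z$ that has not been output before. To suppress duplicates I would maintain a balanced search tree / priority queue of already-output solutions $z$; by condition 3 each $z$ is reachable from at most polynomially many $y$, so the total overhead of duplicate detection over the whole run is polynomially bounded per solution, and lookups cost $\log$ of the queue size, which is polynomial in $|x|$ since $R_1$ is a polynomially bounded relation. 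Condition 2's bound on the number of useless $y$ guarantees that between two consecutive genuinely new outputs of $R_1(x)$ we wait only polynomially many steps beyond the $R_2$-delay; and the argument that "skip over a polynomial number of useless $y$, then enumerate $\tau(x,y,-)$ with polynomial delay" yields polynomial delay for $\delayP$, polynomial-in-$|x|$-bounded oracle inputs for $\delayP_p$ (the oracle is only called inside the $R_2$-algorithm, whose inputs are already $|\sigma(x)|$-bounded, hence $|x|$-bounded), and incremental delay for $\incP$ (where the delay may additionally depend on the number of solutions already output — the priority-queue and useless-$y$ bookkeeping both fit within an incremental budget).

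The main obstacle I anticipate is the duplicate-elimination accounting in the polynomial-delay (and especially $\delayP_p$) case: one must argue that the priority-queue operations and the re-derivation of already-seen $z$'s do not blow the delay beyond a polynomial, which requires simultaneously using condition 3 (bounded fan-in), condition 2 (bounded number of useless $y$ and polynomial-delay enumerability of each fibre $\tau(x,y,-)$), and the polynomial boundedness of $R_1$ (so the queue has at most exponentially many entries of polynomial size, hence $\operatorname{poly}(|x|)$-length keys and $\operatorname{poly}(|x|)$-cost operations). A secondary subtlety is that a fibre $\tau(x,y,-)$ might consist entirely of already-output solutions, so naively "enumerate until a new solution appears" could stall; here condition 3 saves us, since the number of $y$'s whose fibre is entirely absorbed is controlled by the number of $z$'s already output times the polynomial fan-in, which keeps the stalling within budget. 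I would cite \cite{Strozecki2010} for the standard priority-queue technique and refer the reader there for the routine details, as the proposition statement itself already does.
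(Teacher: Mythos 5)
Your transitivity argument is the paper's: the composed witness $\sigma_2\circ\sigma_1$ together with $\tau_3(x,w,z)\Leftrightarrow\exists y\,[(x,y,z)\in\tau_1\wedge(\sigma_1(x),w,y)\in\tau_2]$, the product bound for condition~3, and the two-stage count of useless $w$'s (which you actually spell out more carefully than the paper does). The closure argument also uses the same ingredients as the paper. However, there is a genuine gap in your delay analysis for $\delayP^\calC$ and $\delayP_p^\calC$. Your algorithm outputs each new $z$ eagerly, as soon as it is discovered, and uses the search tree only to detect duplicates. The stalling bound you invoke --- ``the number of $y$'s whose fibre is entirely absorbed is at most (number of $z$'s already output) times the fan-in $p(|x|)$'' --- is correct, but it is $j\cdot p(|x|)$ where $j$ is the number of solutions already emitted, and $j$ may be exponential in $|x|$. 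This is an \emph{amortized} bound; it yields incremental delay (so your argument is fine for $\incP^\calC$), but it does not yield a worst-case delay polynomial in $|x|$ alone, which is what $\delayP^\calC$ and $\delayP_p^\calC$ require. Condition~2 only bounds the number of $y$ with \emph{empty} fibre; it says nothing about long runs of $y$'s whose fibres are nonempty but consist entirely of already-seen solutions.

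The missing idea is to use the priority queue in the opposite direction: rather than a set of already-output solutions, it is an \emph{output buffer} into which every derived $z$ is inserted (and never deleted, so duplicates are silently absorbed), and from which the algorithm emits the $i$-th stored element only after the $i$-th batch of $p(|x|)$ insertion attempts has completed. Since by condition~3 each distinct $z$ is inserted at most $p(|x|)$ times, after $i\cdot p(|x|)$ attempts the queue provably contains at least $i$ distinct elements, so the $i$-th output is always available; this converts your amortized bound into a worst-case polynomial delay. The same repair is needed in your transitivity proof when you claim that $\tau_3(x,w,-)$ is enumerable with polynomial delay, since $\tau_2(\sigma_1(x),w,-)$ may be exponentially large and the fibres $\tau_1(x,y,-)$ may overlap. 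With this modification the rest of your accounting (polynomially bounded oracle inputs for $\delayP_p^\calC$, logarithmic-in-queue-size operations, the bound on useless $y$'s) goes through as you describe.
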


\begin{proof}
We first show that  the enumeration classes $\delayP^\calC$, $\delayP_p^\calC$ and $\incP^\calC$
are closed under $e$-reductions for $\calC \in \{\SigmaP{k}, \DeltaP{k} \mid k \geq 0\}$.
Let $R_1,R_2$ be binary relations with $\pEnum{R_1}\leq_e\pEnum{R_2}$. Further let $\sigma$ and $\tau$
be relations corresponding to the reduction $\pEnum{R_1}\leq_e\pEnum{R_2}$, and assume that
 $\pEnum{R_2}\in\delayP^\calC$ (the cases where
$\pEnum{R_2}\in\delayP_p^\calC$ or $\pEnum{R_2}\in\incP^{\calC}$ work along the same lines).
Let $\calA$ denote the enumeration
algorithm for $\pEnum{R_2}$ with a polynomial delay and decision oracle $\calC$, and let
$\calB$ be the polynomial delay algorithm enumerating $\tau(x,y,-)$ for all $x,y\in\Sigma^*$. Moreover,
let $p$ be a polynomial such that for all $z\in{R_1}(x)$, we have $|\tau(x,-,z)|\leq p(|x|)$. The idea for
an enumeration algorithm for $\pEnum{R_1}$ is to enumerate (without output) 
${R_2}(\sigma(x))$ via $\calA$, and for every element $y$ that would be output by $\calA$,
repeatedly add $p(|x|)$ elements of $\tau(x,y,-)$ to a priority queue. Then, whenever those elements
are added to the queue, an element of the queue w.r.t. some order is output. This way, one can ensure
polynomial (respectively incremental) delay albeit producing an exponentially large priority queue.\\
To give a detailed explanation of the enumeration algorithm for $\pEnum{R_1}$, fix some $x\in\Sigma^*$.
Denote by $\mathsf{Newoutput}_\calA(\sigma(x))$ a new output made by the enumeration algorithm $\calA$
when enumerating ${R_2}(\sigma(x))$, and
similarly by $\mathsf{Newoutput}_\calB(y)$ a new output made by the enumeration algorithm $\calB$
when  enumerating $\tau(x,y,-)$. Algorithm~\ref{algorithm:reductiononeclosed} gives the algorithm
 for enumerating ${R_1}(x)$. It
is easy to see that this algorithm indeed works with a polynomial (respectively incremental) delay. When adding
some element $z$ to the priority queue (line~\ref{line:addqueue} of Algorithm~\ref{algorithm:reductiononeclosed}), if
$z$ is already in the queue, then $\mathsf{output\_queue}$ remains unchanged, otherwise $z$ is added as the last
element. Moreover, note that we do not delete elements in the priority queue in order to avoid the duplicates in
the output.

\begin{algorithm}
\caption{Enumerate ${R_1}(x)$}
\label{algorithm:reductiononeclosed}
\begin{algorithmic}[1]
\State $i = 1$
\State $\mathsf{output\_queue}=\emptyset$
\\
\While{ $\calA$ has not output all of ${R_2}(\sigma(x))$}
 
\State $y =\mathsf{Newoutput}_\calA(\sigma(x))$
\State $\mathsf{output\_delay} = 1$

\While{ $\mathsf{output\_delay}< p(|x|)$ }
\State $z=\mathsf{Newoutput}_\calB(y)$
\If{ $z\neq \emptyset$}
\State Add $z$ to $\mathsf{output\_queue}$\label{line:addqueue}
\State $\mathsf{output\_delay} = \mathsf{output\_delay} +1$
\Else
\State $y =\mathsf{Newoutput}_\calA(\sigma(x))$
\EndIf
\If {$y = \emptyset$}
\State Output (without deleting) the elements of $\mathsf{output\_queue}$ starting from the $i$-th element
\State $\mathsf{output\_delay}=p(|x|)$
\EndIf

\EndWhile
\If {$y \neq \emptyset$}
\State Output (without deleting) the $i$-th element of $\mathsf{output\_queue}$
\EndIf
\State $i = i+1$
\EndWhile
\end{algorithmic}
\end{algorithm}

To show that reducibility via $e$-reductions is a transitive relation,
let $R_1,R_2,R_3\subseteq \Sigma^*$ be binary relations.
Further let $\pEnum{R_1}\leq_e\pEnum{R_2}$ with corresponding polynomial $p_1$ and relations
$\tau_1$ and $\sigma_1$
and $\pEnum{R_2}\leq_e\pEnum{R_3}$  with corresponding polynomial $p_2$ and
 relations $\tau_2$ and $\sigma_2$.
Define relations $\sigma_3$ as $\sigma_3:=\sigma_2\circ\sigma_1$ and
$\tau_3$ as
\begin{equation*}
\tau_3:=\{(x,y,z)\in\Sigma^*\times\Sigma^*\times\Sigma^*\mid\exists\zeta\in\Sigma^*\text{ with }(x,\zeta,z)\in\tau_1
\text{ and }(\sigma_1(x),y,\zeta)\in\tau_2\}.
\end{equation*}
Let $x,y\in\Sigma^*$. To show that $\tau_3(x,y,-)$ can be enumerated with a polynomial delay in $|x|$, we use
the same idea for a polynomial delay enumeration as we did to show that the reduction closes the enumeration classes: First a single $y'\in {R_2}(\sigma_1(x))$
is computed by $\tau_2(\sigma_1(x),y,-)$ (with a polynomial delay in $\sigma_1(|x|)$ and thus polynomial
delay in $|x|$), and then at most polynomially many $y''$ from
$\tau_1(x,y',-)$ are added to some priority queue. An element of the queue is output, and again 
polynomially many elements are added by the queue (possibly by first computing some new 
$y'\in {R_2}(\sigma_1(x))$. This way, all of $\tau_3(x,y,-)$ can be enumerated with a polynomial
delay in $|x|$.
Moreover, it is easy to see that for all $x,z\in\Sigma^*$ we have $|\tau_3(x,-,z)|\leq p_1(p_2(|x|))$ and
that ${R_2}(x)=\bigcup_{y\in{R_3}(\sigma_3(x))}\tau_3(x,y,-)$. It follows
that indeed $\pEnum{R_1}\leq_e\pEnum{R_3}$.
\end{proof}

Note that, as an additional benefit, $\pEnum{R_1}\leq_e\pEnum{R_2}$
does not imply $\pExist{R_1}\leq_m\pExist{R_2}$. An example for this
behavior can be found in Theorem~\ref{thm:dichotomy} below.

Thus, besides being a natural extension of existing reductions,
Proposition~\ref{prop:ereductionsComposeClosed} also shows that
$e$-reductions have important desirable properties.
As a result, they are very well suited for reasoning about enumeration
algorithms and complexities. For example, they allow to compare different
enumeration problems, thus supporting to make statements on one problem
``relative'' to some other problem, and can be used to derive enumeration
algorithms -- hence complexity upper bounds.

In fact, they have already been implicitly used previously for these purposes.
For example, an important result from \cite{KanteLMN14} compares the complexity
of two well-known problems by implicitly using an $e$-reduction:
\begin{example}
We denote by $\domenum$ the problem of enumerating all subset-minimal dominating sets of a given graph,
and by $\transenum$ the problem of enumerating all minimal hitting sets of a hypergraph. 
In~\cite{KanteLMN14}, it was shown that
$\transenum$ can be enumerated via an enumeration algorithm for $\domenum$, by mapping a hypergraph to an 
instance of $\domenum$, and giving a (non-bijective) map between the minimal dominating sets and the minimal
hitting sets. In fact this defines the following $e$-reduction $\transenum\leq_e\domenum$:
\begin{itemize}
\item Given a hypergraph $\calH$, $\sigma(\calH)$ is defined as the co-bipartite incidence
graph of $\calH$, with
\begin{align*}
V(\sigma(\calH))=&V(\calH)\cup\{y_e\mid e\in E(\calH)\}\cup\{v\},\\
E(\sigma(\calH))=&\{\{x,y_e\}\mid x\in V(\calH),x\in e\}\cup\{\{x,y\}\mid x,y\in V(\calH)\}\\
&\cup \{\{v,x\}\mid x\in V(\calH)\}\cup\{\{y_e,y_f\}\mid e,f\in E(\calH)\}.
\end{align*}
\item Given $\sigma(\calH)$ and some minimal dominating set $D\subseteq\sigma(\calH))$, we have that
\begin{equation*}
\tau(\sigma(\calH), D,-)=\left\{
     \begin{array}{ll}
       \emptyset & : D=\{x,y_e\}\text{ for some }x\in V(\calH),e\in E(\calH),\\
       D & :\text{ otherwise}.
     \end{array}
   \right.
\end{equation*}
\end{itemize}
\end{example}
Note that $\domenum\leq_e\transenum$ holds as well.
The original question studied in \cite{KanteLMN14} was whether any of these problems
can be enumerated in a time that is polynomially in both, the size of the input and
the size of the output. For the two problems $\domenum$ and $\transenum$, this is a
longstanding open problem. The conclusion in \cite{KanteLMN14} was that
if such an
enumeration algorithm exists for one of the problems, then this is also possible for
the other one. 
It can be shown that the existence of the $e$-reductions between $\domenum$ and $\transenum$
also induces the same result.
Moreover, they are also equivalent for all of the classes that are closed under
$e$-reductions as given in Proposition~\ref{prop:ereductionsComposeClosed},
i.e., if one of the two problems is in any of these classes, then the other
problem is as well.


Another example that implicitly uses $e$-reductions is \cite{creignou1997generating}.
In later sections of this paper, we will use them to retrieve reductions
among enumeration problems as well. And even though we do not show any completeness
results under $e$-reduction, we will actually use them to show hardness under alternative
reductions that will be introduced in the next section.

\section{Procedural-style Reductions}
\label{sect:second-reduction}

Although Turing
reductions are too strong to show completeness results for
classes in the polynomial hierarchy, because the classes $\SigmaP{k}$ and $\DeltaP{k+1}$ have the same closure, Turing style reductions
turn out to be meaningful in our case. In this section we 
introduce two types of reductions that are motivated by Turing
reductions. Both of them are able to reduce
between enumeration problems for which the 
$e$-reductions
seem to be too weak.

Towards this goal, we first have to define the concept of 
RAMs with an \emph{oracle for enumeration problems} (such oracles were already introduced in \cite{Strozecki2010}).
The intuition behind the definition of such enumeration
oracle machines is the following: For algorithms (i.e., Turing 
machines or RAMs in the case of enumeration) using a decision
oracle for the language $L$, we usually have a special instruction
that given an input $x$ decides in one step whether $x\in L$, and 
then executes the next step of the algorithm accordingly. For an 
algorithm $\calA$ using an enumeration oracle, an input $x$ to some
$\pEnum{R}$-oracle returns in a single step (using the instruction
$\newoutput$, see the definition below) a single element of
$R(x)$, and then 
$\calA$  can proceed according to this output.

\begin{definition}[Enumeration Oracle Machines]\label{def:eoma}
 Let $\pEnum{R}$ be an enumeration problem. 
 An \emph{Enumeration Oracle Machine with an enumeration oracle
 $\pEnum{R}$ (EOM\_R)} is a RAM
 that has a sequence of new registers $A_e, O^e(0),O^e(1),\ldots$
 and a new instruction $\newoutput$ (next Oracle output). 
 An EOM\_R 
 is \emph{oracle-bounded} if the size of all
 inputs to the oracle is at most polynomial in the size of
 the input to the EOM\_R.
\end{definition}

When executing $\newoutput$, the machine writes -- in one step --
some $y_i \in R(x)$ to register $A_e$, where
$x$ is the word stored in $O^e(0),O^e(1),\ldots$ and $y_i$ is
defined as follows:

\begin{definition}[Next Oracle Output] \label{def:nop}
 Let $R$ be a binary relation, $\pi_1, \pi_2, \dots $ be the
 run of an EOM\_R and assume that the $k^{th}$ instruction
 is $\newoutput$, i.e., $\pi_k = \newoutput$.
 Denote with $x_i$ the word stored in $O^e(0), O^e(1), \dots$
 at step $i$.
 Let $K  = \{ \pi_i \in \{\pi_1, \dots, \pi_{k-1}\} \mid 
 \pi_i = \newoutput \text{ and } x_i = x_k\}$. 
 Then the \emph{oracle output $y_k$ in $\pi_k$} is defined as
 an arbitrary $y_k \in R(x_k)$ s.t.\
 $y_k$ has not been the oracle output in any $\pi_i \in K$.
 If no such $y_k$ exists, then the oracle output in $\pi_k$
 is undefined.

 When executing $\newoutput$ in step $\pi_k$, if the oracle
 output $y_k$ is undefined, then the register $A_e$ contains
 some special symbol in step $\pi_{k+1}$. Otherwise in step
 $\pi_{k+1}$ the register $A_e$ contains $y_k$ .
\end{definition}
Observe that since an EOM $M^e$ is a polynomially bounded
RAM and the complete oracle output is stored in the register $A_e$,
only such oracle calls are allowed where the size of each
oracle output is guaranteed to be polynomially bounded in the size
of the input of $M^e$.

Using EOMs, we can now define another type of reductions among
enumeration problems, reminiscent of classical Turing reductions.
I.e., we say that one problem $\pEnum{R_1}$ reduces to another
problem $\pEnum{R_2}$, if $\pEnum{R_1}$ can be solved by an EOM
using $\pEnum{R_2}$ as an enumeration oracle.
\newpage
\begin{definition}[$D$-reductions, $I$-reductions]\label{def:turingReductions}
 Let $R_1$ and $R_2$ be binary relations.
 
 \begin{itemize}
  \item We say that
  $\pEnum{R_1}$ reduces to $\pEnum{R_2}$ via $D$-reductions, 
   $\pEnum{R_1}\leq_D\pEnum{R_2}$, if there
  is an 
 oracle-bounded  EOM\_$R_2$ that enumerates $R_1$ 
  in $\delayP$
  and is independent of the order in which the $\pEnum{R_2}$-oracle
  enumerates its answers.

  \item We say that 
  $\pEnum{R_1}$ reduces to $\pEnum{R_2}$ via $I$-reductions,   
  $\pEnum{R_1}\leq_I\pEnum{R_2}$, if there 
  is an 
  EOM\_$R_2$ that enumerates $R_1$ in
  $\incP$ 
  and is independent of
  the order in which the $\pEnum{R_2}$-oracle enumerates its
  answers.
 \end{itemize}
\end{definition}

The following proposition shows that these reductions have desirable properties:

\begin{proposition} \label{prop:turingReductionsCompose}
Reducibility via $D$-reductions as well as reducibility via $I$-reductions are
transitive relations.
\end{proposition}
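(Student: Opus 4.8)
The plan is to prove transitivity of each reduction separately, but using one common construction: composing the two enumeration oracle machines. Suppose $\pEnum{R_1}\leq_D\pEnum{R_2}$ via an oracle-bounded EOM\_$R_2$ $\calM_1$ that enumerates $R_1$ in $\delayP$, and $\pEnum{R_2}\leq_D\pEnum{R_3}$ via an oracle-bounded EOM\_$R_3$ $\calM_2$ that enumerates $R_2$ in $\delayP$. I would build an EOM\_$R_3$ $\calM$ for $R_1$ by running $\calM_1$ and, whenever $\calM_1$ issues its instruction $\newoutput$ on an oracle input $w$ (expecting an element of $R_2(w)$), $\calM$ instead simulates a fresh copy of $\calM_2$ on input $w$ until that copy produces its next output, which is some element of $R_2(w)$, and hands it back to $\calM_1$ in register $A_e$. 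Crucially, $\calM$ must maintain, for each distinct oracle-input word $w$ that $\calM_1$ has ever queried, the current state of the corresponding simulated run of $\calM_2$, so that successive $\newoutput$ calls on the same $w$ advance the same simulation and thus yield \emph{distinct} elements of $R_2(w)$ — matching the semantics of Definition~\ref{def:nop}. Since $\calM_1$ is oracle-bounded, each such $w$ has size polynomial in $|x|$, hence $\calM_2$'s run on $w$ has delay polynomial in $|x|$ and its outputs have size polynomial in $|x|$; in turn the $R_3$-oracle inputs that $\calM_2$ produces are polynomial in $|w|$, hence polynomial in $|x|$, so $\calM$ is again oracle-bounded.

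The next step is the delay bound. Between two consecutive outputs of $R_1$, $\calM_1$ runs for polynomially many steps and issues polynomially many $\newoutput$ instructions; each of these is now replaced by a run segment of $\calM_2$ of polynomial delay, which itself issues polynomially many $\newoutput$ instructions, each costing one step of $\calM_2$. Composing two polynomials gives a polynomial, so $\calM$ enumerates $R_1$ with polynomial delay, establishing $\pEnum{R_1}\leq_D\pEnum{R_3}$. For the $I$-reduction case the argument is identical except that $\calM_1$ and $\calM_2$ have \emph{incremental} delay and $\calM_1$ need not be oracle-bounded; one checks that the composition of two incremental-delay bounds is again an incremental-delay bound (a polynomial in $|x|$ and the number of previously produced outputs), and that even without the oracle-boundedness assumption the simulated runs of $\calM_2$ are legitimate because their inputs, outputs, and onward oracle inputs all stay polynomially bounded in the combined size of the instance and the solutions already generated.

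The last point to verify is order-independence. By hypothesis $\calM_1$ produces $R_1(x)$ correctly no matter in what order its $\pEnum{R_2}$-oracle serves answers, and $\calM_2$ produces $R_2(w)$ correctly no matter in what order its $\pEnum{R_3}$-oracle serves answers. In the composed machine, the "order" in which $\calM$'s $R_3$-oracle answers come back only affects, through $\calM_2$, the order in which elements of $R_2(w)$ are handed to $\calM_1$; but $\calM_1$ is order-independent, so the final enumeration of $R_1(x)$ is still correct and still independent of the $R_3$-oracle's order. Hence $\calM$ witnesses $\pEnum{R_1}\leq_D\pEnum{R_3}$ (resp.\ $\leq_I$).

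**Main obstacle.** The one genuinely delicate point is the bookkeeping that keeps the semantics of $\newoutput$ intact under composition: $\calM$ must never restart $\calM_2$ from scratch on a repeated query $w$, or it would return the same element of $R_2(w)$ twice and violate Definition~\ref{def:nop}; it must store and resume the per-$w$ simulation state, and argue that doing so is possible within a polynomially bounded RAM (in the $D$-case) because the number of distinct query words and the size of each stored simulation state are polynomially bounded in $|x|$. Everything else — composing polynomial or incremental delays, tracking oracle-input sizes, and transporting order-independence through the simulation — is routine once this is set up cleanly.
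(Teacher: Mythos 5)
Your proposal is correct and follows essentially the same route as the paper: the paper proves this by the same machine-composition argument it uses for Proposition~\ref{prop:reductionclosed}, simulating each $\newoutput$ call of the outer machine by a resumable, per-query-word run of the inner machine so that repeated queries advance the same simulation rather than restarting it. Your additional checks of oracle-boundedness, composed delay bounds, and order-independence are exactly the points the paper leaves implicit.
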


\begin{proof}
This can be proven along the same lines as Proposition~\ref{prop:reductionclosed} by substituting
occurrences of enumeration RAMs with decision oracles by enumeration RAMs
\end{proof}

For $D$-reductions, we require the EOM\_$R_2$ to be oracle-bounded.
We would like to point out that this restriction is essential:
if we drop it, then the classes $\delayP_p^\calC$ are not closed
under the resulting reduction. Indeed, suppose that the $D_u$-reduction is defined similar to the $D$-reduction but
without the restriction to an oracle-bounded EOM. Then, 
abusing the decision problem $\ase$ from the proof of Theorem~\ref{theorem:deltacollapse} as an
enumeration problem $\pEnum{R_A}$ (with output 'yes' or 'no'), we can show that for any $k\geq1$ and any
$\pEnum{R}\in\incP^{\SigmaP{k}}$ we have that $\pEnum{R}\leq_{D_u}\pEnum{R_A}$ but
$\pEnum{R_A}\in\delayP^{\SigmaP{k}}_p$.

\begin{proposition}\label{prop:reductionclosed}
Let $\calC \in \SigmaP{k}, k \geq 0$. 
The class $\delayP_p^{\calC}$ is
closed under $D$-reductions and
the class $\incP^{\calC}$ is closed under $I$-reductions.
\end{proposition}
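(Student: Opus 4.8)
\textbf{Proof plan for Proposition~\ref{prop:reductionclosed}.}
The plan is to mimic the proof of Proposition~\ref{prop:ereductionsComposeClosed} (closure under $e$-reductions), replacing the enumeration algorithm for $\tau(x,y,-)$ by the work done between two consecutive $\newoutput$-instructions of the reducing EOM, and replacing the decision oracle of the outer algorithm by a $\calC$-decision oracle that simulates the answers of the inner enumeration oracle. So suppose $\pEnum{R_1}\leq_D\pEnum{R_2}$ via an oracle-bounded EOM\_$R_2$ $M$ that enumerates $R_1$ with polynomial delay and is independent of the order of the oracle's answers, and suppose $\pEnum{R_2}\in\delayP_p^{\calC}$ via a RAM $N$ with a $\calC$-oracle. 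I want to build a RAM $M'$ with a $\calC'$-oracle, $\calC'\in\SigmaP{k}$, that enumerates $R_1$ with polynomial delay and polynomially bounded oracle inputs.

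First I would fix, for each oracle input $x'$ that $M$ ever submits to its $\pEnum{R_2}$-oracle, the concrete enumeration order of $R_2(x')$ induced by running $N$ on $x'$; because $M$ is order-independent, we may simulate each $\newoutput$-call on input $x'$ by returning ``the next element of $R_2(x')$ in the order $N$ produces it''. Since $M$ is oracle-bounded, $|x'|$ is polynomial in $|x|$, so $N$ runs on $x'$ with delay polynomial in $|x|$ and with oracle inputs polynomial in $|x|$. The key step is to package ``what is the $i$-th bit of the $j$-th output of $N$ on $x'$'' as a single decision problem: by the standard construction used in Theorem~\ref{theorem:deltacollapse} (the problem $\ase$), this question lies in $\ptime^{\calC}=\DeltaP{k+1}$ when $\calC=\SigmaP{k}$, hence $M'$ can obtain each such bit with polynomially many calls to a $\SigmaP{k}$-oracle, each of polynomial size. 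Thus $M'$ simulates $M$ step by step; whenever $M$ issues $\newoutput$ on $x'$ (say for the $j$-th time with that input), $M'$ recovers the string $N$'s $j$-th output on $x'$ bit by bit via the oracle and writes it into $A_e$, or detects exhaustion and writes the special symbol. Between two outputs of $M$ there are only polynomially many steps, each of which triggers at most polynomially much oracle-simulation work, so $M'$ has polynomial delay; and every oracle input of $M'$ is polynomial in $|x|$, so $\pEnum{R_1}\in\delayP_p^{\calC}$. For $\incP^{\calC}$ and $I$-reductions the argument is identical except that the EOM is not required to be oracle-bounded, and the bound on oracle-input size is recovered exactly as in the discussion following Theorem~\ref{theorem:deltacollapse}: an incremental delay leaves enough time to make $\poly(|\alpha|)$ many calls to the decision oracle, where $\alpha$ is the (polynomially-in-input-and-output-size bounded) oracle input of the inner $\incP^{\calC}$-algorithm.

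One subtlety to handle carefully is the \emph{state} of the simulated inner machine $N$: for a fixed oracle input $x'$, successive $\newoutput$-calls must return successive outputs of $N$ on $x'$, so $M'$ must ``resume'' $N$'s computation on $x'$ rather than restart it. Since we are querying a decision oracle rather than actually running $N$, this is automatic: the decision problem ``$i$-th bit of the $j$-th output of $N$ on $x'$'' is self-contained, and $M'$ merely needs to remember, for each distinct oracle input $x'$ it has used, how many outputs it has already requested — which it can store because there are only polynomially many distinct inputs of polynomial size within any polynomial-delay window, and in the unbounded ($I$-reduction) case the bookkeeping is still polynomial in input plus output size. The main obstacle I expect is purely notational: carefully defining the composite decision problem and verifying it sits in the right level of the hierarchy (tracking the exponential-in-nothing but polynomial-in-$|x|$ running times of $N$, and that $N$'s $\calC$-oracle calls are absorbed into $\ptime^{\calC}$), together with checking that the interleaving of $M$'s steps and $M'$'s oracle-simulation steps still yields a genuine polynomial (resp.\ incremental) delay rather than merely polynomial total time.
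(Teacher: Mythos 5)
Your plan diverges from the paper's proof in one essential respect, and the divergence creates a genuine gap in the $\delayP_p^{\calC}$/$D$-reduction half. The paper does \emph{not} route the inner enumeration oracle through a decision problem; it simulates the $\delayP_p^{\calC}$-algorithm $\calA$ for $\pEnum{R_2}$ \emph{inline}, reserving a dedicated block of registers $R(2^{a_{x'}^0}),\ldots,R(2^{a_{x'}^1})$ for each distinct oracle input $x'$ in which the computation state of $\calA$ on $x'$ is persisted, so that each subsequent $\newoutput$ on $x'$ merely resumes $\calA$ for one more delay. The decision-oracle calls of the composed machine are then literally $\calA$'s own $\calC$-oracle calls, which are already of size $\poly(|x'|)=\poly(|x|)$ by oracle-boundedness of $M$, and each simulated $\newoutput$ costs one delay of $\calA$, i.e.\ $\poly(|x|)$ time. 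Your version instead replaces the resumption by queries to a decision problem of the form ``what is the $i$-th bit of the $j$-th output of $N$ on $x'$'', and here is where the argument breaks: over the whole run of $M$ (which may produce exponentially many solutions of $R_1(x)$), the index $j$ of requested outputs on a fixed $x'$ can grow exponentially in $|x|$. The problem $\ase$ of Theorem~\ref{theorem:deltacollapse} lies in $\DeltaP{k+1}$ only because its instance \emph{contains the $n$ previously produced solutions}, so that ``polynomial time in the input'' affords running the enumerator for $n+1$ delays. If you feed your oracle only $(x',j,i)$ with $j$ in binary, the problem is not in $\DeltaP{k+1}$ (locating the $j$-th output of a polynomial-delay machine takes time $j\cdot\poly(|x'|)$, exponential for exponential $j$); if you instead include the history (or $j$ in unary), the oracle input grows to exponential size, which violates the defining condition of $\delayP_p^{\calC}$ and, worse, a $\ptime^{\SigmaP{k}}$ computation on an exponential-size input makes exponentially many $\SigmaP{k}$-calls, so you cannot even recover the bit ``with polynomially many calls of polynomial size'' as you claim. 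At best this route lands you in $\delayP^{\DeltaP{k+1}}=\incP^{\SigmaP{k}}$, which by Proposition~\ref{prop:delayppsubdelayp} is presumably strictly larger than $\delayP_p^{\SigmaP{k}}$ -- so the loss is not merely cosmetic.

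The sentence ``Since we are querying a decision oracle rather than actually running $N$, this is automatic'' is precisely where the difficulty is hidden: the counter $j$ is cheap for $M'$ to store, but expensive to pay for inside a polynomially input-bounded $\DeltaP{k+1}$-oracle. The repair is to do what the paper does -- actually run (and persist the state of) the inner $\delayP_p^{\calC}$-enumerator rather than re-deriving its $j$-th output from scratch through a decision oracle. Your treatment of order-independence (fixing the order induced by $N$) is fine, and your argument for the $\incP^{\calC}$/$I$-reduction half would in fact go through even with the decision-oracle detour, since there the history-sized oracle inputs and the polynomially-many-in-(input plus output) oracle calls are both absorbed by incremental delay; but the paper handles that case too by the same direct simulation, which is uniform and simpler.
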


\begin{proof}
 Let $M$ be an oracle-bounded enumeration oracle machine with an enumeration oracle $\pEnum{R_2}$
witnessing that $\pEnum{R}\leq_D\pEnum{R_2}$. Let $\calA$ be the polynomial delay algorithm with
access to a polynomially bounded $\calC$-oracle We can construct a
RAM $M'$ that enumerates $\pEnum{R_1}$ with a polynomial delay using a polynomially bounded
decision oracle $\calC$, by
modifying the RAM $M$ as follows: Every time $M$ makes a call to an $\pEnum{R_2}$-oracle, we use 
the algorithm $\calA$ to retrieve what should be written to the register $A_e$. Assume that $x$ is the input to 
an oracle call of the RAM $M$. Then the new RAM $N$ assigns two fixed addresses $a_x^0$ and
$a_x^1$ to $x$. Then $N$ can simulate the algorithm $\calA$ on the registers $R(2^{a_x^0}),\ldots,R(2^{a_x^1})$
until $\calA$ would output some $y\in\Sigma^*$. The RAM $N$ writes $y$ to $A_e$, and a simulation
of a single oracle call is completed.
Whenever $x$ is the input for a $\newoutput$-call, $N$ continues to simulate $\calA$ on those registers; this way,
the enumeration of ${R_2}(x)$ does not need to start from the beginning every time $x$ is the input
of an oracle call. 
The proof of the closure of $\incP^\calC$ under $I$-reductions can be done along the same lines.
\end{proof}

We note that all of these properties still hold when there
is no oracle at all, i.e., for the classes 
$\delayP$ and $\incP$.
\medskip

Observe that the $e$-reductions introduced in the previous section are
a particular case of $D$- and $I$-reductions. We will also use $e$-reductions later when we establish
completeness results for specific problems.


\begin{proposition} \label{prop:reduction_E_compared_to_D}
 Let $R_1$ and $R_2$ be binary relations.
\begin{itemize}
\item If $\pEnum{R_1}\leq_e\pEnum{R_2}$, then $\pEnum{R_1}\leq_D\pEnum{R_2}$.
\item If $\pEnum{R_1}\leq_D\pEnum{R_2}$, then $\pEnum{R_1}\leq_I\pEnum{R_2}$.
\end{itemize}
\end{proposition}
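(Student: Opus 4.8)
The plan is to verify the two implications separately, relying on the definitions of the three reductions and on the closure properties already established.

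For the first implication, suppose $\pEnum{R_1}\leq_e\pEnum{R_2}$ via a polynomial-time function $\sigma$ and a relation $\tau$ as in Definition~\ref{definition:firstreduction}. I would build an oracle-bounded EOM\_$R_2$ that enumerates $R_1$ with polynomial delay, essentially replaying the enumeration algorithm from the proof of Proposition~\ref{prop:ereductionsComposeClosed} (Algorithm~\ref{algorithm:reductiononeclosed}). On input $x$, the machine computes $\sigma(x)$ in polynomial time, then repeatedly issues $\newoutput$ to its $\pEnum{R_2}$-oracle to obtain the next solution $y\in R_2(\sigma(x))$; for each such $y$ it runs the polynomial-delay subroutine enumerating $\tau(x,y,-)$, pushing the (at most $p(|x|)$-many newly produced) elements into a priority queue and emitting queue elements in lockstep, exactly as in Algorithm~\ref{algorithm:reductiononeclosed}. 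Condition~1 of the $e$-reduction guarantees that every element of $R_1(x)$ is produced; Condition~3 bounds duplicates so the queue discipline yields each solution exactly once; Condition~2 guarantees polynomial delay and that only polynomially many ``useless'' $y$ are encountered. The inputs to the oracle are all $\sigma(x)$, whose size is polynomial in $|x|$, so the EOM is oracle-bounded. Finally, the construction never inspects the order of the oracle's answers — it only asks for ``the next one'' — so it is independent of that order. Hence $\pEnum{R_1}\leq_D\pEnum{R_2}$.

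For the second implication, assume $\pEnum{R_1}\leq_D\pEnum{R_2}$ via an oracle-bounded EOM\_$R_2$ $M$ that enumerates $R_1$ with polynomial delay and is order-independent. The same machine $M$ witnesses $\pEnum{R_1}\leq_I\pEnum{R_2}$: polynomial delay is a special case of incremental delay (a bound polynomial in $|x|$ alone is in particular a bound polynomial in $|x|$ together with the number of previously output solutions), the oracle-boundedness requirement is simply dropped in the definition of $I$-reductions, and order-independence is required in both. So no new construction is needed; one only observes that $\delayP\subseteq\incP$ lifts to the oracle-machine setting.

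No step here is a genuine obstacle; the only point requiring care is the first implication, where one must check that the priority-queue bookkeeping of Algorithm~\ref{algorithm:reductiononeclosed} transfers verbatim when the inner enumeration of $R_2(\sigma(x))$ is performed by $\newoutput$-calls to an enumeration oracle rather than by a decision-oracle RAM — and, crucially, that this bookkeeping never branches on the order in which the oracle returns solutions, which is exactly what the order-independence clause of the $D$-reduction demands.
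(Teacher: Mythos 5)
Your proposal is correct and follows essentially the same route as the paper: the first implication is obtained by reusing Algorithm~\ref{algorithm:reductiononeclosed} from the proof of Proposition~\ref{prop:ereductionsComposeClosed} as an oracle-bounded EOM\_$R_2$ (oracle-boundedness holding because the only oracle input is $\sigma(x)$), and the second implication is immediate from the definitions since polynomial delay is a special case of incremental delay and the oracle-boundedness requirement is simply dropped. Your explicit check of order-independence is a welcome addition that the paper leaves implicit.
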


\begin{proof} The proof of the second claim follows immediately from the definitions.
Now, if $\pEnum{R_1}\leq_e\pEnum{R_2}$, then  Algorithm \ref{algorithm:reductiononeclosed} developed in the proof of Proposition~\ref{prop:ereductionsComposeClosed} provides an oracle bounded EOM\_$R_2$ that enumerates $R_1$ in $\delayP$. 
For enumerating $R_1(x)$ the calls to the enumeration oracle  $\pEnum{R_2}$ are only made on the input $\sigma(x)$, where $\sigma$ is the function used in the $e$-reduction.
\end{proof}

Now, unlike for $e$-reductions, the next theorem shows that the
$D$- and $I$-reduction induce complete
problems for the enumeration complexity classes introduced in
Section~\ref{sect:classes}.

\begin{theorem}\label{theorem:selfreducomplete}
  Let $R$ be a binary relation and $k\geq1$ such that
  $\pExist{R}$ is $\SigmaP{k}$-hard.
  \begin{itemize}    
   \item $\pEnum{R}$ is $\delayP_p^{\SigmaP{k}}$-hard via
      $D$-reductions.
	\item $\pEnum{R}$ is $\incP^{\SigmaP{k}}$-hard via $I$-reductions.
   \item If additionally $\pExist{R}$ is in $\SigmaP{k}$ and $R$ is self-reducible, then $\pEnum{R}$ is
    $\delayP_p^{\SigmaP{k}}$-complete via $D$-reductions and
    $\incP^{\SigmaP{k}}$-complete via $I$-reductions.
  \end{itemize}
\end{theorem}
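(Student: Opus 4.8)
The plan is to prove the three items separately, with the first two being the heart of the argument and the third following by combining them with the machinery of Section~\ref{sect:classes}. For the first item, suppose $\pExist{R}$ is $\SigmaP{k}$-hard and let $\pEnum{R'}$ be an arbitrary problem in $\delayP_p^{\SigmaP{k}}$; I must exhibit a $D$-reduction $\pEnum{R'}\leq_D\pEnum{R}$. Fix a RAM $\calA$ with a polynomially bounded $\SigmaP{k}$-oracle $L$ witnessing membership of $\pEnum{R'}$ in $\delayP_p^{\SigmaP{k}}$. Since $L\in\SigmaP{k}$ and $\pExist{R}$ is $\SigmaP{k}$-hard, there is a polynomial-time many-one reduction $f$ from $L$ to $\pExist{R}$, so that for any query $w$, $w\in L$ iff $R(f(w))\neq\emptyset$. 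I then build an oracle-bounded EOM\_$R$ that simulates $\calA$: each time $\calA$ poses a decision query $w$ to its $\SigmaP{k}$-oracle, the EOM instead writes $f(w)$ into the oracle-input registers $O^e(0),O^e(1),\dots$ and executes $\newoutput$ once; the answer to the decision query "$w\in L$?" is then read off as "yes" iff $A_e$ does not contain the special "undefined" symbol, i.e.\ iff $R(f(w))$ is nonempty. Because $\calA$'s oracle inputs are polynomially bounded in $|x|$ and $f$ is polynomial-time, the strings $f(w)$ are polynomially bounded in $|x|$, so the resulting EOM is indeed oracle-bounded; the delay is preserved up to a polynomial factor, and the construction never depends on \emph{which} element of $R(f(w))$ the oracle returns (only on emptiness), so the EOM is order-independent. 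Hence $\pEnum{R'}\leq_D\pEnum{R}$, establishing $\delayP_p^{\SigmaP{k}}$-hardness via $D$-reductions.

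For the second item the argument is the same but one level weaker: given $\pEnum{R'}\in\incP^{\SigmaP{k}}$ witnessed by an incremental-delay RAM $\calA$ with a $\SigmaP{k}$-oracle, I perform the identical substitution of each decision query $w$ by an $\newoutput$-call on input $f(w)$, checking emptiness via the special symbol in $A_e$. Here I do not need the oracle-bounded restriction (it is not part of the $I$-reduction definition), which is fortunate since in the incremental setting the oracle inputs are bounded only in $|x|$ together with the size of previously produced solutions; incremental delay is preserved up to a polynomial, and order-independence holds for the same reason as before. This yields $\pEnum{R'}\leq_I\pEnum{R}$ and hence $\incP^{\SigmaP{k}}$-hardness of $\pEnum{R}$ via $I$-reductions.

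For the third item, assume in addition $\pExist{R}\in\SigmaP{k}$ and $R$ self-reducible. Self-reducibility gives $\pExtSol{R}\leq_T\pExist{R}$, so $\pExtSol{R}\in\ptime^{\SigmaP{k}}=\DeltaP{k+1}$, equivalently (as observed after Theorem~\ref{theorem:deltacollapse}) the relevant oracle may be taken in $\SigmaP{k}$. By Proposition~\ref{prop:new}, $\pEnum{R}\in\delayP_p^{\DeltaP{k+1}}=\delayP_p^{\SigmaP{k}}$; alternatively Proposition~\ref{prop:selfreducible} gives this directly from $\pExist{R}\in\DeltaP{k}\subseteq$ the appropriate level --- more precisely, since $\pExist{R}\in\SigmaP{k}$ we get $\pEnum{R}\in\delayP_p^{\SigmaP{k}}$, and then also $\pEnum{R}\in\incP^{\SigmaP{k}}$ by the inclusion $\delayP_p^{\SigmaP{k}}\subseteq\incP^{\SigmaP{k}}$ from Theorem~\ref{theorem:delaypccollapse}'s proof. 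Combining membership with the hardness from items one and two --- and using that $\delayP_p^{\SigmaP{k}}$ is closed under $D$-reductions and $\incP^{\SigmaP{k}}$ under $I$-reductions (Proposition~\ref{prop:reductionclosed}), together with transitivity (Proposition~\ref{prop:turingReductionsCompose}) --- delivers the two completeness claims.

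I expect the main obstacle to be the bookkeeping in the first item: one must verify carefully that replacing decision-oracle calls by $\newoutput$-calls genuinely keeps the EOM \emph{oracle-bounded} (this is where the polynomial bound built into $\delayP_p$, as opposed to $\delayP$, is used in an essential way) and that the resulting machine is genuinely \emph{order-independent} --- the simulation must extract only the emptiness bit of $R(f(w))$ and must not let the enumeration of $R'$ depend on the identity of returned oracle solutions. A secondary subtlety is that the same query $w$ may be asked repeatedly by $\calA$; since $\newoutput$ enumerates \emph{distinct} solutions of $R(f(w))$, one must argue that re-asking still correctly reports nonemptiness (it does: if $R(f(w))$ is nonempty then the first $\newoutput$ already returned a genuine element and every later call either returns another element or the special symbol, but in the latter case nonemptiness can be cached, so consistency is maintained).
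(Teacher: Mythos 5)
Your proposal is correct and follows essentially the same route as the paper: simulate the $\delayP_p^{\SigmaP{k}}$ (resp.\ $\incP^{\SigmaP{k}}$) witness algorithm, replace each decision-oracle query $w$ by a single $\newoutput$-call on the many-one image $f(w)$ and read off emptiness of $R(f(w))$ from the special symbol in $A_e$, then obtain membership for the third item from self-reducibility via Proposition~\ref{prop:new}. Your extra remarks on oracle-boundedness, order-independence, and caching the answer for repeated queries (since $\newoutput$ only returns \emph{new} elements of $R(f(w))$) make explicit some details the paper's proof leaves implicit, but do not change the argument.
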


\begin{proof}
Let $R$ be a relation such that $\pExist{R}$ is $\SigmaP{k}$-complete for some $k\geq 1$. 
\begin{itemize}
\item We have to prove that for any $\pEnum{R'}\in\delayP_p^{\SigmaP{k}}$, $\pEnum{R'}\leq_D\pEnum{R}$
($\incP^{\SigmaP{k}}$-hardness via $I$-reductions can be shown along the same lines).
So let $R'$ be a binary relation
such that $\pEnum{R'}\in\delayP_p^{\SigmaP{k}}$. By definition there is some $L\in\SigmaP{k}$ such that
$\pEnum{R'}\in\delayP_p^L$. Moreover let $\calA$ be an algorithm witnessing this membership.
As $\pExist{R}$ is $\SigmaP{k}$-complete, we have that
$L\leq_m \pExist{R}$, so any input $x$ to an $L$-decision oracle when enumerating $\pEnum{R}$ can be
transformed to an instance $x'\in\pExist{R}$ such that $x\in L$ iff $x'\in\pExist{R}$, and this transformation
can by done in polynomial time in the size of $x$. Moreover, since the size of the oracle input is polynomial,
this reduction can be computed within the time bounds of a polynomial delay, i.e.
whenever a polynomial delay algorithm with an $L$-oracle makes an oracle call with an input $x$, the same
algorithm can also perform a transformation to some $x'$ before that oracle call, without violating
the polynomial delay restriction. Therefore we can enumerate $\pEnum{R'}$ with an oracle bounded
enumeration oracle machine with $\pEnum{R}$ as follows: Whenever $\calA$ would make a decision
oracle call to $L$ with input $x$, instead the machine transforms this to some $x'\in\Sigma^*$, and then
makes a $\newoutput$-instruction with input $x'$ to the $\pEnum{R}$-oracle. The $\newoutput$-instruction
writes a nonempty string to the register $A_e$ if and only if $x'\in\pExist{R}$ and thus if and only if $x\in L$. 
It follows that we can simulate the decision oracle call with an enumeration oracle call.
\item Membership of $\pEnum{R}$ in $\delayP_p^{\SigmaP{k}}$ in this case follows
immediately from Proposition~\ref{prop:new}. Since $\delayP_p^{\SigmaP{k}}\subseteq\incP^{\SigmaP{k}}$, this
also shows membership in $\incP^{\SigmaP{k}}$.
\end{itemize}
\end{proof}

As a consequence, the enumeration problems $\sigmaSATe{k}$ and also $\piSATe{k}$
are natural complete problems for our enumeration complextiy classes:


\begin{corollary}\label{cor:complexity_sat}
 Let $k\geq 1$. Then
 \begin{enumerate}

  \item $\sigmaSATe{k}$ is complete for $\delayP_p^{\SigmaP{k}}$
    via $D$-reductions.

  \item $\piSATe{k}$ and $\sigmaSATe{k+1}$ are complete for 
    $\incP^{\SigmaP{k+1}}$ via $I$-reductions. 
In particular $\SATe$ the  enumeration variant of the traditional {\rm SAT} problem is 
$\incP^{\NP}$-complete via $I$-reductions.
 \end{enumerate}
\end{corollary}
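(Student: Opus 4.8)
The plan is to obtain both statements as immediate consequences of Theorem~\ref{theorem:selfreducomplete}. For each enumeration problem in the statement it suffices to (i)~write it as $\pEnum{R}$ for a suitable polynomially bounded relation $R$, (ii)~determine the exact level of the polynomial hierarchy to which $\pExist{R}$ is complete, and (iii)~check that $R$ is self-reducible; then the theorem yields both hardness (for $D$- and $I$-reductions) and membership. For step (i) one takes, for $\sigmaSATe{k}$, the relation $R_{\Sigma,k}=\{(\psi,a)\mid \psi=\exists\vec{y}_1\forall\vec{y}_2\cdots Q_k\vec{y}_k\,\phi(\vec{x},\vec{y}),\ a\text{ an assignment to }\vec{x}\text{ with }\psi[\vec{x}/a]\text{ true}\}$; for $\piSATe{k}$ the analogous relation $R_{\Pi,k}$ with a leading~$\forall$; and for $\SATe$ the relation $\{(\phi,a)\mid a\models\phi\}$, which is the degenerate case with an empty quantifier prefix. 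In all cases the solutions are assignments to $\vec{x}$, so the relations are polynomially bounded.

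The heart of the argument is step (ii), i.e.\ the hierarchy bookkeeping. For $R_{\Sigma,k}$ the problem $\pExist{R_{\Sigma,k}}$ asks whether $\exists\vec{x}\,\psi$ holds, that is, whether the sentence $\exists\vec{x}\,\exists\vec{y}_1\forall\vec{y}_2\cdots Q_k\vec{y}_k\,\phi$ is true; since the fresh block $\exists\vec{x}$ merges into the already-present leading existential block, this is a $\SigmaP{k}$-sentence, so $\pExist{R_{\Sigma,k}}\in\SigmaP{k}$. Hardness follows by reading a $\sigmaSAT{k}$ instance $\exists\vec{z}_1\forall\vec{z}_2\cdots$ as an instance of $\pExist{R_{\Sigma,k}}$ in which (part of) the outermost block $\vec{z}_1$ plays the role of the free variables $\vec{x}$; hence $\pExist{R_{\Sigma,k}}$ is $\SigmaP{k}$-complete, and the same reasoning with $k$ replaced by $k+1$ applies to $R_{\Sigma,k+1}$, while for $\SATe$ the problem $\pExist{}$ is literally SAT, i.e.\ $\NP$-complete $(=\SigmaP{1})$. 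For $R_{\Pi,k}$ the level goes up by exactly one: $\pExist{R_{\Pi,k}}$ asks whether $\exists\vec{x}\,\forall\vec{y}_1\exists\vec{y}_2\cdots Q_k\vec{y}_k\,\phi$ is true, and now the new existential block sits in front of a universal block, contributing an extra alternation, so $\pExist{R_{\Pi,k}}$ is $\SigmaP{k+1}$-complete (membership is clear; hardness comes from reading a $\sigmaSAT{k+1}$ instance $\exists\vec{z}_1\forall\vec{z}_2\cdots$ with $\vec{z}_1$ as the free variables).

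Step (iii) is routine: given $(\psi,y)$ with $y$ a prefix of an assignment to $\vec{x}$, substituting the bits of $y$ into $\psi$ produces a formula $\psi'$ of the same quantifier shape whose free variables are the remaining $x_i$, and $\pExtSol{}$ holds on $(\psi,y)$ iff $\pExist{}$ holds on $\psi'$; thus $\pExtSol{R}\leq_m\pExist{R}\leq_T\pExist{R}$, so each relation above is self-reducible. Plugging these facts into Theorem~\ref{theorem:selfreducomplete} finishes the proof: with parameter $k$ applied to $R_{\Sigma,k}$ we get that $\sigmaSATe{k}$ is $\delayP_p^{\SigmaP{k}}$-complete via $D$-reductions (statement~1); with parameter $k+1$ applied to $R_{\Pi,k}$ and to $R_{\Sigma,k+1}$ we get that $\piSATe{k}$ and $\sigmaSATe{k+1}$ are $\incP^{\SigmaP{k+1}}$-complete via $I$-reductions (statement~2); and the special case $\SATe$ follows by applying the theorem with parameter $1$ to the SAT relation, giving $\incP^{\NP}$-completeness. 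The only genuinely delicate point is the complexity analysis in step (ii) — verifying that prepending $\exists\vec{x}$ keeps the level for the $\Sigma$-variants but raises it by one for the $\Pi$-variant, and handling the quantifier-free $\SATe$ boundary case separately — everything else is a direct appeal to the machinery already developed.
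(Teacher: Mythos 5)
Your argument is correct, and for statement~1 (and for $\sigmaSATe{k+1}$ in statement~2) it coincides with the paper's proof: both are direct applications of Theorem~\ref{theorem:selfreducomplete}. For $\piSATe{k}$, however, you take a genuinely different route. The paper does \emph{not} apply Theorem~\ref{theorem:selfreducomplete} to $\piSATe{k}$ directly; instead it proves the equivalence $\piSATe{k}\equiv_I\sigmaSATe{k+1}$, with the nontrivial direction $\sigmaSATe{k+1}\leq_I\piSATe{k}$ established by an explicit enumeration-oracle machine that repeatedly queries a $\piSATe{k}$-oracle on the formula obtained by freeing the leading existential block and conjoining clauses $(x_i\neq x)$ to exclude the $x$-parts already output; completeness of $\piSATe{k}$ then follows from transitivity of $\leq_I$ and closure of $\incP^{\SigmaP{k+1}}$ under $I$-reductions. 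You instead verify the hypotheses of Theorem~\ref{theorem:selfreducomplete} (at level $k+1$) for the relation underlying $\piSATe{k}$ itself: its existence problem is the truth problem for $\exists\vec{x}\,\forall\vec{y}_1\cdots Q_k\vec{y}_k\,\phi$, hence $\SigmaP{k+1}$-complete, and the relation is self-reducible because substituting a partial assignment preserves the $\Pi_k$ quantifier shape. Both checks are sound, so your proof is valid; it is arguably more uniform (one invocation of the master theorem covers every case) and it even yields the stronger, unclaimed fact that $\piSATe{k}$ is $\delayP_p^{\SigmaP{k+1}}$-complete via $D$-reductions. What the paper's route buys in exchange is the equivalence $\piSATe{k}\equiv_I\sigmaSATe{k+1}$ itself, which is of independent interest and illustrates how enumeration oracles are actually used; your proof does not establish that equivalence (though it follows a posteriori from the two completeness statements).
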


\begin{proof}
The results for $\sigmaSATe{k}$  follow immediately from Theorem \ref{theorem:selfreducomplete}. It only remains to prove that
  that $\piSATe{k}$ and $\sigmaSATe{k+1}$
are equivalent under $I$-reductions. 
Note that $\piSATe{k}\leq_I\sigmaSATe{k+1}$ follows
immediately from the fact that $\piSATe{k}$ is a special case of $\sigmaSATe{k+1}$, so it suffices to show 
that $\sigmaSATe{k+1}\leq_I\piSATe{k}$.
Thus consider an instance $\psi$ of $\sigmaSATe{k+1}$ given
as $\psi(x):=\exists y_0\forall y_1\ldots Q_k y_k \phi(x,y_0,\ldots,y_k)$.
We can enumerate all solutions to $\psi$ as follows:
The first input to a $\piSATe{k}$-oracle is $\psi_0(x,y_0):=\forall y_1\ldots Q_k y_k \phi(x,y_0,\ldots,y_k)$,
with free variables $x$ and $y_0$. A single $\newoutput$ instruction thus gives a solution
$x_0,y'_0$ for 
$\psi_0$, and $x_0$ can be output as a solution to $\psi$. The next solution can
be found by calling a $\newoutput$ instruction with the input 
$\psi_1(x,y_0)=\forall y_1\ldots Q_k y_k (\phi(x,y_0,\ldots,y_k)\wedge(x_0\neq x))$. We only
need to add the clauses of $(x_0\neq x)$ to the input registers of the oracle tape, and we can choose an
encoding such that this does not alter the previous input, but extends it. The output $x_1,y''_0$ of the
second oracle call gives the second output $x_1$ for the $\sigmaSATe{k+1}$ problem.
By repeating this method until an oracle call gives back the empty solution, we can enumerate the solutions
of $\psi$.
\end{proof}

Observe that, via different reductions, $\sigmaSATe{k}$ is 
complete for both, $\incP^{\SigmaP{k}}$ and for the 
presumably smaller class $\delayP_p^{\SigmaP{k}}$. This provides
additional evidence that the two reductions nicely capture 
$\incP^{\SigmaP{k}}$ and $\delayP_p^{\SigmaP{k}}$,
respectively. 
\medskip

Roughly speaking Theorem~\ref{theorem:selfreducomplete} says that any  enumeration problem whose corresponding decision problem is hard, is hard as well. An interesting question is whether there exist easy decision problems for which the corresponding enumeration problem is hard. We answer positively to this question in the next section.

\section{Completeness Results}\label{sect:completeness}

In this section, we prove completeness results under procedural-style reductions for several 
problems. We start by considering generalized satisfiability in Schaefer's framework, and
then continue by looking into minimal models of Boolean formul\ae.
Next, we study enumeration within the framework of abduction, and finally turn to the
problem of enumerating repairs of an inconsistent database.

\subsection{Generalized Satisfiability}

 In this subsection, we revisit   a classification theorem obtained for the enumeration of generalized satisfiability \cite{creignou1997generating} in our framework. 
It is convenient to first introduce some notation.

A \emph{logical relation} of arity $k$ is a relation $R\subseteq\{0,1\}^k$. 
A \emph{constraint}, $C$, is a formula $C=R(x_1,\dots,x_k)$,
where $R$ is a logical relation of arity $k$ and the $x_i$'s are variables.
An assignment $m$ of truth values to the variables \emph{satisfies}
the constraint $C$ if $\bigl(m(x_1),\dots,m(x_k)\bigr)\in R$. 
A \emph{constraint
language} $\Gamma$ is a finite set of  nontrivial logical relations.
A \emph{$\Gamma$-formula} $\phi$ is a conjunction of constraints using  only
logical relations from $\Gamma$.
A
$\Gamma$-formula $\phi$ is satisfied by an assignment $m:\var(\phi)\to\{0,1\}$
if $m$ satisfies all constraints in $\phi$.

Throughout the text we refer to different types of Boolean relations following
Schaefer's terminology, see \cite{Schaefer78,creignou1997generating}. 
We say that a constraint language is \emph{Schaefer} if every relation in $\Gamma$ is either $\horn$, $\dualhorn$, $\bijunctive$, or $\affine$. 

\medskip
\centerline{
\fbox{
\begin{tabular}{ll}
 \multicolumn{2}{l}{$\pEnumSat{\Gamma}$} \\
  {Instance}: & $\phi$ a $\Gamma$-formula\\ 
  {Output}: & All satisfying assignments of $\phi$.\\
\end{tabular}
}}
\medskip

The following theorem gives the complexity of this problem according to $\Gamma$.
\begin{theorem}\label{thm:dichotomy}
 Let $\Gamma$ be a finite constraint language. 
 If $\Gamma$ is Schaefer, then $\pEnumSat{\Gamma}$ is in $\delayP$, 
 otherwise it is $\delayP_p^{\NP}$-complete via
 $D$-reductions and $\incP^{\NP}$-complete via $I$-reductions.
\end{theorem}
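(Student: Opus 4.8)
The plan is to follow the classical Schaefer-style dichotomy argument, but in the enumeration setting, and to piece together the two halves using the machinery already developed in the paper. For the tractable side, when $\Gamma$ is Schaefer, every relation in $\Gamma$ is $\horn$, $\dualhorn$, $\bijunctive$, or $\affine$, and in each of these cases the extension problem $\pExtSol{\mathrm{SAT}(\Gamma)}$ is in $\ptime$: a partial assignment can be extended to a satisfying one iff the formula obtained by substituting the fixed values and propagating (unit propagation for Horn/dual-Horn, 2-SAT propagation for bijunctive, Gaussian elimination for affine) is still satisfiable, and satisfiability of $\Gamma$-formulas is polynomial-time decidable in all four Schaefer cases. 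By Proposition~\ref{prop:new} with $\calC = \ptime = \SigmaP{0}$, this gives $\pEnumSat{\Gamma} \in \delayP_p^{\ptime} = \delayP$.

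For the hard side, assume $\Gamma$ is not Schaefer. First, membership: $\pExist{\mathrm{SAT}(\Gamma)}$ is trivially in $\NP$, and the relation $R$ with $R(\phi) = \{\,m \mid m \models \phi\,\}$ is self-reducible, since deciding whether a partial assignment extends to a satisfying assignment is again an instance of $\pExist{\mathrm{SAT}(\Gamma)}$ after substitution (here one must be slightly careful: substituting constants may produce constraints outside $\Gamma$, but one can either allow constants in the language or argue that the fixed partial assignment can be simulated within $\Gamma$-formulas; this is a standard point in Schaefer-type arguments). Hence by Proposition~\ref{prop:new} again, $\pEnumSat{\Gamma} \in \delayP_p^{\NP} \subseteq \incP^{\NP}$. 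For hardness, by Theorem~\ref{theorem:selfreducomplete} it suffices to show that $\pExist{\mathrm{SAT}(\Gamma)}$ is $\NP$-hard, and this is exactly Schaefer's theorem: if $\Gamma$ is not Schaefer (and not in the other polynomial cases, $\onevalid$ or $\zerovalid$, which are subsumed here since those make $\pExist{\mathrm{SAT}(\Gamma)}$ trivial but do not affect enumeration hardness — one must handle them separately), then $\mathrm{SAT}(\Gamma)$ is $\NP$-complete. Combined with self-reducibility and membership in $\NP$, the last bullet of Theorem~\ref{theorem:selfreducomplete} yields $\delayP_p^{\NP}$-completeness via $D$-reductions and $\incP^{\NP}$-completeness via $I$-reductions.

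The subtle point, and the main obstacle, is the treatment of the $0$-valid and $1$-valid cases, and more generally the handling of constants. Schaefer's dichotomy for the decision problem $\mathrm{SAT}(\Gamma)$ says it is in $\ptime$ iff $\Gamma$ is Schaefer \emph{or} $\Gamma$ is $0$-valid \emph{or} $\Gamma$ is $1$-valid; otherwise it is $\NP$-complete. But for enumeration, the $0$-valid/$1$-valid cases are not tractable unless $\Gamma$ is also Schaefer — e.g.\ enumerating satisfying assignments of a $1$-valid non-Schaefer language is still hard, because one can use the trivial all-ones solution as a sentinel and reduce the genuine search to it. So the correct statement of the dichotomy requires showing that $\pExist{\mathrm{SAT}(\Gamma)}$, or a suitable padded/self-reducible variant of it, is $\NP$-hard whenever $\Gamma$ is non-Schaefer, \emph{even if} $\Gamma$ is $0$-valid or $1$-valid. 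The standard trick is to introduce fresh variables forced to $0$ and to $1$ (using the non-Schaefer relation to build "$x = 0$" and "$x = 1$" gadgets after identification of arguments and existential quantification), thereby simulating a constant-enriched language $\Gamma \cup \{[x{=}0],[x{=}1]\}$, for which Schaefer's theorem in the constant-allowed formulation gives $\NP$-hardness exactly when $\Gamma$ is non-Schaefer. One then checks that this gadget construction is an $e$-reduction (polynomial-time $\sigma$, with $\tau$ relating solutions of the padded formula to solutions of the original by projecting away the fresh variables, each original solution having boundedly many preimages), so that by Proposition~\ref{prop:reduction_E_compared_to_D} it is also a $D$- and $I$-reduction, and hardness transfers. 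I expect that verifying these gadgets exist for every non-Schaefer $\Gamma$ — essentially re-deriving the constant-producing part of Schaefer's proof — is where the real work lies; everything else is an application of Propositions~\ref{prop:new} and \ref{prop:reduction_E_compared_to_D} and Theorem~\ref{theorem:selfreducomplete}.
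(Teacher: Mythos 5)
Your overall architecture matches the paper's: tractability for Schaefer $\Gamma$ (the paper simply cites \cite{creignou1997generating}, while your argument via $\pExtSol{\mathrm{SAT}(\Gamma)}\in\ptime$ and Proposition~\ref{prop:new} is a fine substitute), membership in $\delayP_p^{\NP}$ via Proposition~\ref{prop:new}, hardness of $\pEnumSat{\Gamma\cup\{\F,\T\}}$ via Theorem~\ref{theorem:selfreducomplete}, and a final $e$-reduction to $\pEnumSat{\Gamma}$ transferred through Proposition~\ref{prop:reduction_E_compared_to_D}. You also correctly identify that the whole difficulty sits in the $\zerovalid$/$\onevalid$ non-Schaefer case. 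But the mechanism you propose for that last step has a genuine gap: you want to simulate the constants by building ``$x=0$'' and ``$x=1$'' gadgets from $\Gamma$ using identification of arguments and existential quantification. This is provably impossible exactly in the case you need it for. If $\Gamma$ is $\onevalid$, then every relation obtained from $\Gamma$ by conjunction, identification of variables and existential projection is again $\onevalid$ (the all-ones assignment always survives), so the relation $\{0\}$ is not expressible; dually for $\zerovalid$. So the gadget-based reduction $\pEnumSat{\Gamma\cup\{\F,\T\}}\leq_e\pEnumSat{\Gamma}$ cannot be carried out as you describe, and for non-$\zerovalid$, non-$\onevalid$ languages (where constants \emph{are} expressible) you would not have needed the detour in the first place, since $\SAT{\Gamma}$ itself is already $\NP$-complete there.

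The paper's way around this is different and is the real content of the hardness proof: for every non-Schaefer $\Gamma$, the \emph{implication} relation $\Imp(x,y)\equiv(x\rightarrow y)$ is expressible by a $\Gamma$-formula (a fact from \cite{creignou1997generating}). Given a $\Gamma\cup\{\F,\T\}$-formula $\varphi$, one introduces two fresh variables $f,t$, substitutes them for the variables constrained by $\F$ and $\T$ respectively, and conjoins $\bigwedge_{x\in V}\Imp(f,x)\land\bigwedge_{x\in V}\Imp(x,t)$. This does \emph{not} force $f=0$ and $t=1$; instead it guarantees that every model of $\varphi'$ either restricts to a model of $\varphi$ (when $f$ is false and $t$ is true) or is one of at most two parasitic models (all-zero, all-one). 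These finitely many ``useless'' solutions are exactly what condition 2 of Definition~\ref{definition:firstreduction} tolerates, so one still gets a valid $e$-reduction and hence, via Proposition~\ref{prop:reduction_E_compared_to_D}, the claimed $D$- and $I$-hardness. In short: replace your constant gadgets by the implication gadget and absorb the two spurious models in the relation $\tau$; as written, your reduction does not exist for the $\zerovalid$/$\onevalid$ non-Schaefer languages.
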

\begin{proof}
The polynomial cases were studied in \cite{creignou1997generating}.
Let us now consider the case where $\Gamma$ is not Schaefer.
Membership of $\pEnumSat{\Gamma}$ in $\delayP_p^{\NP}$ is clear. For the hardness, let us introduce 
$\T$ and $\F$ as the two unary constant relations $\T = \{1\}$ and $\F =
\{0\}$. 
According to Schaefer's dichotomy theorem \cite{Schaefer78}, deciding whether a 
$\Gamma\cup\{\F, \T\}$-formula is satisfiable is $\NP$-complete.
Since this problem is self-reducible, according to Theorem~\ref{theorem:selfreducomplete},
$\pEnumSat{\Gamma\cup\{\F, \T\}}$ is $\delayP_p^{\NP}$-complete via $D$-reductions. 
Let $\Imp$ denote the binary implication relation $\Imp(x,y)\equiv (x\longrightarrow y)$.
Given a $\Gamma\cup\{\F, \T\}$-formula $\varphi$, one can construct a $\Gamma\cup\{\Imp\}$ 
formula $\varphi'$ as follows. Let $V$ denote the set of variables in $\varphi$, $V_0$ the set of variables $ x$ on which lies the constraint $F(x)$, and $V_1$ the set of variables $x$  on which lies the constraint $T(x)$. Let $f$ and $t$ be two fresh variables. Given a set of variables $W$ and a variable $z$, $\varphi[W/z]$ denotes the formula obtained from $\varphi$ in replacing each occurrence of a variable from $W$ by the variable $z$. Then we set
$\varphi'=\varphi[V_0/f]\land \varphi[V_1/t]\land\bigwedge _{x\in V} \Imp(x, t)\land\bigwedge _{x\in V} \Imp(f,x)$. Clearly models of $\varphi$ coincide  with models of $\varphi'$ in which $f$ is set to false and $t$ to true, and $\varphi'$ might have two additional models, the all-zero one and the all-one one. In \cite{creignou1997generating} it is essentially proven that if $\Gamma$ is not Schaefer then the implication relation $\Imp$ can be expressed by a $\Gamma$-formula. Therefore, in this case the construction described above shows that  $\pEnumSat{\Gamma\cup\{\F, \T\}}\leq_e\pEnumSat{\Gamma}$, thus proving the  $\delayP_p^{\NP}$-completeness 
via $D$-reductions and $\incP^{\NP}$-completeness via $I$-reductions of the latter problem, 
according to Proposition \ref{prop:reduction_E_compared_to_D}. 
\end{proof}

To come back to the above discussion,  we point  out that there exist constraint languages $\Gamma$ such that the decision problem $\SAT{\Gamma}$ is in $\Pcomp$, 
while the enumeration problem  $\pEnumSat{\Gamma}$ is $\delayP_p^{\NP}$-complete, namely 0-valid or 1-valid constraint languages that are not Schaefer.

\subsection{Enumeration of Minimal Models}

Let us now turn to the complexity of enumerating minimal models of a Boolean formula. Such an enumeration   is ubiquitous in practical settings, ranging from verification, to  databases  and to knowledge representation among many others.
One can consider subset-minimality as well as cardinal-minimality. Therefore we consider the two following enumeration problems: $\circumscription^e$ denotes the problem of enumerating all
 subset-minimal models of a boolean formula and $\cardminsat^e$ denotes the problem of enumerating all
cardinality-minimal models of a boolean formula. We pinpoint the complexity of these two enumeration problems.

Observe that contrary to the satisfiability problems we discussed above, enumeration of minimal models is not obviously self-reducible. In particular, 
the problem
$\circumscription^e$ is very unlikely to be
self-reducible: The problem of deciding if a partial
truth assignment can be extended to a subset minimal model
is $\SigmaP{2}$-complete \cite{EG1995}, while deciding the
existence of a minimal model is clearly $\NP$-complete.
Thus $\circumscription^e$ is not self-reducible unless the polynomial
hierarchy collapses to the first level. 

\begin{theorem}\label{prop:circumscription}
  $\circumscription^e$ is $\incP^{\NP}$-complete via $I$-reductions. 
\end{theorem}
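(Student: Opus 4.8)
The plan is to prove membership in $\incP^{\NP}$ and $\incP^{\NP}$-hardness via $I$-reductions separately. For hardness, the natural route is to invoke Theorem~\ref{theorem:selfreducomplete}: the decision problem $\pExist{R}$ for the relation $R$ underlying $\circumscription^e$ is just ``does a formula have a (minimal) model'', which is exactly satisfiability, hence $\NP$-hard. Thus the first bullet of Theorem~\ref{theorem:selfreducomplete} immediately gives $\incP^{\NP}$-hardness via $I$-reductions, with no self-reducibility needed. (Alternatively, one could give a direct $I$-reduction from $\SATe$, which is $\incP^{\NP}$-complete by Corollary~\ref{cor:complexity_sat}, but appealing to the theorem is cleaner.)

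For membership, since $\circumscription^e$ is not self-reducible, I cannot use Proposition~\ref{prop:selfreducible}; instead I would use the characterization of $\incP^{\NP}$ via $\pAnotherSol{R}$ (first bullet of Proposition~\ref{prop:characterizationIncP}), or equivalently via enumeration self-reducibility (second bullet), since deciding $\pExistAnotherSol{R}$ for minimal models — given a formula $\varphi$ and a set $Y$ of already-found minimal models, is there another minimal model outside $Y$? — is in $\NP$: guess an assignment $m$, verify $m\models\varphi$, verify $m\notin Y$ (as $|Y|$ is part of the input), and verify minimality. The last check, ``$m$ is a subset-minimal model'', is the delicate point, since in general it is $\coNP$; but for enumeration we only need $\pExistAnotherSol{R}\in\DeltaP{1}=\ptime^{\NP}$, which suffices for $\incP^{\NP}$ by Proposition~\ref{prop:characterizationIncP}. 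Concretely, I would guess a minimal model directly by the standard trick: guess $m$, then guess a witness minimization — but cleaner is to simply argue $\pAnotherSol{R}$ is solvable in polynomial time with an $\NP$-oracle, by first deciding with one $\NP$-query whether any model outside $Y$ exists, and then, if so, constructing one model and minimizing it greedily: starting from any model $m\notin Y$ (obtainable bit-by-bit using an $\NP$-oracle for ``does $\varphi$ have a model extending this partial assignment and lying outside $Y$''), repeatedly try to flip a $1$ to $0$ while remaining a model, each such test being a single $\NP$ query (``is there a model below the current one''). Care is needed because greedily minimizing an arbitrary model outside $Y$ might land back inside $Y$; to avoid this, one should instead enumerate minimal models in a fixed order (e.g. lexicographic on the support), asking at each stage ``is there a minimal model lexicographically after all previously output ones'', which is again an $\NP$ question: ``$\exists m$ model, $m$ lex-after the threshold, and $\neg\exists m'<m$ model'' — the inner $\neg\exists$ pushes this to $\SigmaP{2}$, so this particular phrasing is too strong.

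The main obstacle, therefore, is arranging the membership argument so that the next minimal model can be produced with only a $\ptime^{\NP}$ computation (incremental delay, using the previously output solutions $Y$ as part of the input). The right approach is the priority-queue/greedy-minimization scheme: given $Y$, use an $\NP$-oracle to test whether $\varphi$ has any model at all outside $\bigcup_{m\in Y}\{m'\mid m'\supseteq m\}$ is the wrong set — rather, one observes that every model of $\varphi$ dominates some minimal model, so it suffices to (i) ask whether $\varphi$ has a model $m$ with $m\notin Y$, (ii) if yes, construct such an $m$ bit-by-bit with $\NP$-queries, (iii) greedily minimize $m$ by flipping $1$s to $0$s subject to remaining a model, obtaining a minimal model $m^\ast$, and (iv) if $m^\ast\in Y$, repeat having additionally forbidden $m$; since each repetition strictly shrinks the set of models not yet accounted for and the number of models is finite, and since $|Y|$ already-output solutions give incremental delay budget, this terminates within $\poly(|\varphi|,|Y|)$ $\NP$-queries — but this needs the extra observation that if $m^\ast\in Y$ we can cheaply jump to a model strictly incomparable to $m^\ast$. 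A clean way to close this gap is to note that enumeration self-reducibility holds: $\pExistAnotherSol{R}$ for $\circumscription^e$ is in $\NP$ because ``does there exist a minimal model of $\varphi$ not in $Y$'' is equivalent to ``does $\varphi\wedge\bigwedge_{m\in Y}(\text{at least one }1\text{ of }m\text{ is set to }0)$ have a model'' — any model of this formula that is minimal among models of $\varphi$ is automatically outside $Y$, and such a minimal model exists iff the formula is satisfiable; hence one $\NP$-query decides $\pExistAnotherSol{R}$, and then Proposition~\ref{prop:characterizationIncP} (second bullet, with $k=1$) yields $\circumscription^e\in\incP^{\NP}$. Combining this with the hardness from Theorem~\ref{theorem:selfreducomplete} completes the proof.
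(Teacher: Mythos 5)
Your hardness argument is exactly the paper's: $\pExist{R}$ for $\circumscription^e$ is propositional satisfiability, which is \NP-complete, so the first two bullets of Theorem~\ref{theorem:selfreducomplete} give $\incP^{\NP}$-hardness via $I$-reductions with no self-reducibility needed. For membership your route differs in packaging but not in substance. The paper gives a direct $\delayP^{\NP}$ algorithm: it keeps in the (unbounded) oracle registers the growing formula $\phi\wedge\bigwedge_{x'}\bigl((x_1<x'_1)\vee\dots\vee(x_n<x'_n)\bigr)$, one conjunct per minimal model $x'$ already output, and greedily extracts a minimal model of the current formula with an \NP-oracle. You instead route through the $\pAnotherSol{R}$/$\pExistAnotherSol{R}$ characterization of $\incP^{\NP}$ (Proposition~\ref{prop:characterizationIncP}), using the very same blocking clauses. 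Both rest on the same combinatorial fact: a model $a$ of $\varphi$ satisfying $\neg(a\geq m)$ for all already-output minimal models $m$ has every minimal model of $\varphi$ below it outside $Y$, since $a^\ast\leq a$ and $a^\ast\geq m$ would force $a\geq m$. The paper's version yields the slightly stronger conclusion $\circumscription^e\in\delayP^{\NP}$; yours yields $\incP^{\NP}$ directly, which is all the theorem needs.

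Two blemishes, one of which is a genuine (though repairable) gap. First, your final step invokes the second bullet of Proposition~\ref{prop:characterizationIncP}, which requires $R$ to be \emph{enumeration self-reducible}, i.e.\ $\pAnotherSol{R}\leq_T\pExistAnotherSol{R}$. You assert this but never establish it, and it is not obvious for minimal models: a Turing reduction may only query instances $(\varphi',Y')$ with $Y'\subseteq R(\varphi')$, and conditioning $\varphi$ on a partial assignment changes which models are minimal. You do not need this detour: your own construction --- one \NP-query to test satisfiability of $\varphi\wedge\bigwedge_{m\in Y}\neg(x\geq m)$, bit-by-bit extraction of a model, then greedy minimization with respect to $\varphi$ using $n$ further \NP-queries, which by the observation above cannot land in $Y$ --- shows that $\pAnotherSol{R}$ itself is solvable in polynomial time in $|\varphi|+|Y|$ with an \NP-oracle, and the \emph{first} bullet of Proposition~\ref{prop:characterizationIncP} then gives membership without any self-reducibility hypothesis. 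Second, a notational slip: $\ptime^{\NP}$ is $\DeltaP{2}$, not $\DeltaP{1}$; this does not affect the argument, since a single \NP-query certainly fits within the oracle budget the proposition allows.
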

\begin{proof}
 Hardness follows immediately from Theorem~\ref{theorem:selfreducomplete},
 since deciding the existence of a (minimal) model of a Boolean formula is
 $\NP$-complete.

 To obtain membership, we show that $\circumscription^e$ can be even enumerated
 in $\delayP^\NP$. 
 Consider a boolean formula $\phi(x_1,\ldots,x_n)$, and assume that we want to
 enumerate the minimal models of $\phi$. We start by copying $\phi$ to the oracle
 registers. 
 A very first minimal model $x'$ can be achieved by a greedy algorithm using an $\NP$
 oracle. Next we extend $\phi$ in the oracle registers to
 $\phi'=\phi\wedge((x_1<x_1')\vee\ldots\vee(x_n<x_n'))$ and again get a minimal model
 $x''$ for $\phi'$ using a greedy algorithm with an $\NP$-oracle. This is also a minimal
 model for $\phi$; repeatedly extending $\phi$ and then computing a minimal model via
 a greedy algorithm achieves the membership $\circumscription^e\in\delayP^\NP$
 and thus also $\circumscription^e\in\incP^\NP$.
\end{proof}

What makes the  completeness result obtained for  $\circumscription^e$ surprising is
the discrepancy from the behaviour of the counting variant of the problem:
The counting variant of $\circumscription^e$ is a prototypical
$\#\cdot\coNP$-complete problem \cite{DHK2005}, and thus of the
same hardness as the counting variant of $\piSATe{1}$. However,
for enumeration we have that $\circumscription^e$ shows the same 
complexity as $\sigmaSATe{1}$, which is considered to be lower
than that of $\piSATe{1}$.

\begin{theorem}\label{theo:cardminsat}
  $\cardminsat^e$ is  $\delayP_p^{\NP}$-complete via
 $D$-reductions and $\incP^{\NP}$-complete via
 $I$-reductions. 
\end{theorem}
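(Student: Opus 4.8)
The plan is to follow the same two-part strategy as in the proof of Theorem~\ref{prop:circumscription}: obtain hardness from Theorem~\ref{theorem:selfreducomplete}, and obtain membership by bounding the complexity of the associated extension problem and then invoking Proposition~\ref{prop:new}. Let $R$ be the binary relation with $\pEnum{R}=\cardminsat^e$, i.e.\ $R$ pairs a Boolean formula $\phi$ (on $n$ variables, with solutions encoded as length-$n$ assignments) with its cardinality-minimal models. Since the models of $\phi$ range over finitely many variables, $\phi$ has a cardinality-minimal model iff it has a model at all, so $\pExist{R}$ is exactly $\mathrm{SAT}$ and hence $\NP$-complete. By the first two items of Theorem~\ref{theorem:selfreducomplete} (with $k=1$), it follows immediately that $\cardminsat^e$ is $\delayP_p^{\NP}$-hard via $D$-reductions and $\incP^{\NP}$-hard via $I$-reductions. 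Note that we obtain hardness from these two items only; we neither need nor claim self-reducibility of $R$ (and, as for $\circumscription^e$, it seems unlikely to hold).

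For membership it suffices, since $\delayP_p^{\NP}\subseteq\delayP^{\NP}\subseteq\incP^{\NP}$ by Theorem~\ref{theorem:deltacollapse} together with the obvious inclusions, to show $\cardminsat^e\in\delayP_p^{\NP}$. By Proposition~\ref{prop:new} applied with $k=2$ and $\calC=\DeltaP{2}$, this in turn reduces to proving $\pExtSol{R}\in\DeltaP{2}$, because $\delayP_p^{\DeltaP{2}}=\delayP_p^{\SigmaP{1}}=\delayP_p^{\NP}$ as recorded after Theorem~\ref{theorem:deltacollapse}. So, given an instance $(\phi,y)$ of $\pExtSol{R}$ with $y$ a prefix fixing the first $j$ variables of $\phi$, I would run a deterministic polynomial-time procedure with an $\NP$-oracle as follows: first test whether $\phi$ is satisfiable and reject if not; then compute the minimum number $c$ of variables set to true over all models of $\phi$ by binary search over $t\in\{0,\dots,n\}$, using the $\NP$-queries ``does $\phi$ have a model in which at most $t$ variables are true?''; finally, issue the single $\NP$-query ``does $y$ extend to a total assignment satisfying $\phi$ in which exactly $c$ variables are true?'' and accept iff the answer is yes (this query already accounts correctly for the number of $1$s already occurring in $y$). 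The whole computation makes $O(\log n)$ oracle calls, so $\pExtSol{R}\in\ptime^{\NP}=\DeltaP{2}$, and Proposition~\ref{prop:new} then gives $\cardminsat^e\in\delayP_p^{\DeltaP{2}}=\delayP_p^{\NP}\subseteq\incP^{\NP}$, completing the proof.

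The main obstacle is purely on the membership side, and it is conceptual rather than technical: one must observe that, in contrast with the extension problem for \emph{subset}-minimal models used in the proof of Theorem~\ref{prop:circumscription} (which is $\SigmaP{2}$-complete and therefore only places $\circumscription^e$ in $\delayP^{\NP}$), the extension problem for \emph{cardinality}-minimal models drops to $\DeltaP{2}$, since the minimum cardinality is a single integer that can be pinned down by binary search with an $\NP$-oracle. This is exactly what lets $\cardminsat^e$ land in the smaller class $\delayP_p^{\NP}$ and thus match the claimed $\delayP_p^{\NP}$-completeness; everything else — encoding cardinality constraints as $\NP$-queries, handling the prefix $y$, and the identity $\delayP_p^{\DeltaP{2}}=\delayP_p^{\NP}$ — is routine.
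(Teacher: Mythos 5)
Your proposal is correct and follows essentially the same route as the paper: hardness via Theorem~\ref{theorem:selfreducomplete} from the $\NP$-completeness of deciding the existence of a (cardinality-minimal) model, and membership by first pinning down the minimum cardinality with polynomially many $\NP$-oracle calls and then enumerating via the standard prefix-extension (binary search tree) method. Your version merely makes explicit what the paper leaves implicit, namely that this amounts to showing $\pExtSol{R}\in\DeltaP{2}$ and invoking Proposition~\ref{prop:new} together with $\delayP_p^{\DeltaP{2}}=\delayP_p^{\NP}$.
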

\begin{proof}
Hardness follows again immediately from Theorem~\ref{theorem:selfreducomplete}.

Membership of $\cardminsat^e$ in $\delayP_p^{\NP}$ (and a fortiori in  $\incP^\NP$)
follows from the fact that polynomial time with access to an $\NP$-oracle gives us
the possibility to compute first the minimal cardinality of models. Having this
information we can enumerate all cardinality-minimal models by the standard binary
search tree with an $\NP$-oracle that enumerates them in lexicographical order.
\end{proof}

%

\subsection{Model-based Diagnosis}
\label{sect:diagnosis}

The problem of model-based diagnosis \cite{Reiter87} is ubiquitous in practical settings. A usual formulation of this problem is as follows: given a system description composed of some components, and an observation inconsistent with the system description, the goal is to identify a cardinality-minimal set of components which, if declared faulty and thus put out of the system description, 
restores the consistency between the model and the observation. In recent years the practical enumeration of minimum-size diagnoses has gained interest through translations to MaxSAT instances and the use of efficient MaxSAT solvers (see e.g. \cite{Marques-SilvaJI15}). 

Formally a system description can be represented by a  consistent set of propositional formul\ae, each encoding the normal behavior of a component. The observation is represented by an additional  formula that is inconsistent with the former set of formul\ae. Thus we get the following enumeration problem.

\medskip
\fbox{
\begin{tabular}{ll}
 \multicolumn{2}{l}{$\mbd$} \\
  {Instance}: & ${\cal B}=\{\phi_1, \ldots, \phi_n\}$ a consistent set of formul\ae,\\& $\mu$ a satisfiable formula inconsistent with $\calB$\\ 
  {Output}: & All cardinality-maximal subsets ${\cal B'}\subseteq {\cal B}$ such that \\
 & $\displaystyle\bigwedge_{\phi\in {\cal B'}}\phi\land \mu$ is satisfiable.\\
\end{tabular}
}
\medskip

Observe that for this enumeration problem, the corresponding decision problem $\pExist{R}$ is trivial because there always is a solution $\calB'$. Nevertheless we prove hardness by an $e$-reduction from a previously identified hard problem. 
So contrary to all results above, we obtain here a hardness result for an enumeration problem which does not rely on hardness of any related decision problem.

The following theorem settles the complexity of $\mbd$ in our hierarchy.

\begin{theorem}\label{prop:diagnosis}
 $\mbd$ is  $\delayP_p^{\NP}$-complete via
 $D$-reductions and $\incP^{\NP}$-complete via
 $I$-reductions.
\end{theorem}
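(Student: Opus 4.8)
The plan is to prove membership and hardness separately, following the same pattern as for $\cardminsat^e$ (Theorem~\ref{theo:cardminsat}), since a diagnosis is essentially a cardinality-minimal set of components to remove, i.e.\ a cardinality-maximal set of components to keep. For \emph{membership} in $\delayP_p^{\NP}$ I would first observe that with an $\NP$-oracle we can, in polynomial time, determine the maximum cardinality $m$ of a subset $\calB'\subseteq\calB$ with $\bigwedge_{\phi\in\calB'}\phi\wedge\mu$ satisfiable (binary search on the size, each test being an $\NP$ question ``is there a subset of size $\geq s$ that is jointly satisfiable with $\mu$?''). Having $m$, we enumerate all cardinality-maximal consistent subsets $\calB'$ in some fixed order (say lexicographic on the characteristic vector over $\{\phi_1,\dots,\phi_n\}$) by the standard bit-by-bit extension algorithm: the partial-candidate extension test ``can this partial choice be completed to a consistent subset of size exactly $m$?'' is again in $\NP$. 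By Proposition~\ref{prop:new} (with $\calC=\NP$) this yields $\mbd\in\delayP_p^{\NP}$, and a fortiori $\mbd\in\incP^{\NP}$. Here $\pExtSol{R}$ must be set up with respect to an encoding of solutions as length-$n$ bit strings, which is routine.

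For \emph{hardness}, the key point stressed in the paper is that $\pExist{R}$ is trivial (a diagnosis always exists, e.g.\ $\calB'=\emptyset$), so Theorem~\ref{theorem:selfreducomplete} does not apply directly; instead I would give an $e$-reduction from an already-established $\delayP_p^{\NP}$-complete (under $D$-reductions) / $\incP^{\NP}$-complete (under $I$-reductions) problem, the natural choice being $\cardminsat^e$ from Theorem~\ref{theo:cardminsat}. Given a Boolean formula $\varphi(x_1,\dots,x_n)$ whose cardinality-minimal models we want, I would build an $\mbd$-instance whose component formulas force a correspondence between ``components kept'' and ``variables set to $0$'': take $\calB=\{\neg x_1,\dots,\neg x_n\}$ (each $\phi_i := \neg x_i$ encodes the ``normal'' value $0$ of $x_i$) together with $\mu := \varphi$. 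Then $\calB$ is consistent (the all-$0$ assignment), $\mu$ is satisfiable, and $\mu$ is inconsistent with $\bigwedge\calB$ exactly when $\varphi$ has no model with all variables $0$; if $\varphi$ happens to be satisfied by the all-$0$ assignment one first adds a fresh variable and a clause to destroy this corner case, as in the proof of Theorem~\ref{thm:dichotomy}. A subset $\calB'\subseteq\calB$ is consistent with $\mu$ iff $\varphi$ has a model in which exactly the variables $x_i$ with $\neg x_i\in\calB'$ are $0$; hence the cardinality-maximal such $\calB'$ correspond precisely to the cardinality-\emph{minimal} models of $\varphi$. The relation $\tau$ of the $e$-reduction maps each such $\calB'$ back to the (essentially unique) corresponding minimal model, so $\tau(x,y,-)$ is a singleton and $\tau(x,-,z)$ is a singleton, trivially satisfying the polynomial-delay and polynomial-size conditions of Definition~\ref{definition:firstreduction}; $\sigma$ is the polynomial-time map just described. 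By Proposition~\ref{prop:reduction_E_compared_to_D} this $e$-reduction is both a $D$- and an $I$-reduction, so completeness transfers from $\cardminsat^e$.

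\textbf{Main obstacle.} The technical work is all on the hardness side, and it is mostly bookkeeping rather than depth: getting the correspondence between consistent sub-systems and models of $\varphi$ exactly right (in particular handling the degenerate case where $\varphi$ is all-$0$-satisfiable so that $\mu$ would not be inconsistent with $\calB$, and making sure the padding does not create spurious extra maximal subsets), and verifying carefully that cardinality-maximality of $\calB'$ matches cardinality-minimality of the corresponding model. One must also double-check the definitional requirement that $\calB$ be \emph{consistent} and $\mu$ be \emph{satisfiable and inconsistent with $\calB$} — all of which the construction above meets after the padding fix. Everything else (the $\NP$-oracle computations for membership, the verification of the $e$-reduction conditions) is routine.
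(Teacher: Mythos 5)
Your proposal is correct and follows essentially the same route as the paper: membership by first computing the maximal cardinality with an \NP-oracle and then enumerating all cardinality-maximal consistent subsets in lexicographic order (via Proposition~\ref{prop:new}), and hardness by an $e$-reduction from $\cardminsat^e$ under the same not-0-valid assumption, the paper's gadget $\calB=\{(\phi\lor x_0)\land(\neg x_i\lor x_0)\}_{i}$ with $\mu=\neg x_0$ being logically equivalent to your $\calB=\{\neg x_1,\dots,\neg x_n\}$, $\mu=\varphi$ once one conditions on $\mu$. The only nit is that your intermediate claim should say ``a model in which \emph{at least} the variables $x_i$ with $\neg x_i\in\calB'$ are $0$''; the ``exactly'' version holds only for cardinality-maximal $\calB'$ (where the witnessing model is forced to have minimum weight), which is precisely what the bijection needs.
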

\begin{proof}
Membership of $\mbd$ in $\delayP_p^{\NP}$ (and a fortiori in  $\incP^\NP$) follows from the fact that polynomial time with access to an $\NP$-oracle gives us the possibility to compute first the maximal cardinality of subsets of ${\cal B}$ that are consistent with $\mu$.  Having this information we can enumerate all such cardinality-maximal subsets by the standard binary search tree with an $\NP$-oracle  that enumerates them in lexicographical order.

For hardness we show that $\cardminsat^e\le_e\mbd$, thus proving that $\mbd$ is 
$\delayP_p^{\NP}$-hard via
$D$-reductions and $\incP^{\NP}$-hard via
$I$-reductions, according to Proposition~\ref{prop:reduction_E_compared_to_D} and
Proposition~\ref{prop:circumscription}. 
Let $\phi$ be a formula given as an instance of $\cardminsat^e$, and assume
$\var(\phi)=\{x_1, \ldots, x_n\}$. Without loss of generality suppose in addition
that $\phi$ is not 0-valid, i.e., is not satisfied by the all-zero assignment.
We define a corresponding instance $\sigma(\phi)$ of $\mbd$ as follows:
Let ${\cal B}$ consist of the $n$ formul\ae\  $\phi'_i=(\phi\lor x_0)\land (\neg x_i\lor x_0)$ for $i=1,\ldots, n$,  and let $\mu=\neg x_0$, where $x_0$ is a fresh variable. Observe that on the one hand the set ${\cal B}$ is consistent (set $x_0$ to true), and on the other hand it is inconsistent with $\mu$ (since by assumption $\phi$ is not 0-valid). It is easy to see that there is a bijection $\pi$ mapping the cardinality-minimal models $x'$ of $\phi$ to cardinality-maximal subsets of ${\cal B}$ that are consistent with $\mu$. Thus we can define a relation $\tau$ with $\tau(\sigma(\phi),x',-)=\pi(x')$,
showing that  $\cardminsat^e\le_e\mbd$.
\end{proof}

\subsection{Abduction}

We now provide an example of an enumeration problem hard in a higher level of the hierarchy, namely abduction. Abduction is a well-studied form a non-monotonic reasoning. Given a certain consistent knowledge, abductive reasoning is used to generate explanations for observed manifestations. We consider here propositional abduction, in which the knowledge base as well as the manifestations are represented by propositional formul\ae\  \cite{EG1995}. 

The enumeration problem can be formalized as follows:

\medskip
\fbox{
\begin{tabular}{ll}
 \multicolumn{2}{l}{$\abd$} \\
  {Instance}: & $\Gamma$ a set of formul\ae\ (knowledge),\\& $H$ a set of literals (hypotheses),\\& $q$ a variable that does not occur in $H$ (manifestation).\\ 
  {Output}: & All sets $E\subseteq H$ such that $\Gamma\land E$ is satisfiable and\\&
  $\Gamma\land E\models q$ (equivalently, $\Gamma\land E\land\neg q$ is unsatisfiable). \\
\end{tabular}
}
\medskip

The decision problem $\abddec$ corresponding to this variant of abduction is known to be $\SigmaP{2}$-complete, see, e.g., \cite{CZ06}. Based on this we obtain hardness of the  enumeration problem. Though this is maybe not very surprising, we mention this problem here as a first example of a problem whose enumeration is even harder than those we studied before, at least under the assumption that the polynomial hierarchy does not collapse.

\begin{theorem}
$\abd$ is  $\delayP_p^{\SigmaP{2}}$-complete via
$D$-reductions and $\incP^{\SigmaP{2}}$-complete via
$I$-reductions.
\end{theorem}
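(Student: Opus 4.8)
The plan is to establish membership and hardness separately, mirroring the structure of the preceding proofs for $\circumscription^e$ and $\mbd$.

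For membership, the goal is to show $\abd \in \delayP_p^{\SigmaP{2}}$, which by $\delayP_p^{\SigmaP{2}}\subseteq\incP^{\SigmaP{2}}$ also gives membership in the incremental class. The key observation is that the relevant extension problem $\pExtSol{R}$ for abduction is in $\SigmaP{2}$: given a partial assignment of membership/non-membership decisions for the hypotheses in $H$ (built up bit by bit in some fixed order on $H$), we must decide whether it can be completed to a set $E\subseteq H$ with $\Gamma\land E$ satisfiable and $\Gamma\land E\land\neg q$ unsatisfiable. This is an $\exists\forall$ question — guess the remaining choices (and a satisfying assignment witnessing consistency of $\Gamma\land E$), then use a $\coNP$ check that $\Gamma\land E\land\neg q$ is unsatisfiable — hence in $\SigmaP{2}$. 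Applying Proposition~\ref{prop:new} with $\calC = \SigmaP{2}$ yields $\pEnum{R}\in\delayP_p^{\SigmaP{2}}$. I would be careful to fix the encoding of solutions as $|H|$-bit vectors so that the standard bit-by-bit enumeration algorithm of Proposition~\ref{prop:new} applies directly.

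For hardness, since $\abddec$ — the decision problem asking whether a solution $E$ exists — is $\SigmaP{2}$-complete (e.g.\ \cite{CZ06}), and this problem is exactly $\pExist{R}$ for the relation $R$ underlying $\abd$, Theorem~\ref{theorem:selfreducomplete} gives $\delayP_p^{\SigmaP{2}}$-hardness via $D$-reductions and $\incP^{\SigmaP{2}}$-hardness via $I$-reductions immediately, provided $R$ is self-reducible. Here the main subtlety arises: unlike plain satisfiability, abduction is not obviously self-reducible, since one would need $\pExtSol{R}\leq_T\pExist{R}$, and the extension problem is $\SigmaP{2}$-complete while $\pExist{R}$ is only $\SigmaP{2}$-complete as well — so actually the Turing reduction is trivial here because both are $\SigmaP{2}$-complete. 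More precisely, since $\pExist{R}$ is $\SigmaP{2}$-complete and $\pExtSol{R}\in\SigmaP{2}$ (shown above), we have $\pExtSol{R}\leq_m\pExist{R}$, hence $R$ is self-reducible, and the third bullet of Theorem~\ref{theorem:selfreducomplete} applies to give completeness, not just hardness.

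The step I expect to be the main obstacle is pinning down the exact complexity of $\pExtSol{R}$ for abduction and confirming it lies in $\SigmaP{2}$ rather than higher: one must check that "the partial choice extends to a solution" really is expressible as a single $\exists\forall$ sentence, in particular that verifying both $\Gamma\land E$ is satisfiable \emph{and} $\Gamma\land E\models q$ can be folded into the same quantifier block (the existential guess of $E$ and of a model of $\Gamma\land E$, followed by a universal/$\coNP$ check of entailment), without needing an extra alternation. Once this is established, both membership (via Proposition~\ref{prop:new}) and self-reducibility (via $\pExtSol{R}\leq_m\pExist{R}$, using $\SigmaP{2}$-completeness of the latter) fall out, and Theorem~\ref{theorem:selfreducomplete} closes the argument.
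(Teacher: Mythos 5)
Your proof is correct and follows the same overall strategy as the paper: both arguments reduce the theorem to the last bullet of Theorem~\ref{theorem:selfreducomplete}, using the known $\SigmaP{2}$-completeness of $\abddec$. The one place where you genuinely diverge is in how self-reducibility of the underlying relation $R$ is established. The paper gives an explicit reduction: a partial choice $E\subseteq H$ extends to an explanation iff the modified instance $(\Gamma\land E, H', q)$ (with $H'$ the still-undecided hypotheses) is a positive instance of $\abddec$, which is a direct syntactic self-reduction of abduction to itself. You instead observe that $\pExtSol{R}\in\SigmaP{2}$ (folding the guess of the completion of $E$ and of a model of $\Gamma\land E$ into a single existential block, followed by a $\coNP$ entailment check) and that $\pExist{R}$ is $\SigmaP{2}$-complete, so $\pExtSol{R}\leq_m\pExist{R}$ holds generically and self-reducibility in the sense of Definition~\ref{def:self-reducible} is automatic. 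Both routes are valid; yours buys generality (any relation whose extension problem lies in the same level of the hierarchy as its $\SigmaP{k}$-complete existence problem is self-reducible, with no construction needed), while the paper's explicit reduction is more concrete and does not require first verifying the upper bound on $\pExtSol{R}$. Your separate membership argument via Proposition~\ref{prop:new} is also fine, though redundant once the third bullet of Theorem~\ref{theorem:selfreducomplete} applies. One cosmetic slip: you briefly call $\pExtSol{R}$ ``$\SigmaP{2}$-complete'' when you have only shown (and only need) membership in $\SigmaP{2}$; the logical chain you actually rely on is sound.
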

\begin{proof}
As we have mentioned above, the associated decision problem is $\SigmaP{2}$-complete. Also, note that the problem is self-reducible in the sense of Definition~\ref{def:self-reducible}.
Given an instance $(\Gamma,H,q)$ of $\abd$ and a partial solution $E\subseteq H$, $E$ can be extended to an explanation iff $(\Gamma\land E,H,q)\in\abddec$.
Therefore, according to the last statement of Theorem~\ref{theorem:selfreducomplete}, the theorem follows.
\end{proof}


\subsection{Database Repairs}

Finally, we provide an example of an enumeration problem from the database domain, more specifically the problem of enumerating all ``repairs'' of an inconsistent database. 
In this problem, we are given a {\em database instance\/} $D$ 
(i.e, simply a finite structure given by a domain and a collection of relations; for our purposes, it is most convenient to represent $D$ as a set of
ground atoms) and a set $C$ of {\em integrity constraints\/}  (i.e., in general, a 
set of formul\ae\  from some fragment of first-order logic).
Now suppose that $D$ is {\em inconsistent\/}, i.e., $ D \not\models C$ holds. 
Then we are interested in enumerating the possible repairs. A {\em repair\/}
is a database instance $I$ which satisfies $C$ and which \emph{differs minimally} from $D$.
The term ``differs minimally'' can be defined in several ways.
The most common approach is the one introduced by 
Arenas et al.~\cite{DBLP:conf/pods/ArenasBC99}: here, 
repairs are obtained from the original database by inserting or deleting 
ground atoms to or from $D$, respectively. Then ``minimal difference'' means 
that the symmetric set difference $\Delta (D,I)$ is minimal w.r.t.\ subset inclusion. 
More formally, let 
$\Delta (D,I) = (D \setminus I) \cup (I \setminus D)$. 
Then  $I$ is a repair of  $D$ w.r.t.\ $C$ if $I \models C$ 
(i.e.,  $I$ is consistent w.r.t.\ $C$)
and 
there does not exist an instance $I'$ that satisfies $C$ and 
$\Delta (D,I') \subsetneq \Delta (D,I)$.

Many different fragments of first-order logic have been studied for the definition of 
the integrity constraints $C$. We only look at one such fragment here, namely 
{\em equality generating dependencies\/} (EGDs, for short).
EGDs are a classical formalism for defining database constraints \cite{AHV}.
They generalize functional dependencies (FDs), which in turn generalize 
key dependencies (KDs). Formally, they are defined as formul\ae\ of the form 
\[
    \forall \vec{x}\big(\phi(\vec{x}) \rightarrow x_i = x_j\big),
\]
where $\phi$ is a conjunction of atoms over the predicate symbols occurring in $D$, 
$\vec{x}$ denotes the variables occurring in these atoms, and 
$x_i, x_j$ are two variables from $\vec{x}$. The meaning of such an EGD is that, 
for every variable binding 
$\lambda$ on the variables $\vec{x}$, such that applying $\lambda$ to the atoms 
in $\phi(\vec{x})$ sends all atoms of $\phi(\vec{x})$ into the database instance $D$, it must
be the case that $\lambda(x_i) = \lambda(x_j)$ holds. 
Following the usual notational convention in the database literature, we will 
omit the quantifiers and denote EGDs of the above form simply 
as $\phi(\vec{x}) \rightarrow x_i = x_j$ with the 
understanding that all variables occurring in an EGD are universally quantified.

The enumeration problem considered here is defined as follows:

\medskip
\fbox{
\begin{tabular}{lll}
 \multicolumn{2}{l}{$\repair$} \\
  {Instance}: & Database instance $D$, set $C$ of EGDs\\ 
  {Output}: & $D$ \hskip115pt if $D \models C$ and\\
            & All repairs $I$ of $D$ w.r.t.\ $C$ \hskip10pt if $D \not\models C$.\\
\end{tabular}
}
\medskip

As in Section \ref{sect:diagnosis}, we observe that for this enumeration problem, the corresponding decision problem $\pExist{R}$ is trivial because for every inconsistent database instance, the existence of a repair is guaranteed.
Moreover, it is worth mentioning that a different variant of repair enumeration has been recently studied in 
\cite{DBLP:conf/pods/LivshitsK17}, where a different type of integrity constraints was considered and prereferences were taken into account.

Several decision problems have been studied in the context of repairs. 
The most intensively studied one is the {\sc Consistent Query Answering} problem, i.e., 
given a database instance $D$, a set $C$ of integrity constraints and a boolean query $Q$, 
is the query $Q$ true in every repair $I$ of $D$, w.r.t.\ $C$. 
For EGDs as constraint language and conjunctive queries as query language, 
this problem has been 
shown to be $\PiP{2}$-complete \cite{DBLP:conf/icdt/ArmingPS16}. 
Another decision problem
is the {\sc Repair Checking} problem, i.e., 
given database instances $D$ and $I$, and a set $C$ of integrity constraints, 
is $I$ a repair of $D$ w.r.t.\ $C$. 
In other words, this is the {\sc Check} problem corresponding to the 
enumeration problem $\repair$.
For EGDs, this decision problem is
{\sf DP}-complete \cite{DBLP:conf/icdt/ArmingPS16}. 
Definitely a bit more surprising than in case of $\abd$, we get 
$\incP^{\SigmaP{2}}$-completeness also for the $\repair$ problem. 
In particular, the hardness-proof is more involved than for 
the previously studied problems.

\begin{theorem}
$\repair$ is  $\incP^{\SigmaP{2}}$-complete via $I$-reductions.
\end{theorem}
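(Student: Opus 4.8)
The plan is to establish $\incP^{\SigmaP{2}}$-completeness of $\repair$ in the usual two steps: membership, and hardness via an $I$-reduction (in fact, we will aim for an $e$-reduction, which by Proposition~\ref{prop:reduction_E_compared_to_D} suffices).

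For \emph{membership}, first recall that the {\sc Repair Checking} problem for EGDs is in {\sf DP}~\cite{DBLP:conf/icdt/ArmingPS16}, hence $\pCheck{R}\in\DeltaP{2}$. A repair $I$ of $D$ w.r.t.\ $C$ can only insert or delete ground atoms built from the active domain of $D$ and from a polynomially bounded set of fresh values needed to satisfy the EGDs; but in fact for EGDs no new atoms ever help to restore consistency, so every repair is a \emph{subset} of $D$ (deletions only), which keeps candidate solutions polynomially sized and gives a naturally self-reducible-like structure. However, because $\pExtSol{R}$ — deciding whether a partial (sub)instance can be completed to a repair — is $\SigmaP{2}$-complete (this is the analogue of the circumscription situation: being a repair requires a $\coNP$-style minimality check nested under an existential guess), we cannot invoke Proposition~\ref{prop:new} directly for $\delayP_p$. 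Instead I would argue membership in $\incP^{\SigmaP{2}}$ through Proposition~\ref{prop:characterizationIncP}: exhibit a $\ptime^{\SigmaP{2}}$ algorithm for $\pAnotherSol{R}$. Given $D$, $C$ and a set $Y$ of already-output repairs, one greedily constructs a maximal consistent subset of $D$ avoiding being a superset of any "blocked" configuration, using the $\SigmaP{2}$ oracle "is there a repair $I'$ with the current atoms kept, none of the current atoms among a chosen deleted set, and $I'\notin Y$?" at each step; a polynomial number of oracle calls in $|D|+\|Y\|$ suffices, giving $\pEnum{R}\in\incP^{\SigmaP{2}}$.

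For \emph{hardness}, the natural source is $\abd$, shown $\incP^{\SigmaP{2}}$-complete just above, or directly $\sigmaSATe{2}$ via Theorem~\ref{theorem:selfreducomplete} and Corollary~\ref{cor:complexity_sat}. Since $\pExist{R}$ for $\repair$ is trivial, hardness cannot come from the decision problem and must be obtained by an explicit $e$-reduction from a problem already known hard — mirroring the $\mbd$ proof. I would reduce from $\sigmaSATe{2}$ (or from $\abd$). The task is to encode a formula $\psi(\vec x)=\exists \vec y\,\forall \vec z\,\phi(\vec x,\vec y,\vec z)$ into a database $D$ and a set $C$ of EGDs so that the repairs of $D$ correspond, via a polynomially-bounded-to-one relation $\tau$, to the satisfying assignments of $\vec x$. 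The idea: use relation atoms to hold, for each variable, both "true" and "false" witness tuples; EGDs of the form $\phi(\vec x)\to x_i=x_j$ can force equalities that collapse atoms encoding contradictory choices, so that a subset $I\subseteq D$ is consistent exactly when it encodes a well-defined partial assignment and, crucially, the subset-minimality of $\Delta(D,I)$ (i.e., \emph{maximality} of $I$) captures the $\forall\vec z$ quantifier — a repair must delete as little as possible, and one engineers the gadget so that a "small" deletion is possible iff the kept $(\vec x,\vec y)$-assignment satisfies $\phi$ for every $\vec z$. Then $\tau$ maps each repair to the $\vec x$-part of the assignment it encodes; the $\exists\vec y$ collapses several repairs (different $\vec y$) to the same $\vec x$, which is exactly what condition~3 of the $e$-reduction permits, while condition~2's polynomial bound on "useless" $y$'s is handled by the gadget producing no stray repairs.

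The main obstacle is the hardness construction: simulating the $\forall\vec z$ alternation purely through the subset-minimality of symmetric difference is delicate, because EGDs can only \emph{equate} values, not assert arbitrary constraints, so one must design a database schema and EGD set in which "this partial assignment fails $\phi$ for some $\vec z$" manifests as "a strictly smaller symmetric difference is achievable." This is the step where the proof "is more involved than for the previously studied problems," and it is where I would spend the bulk of the effort — likely building on the $\PiP{2}$-hardness gadgetry for consistent query answering under EGDs from~\cite{DBLP:conf/icdt/ArmingPS16}, adapting it so that the universal check is pushed into the repair-minimality condition rather than into a query. Verifying that the resulting $\tau$ meets all three conditions of Definition~\ref{definition:firstreduction} (in particular that $\tau(\sigma(x),y,-)$ is trivially enumerable and that only polynomially many repairs map to each $\vec x$) should then be routine.

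\begin{proof}[Proof sketch]
(Omitted here; see the plan above — membership via Proposition~\ref{prop:characterizationIncP} using the $\DeltaP{2}$ {\sc Repair Checking} problem and a $\SigmaP{2}$ oracle for $\pAnotherSol{R}$, and hardness via an $e$-reduction from $\sigmaSATe{2}$ encoding the $\exists\forall$ alternation into an EGD repair instance.)
\end{proof}
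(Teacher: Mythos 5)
There are two genuine problems with your plan. First, on membership: your reason for avoiding Proposition~\ref{prop:new} is a misreading of it. That proposition only requires $\pExtSol{R}\in\SigmaP{2}$ (membership, not non-completeness); $\SigmaP{2}$-\emph{hardness} of $\pExtSol{R}$ is no obstacle at all. This is exactly the paper's route: since repairs under EGDs are obtained by deletions only, a solution can be encoded as a word $V\in\{0,1\}^N$ marking which atoms of $D$ are retained, and $\pExtSol{R}\in\SigmaP{2}$ (guess the completion of the word, verify repair-hood with a {\sf DP}-oracle), whence $\repair\in\delayP_p^{\SigmaP{2}}\subseteq\incP^{\SigmaP{2}}$ directly. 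The issue you are conflating is self-reducibility (relevant to Proposition~\ref{prop:selfreducible}, where one wants to descend from $\pExtSol{R}$ to $\pExist{R}$), which plays no role here. Your alternative via $\pAnotherSol{R}$ could probably be made to work, but it is both unnecessary and left at the level of a sketch.

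Second, and more seriously, the hardness argument is missing its core and the plan as stated would fail. You propose an $e$-reduction from $\sigmaSATe{2}$, encoding $\exists\vec y\,\forall\vec z\,\phi$, with $\tau$ projecting each repair to its $\vec x$-part so that ``the $\exists\vec y$ collapses several repairs (different $\vec y$) to the same $\vec x$.'' But condition~3 of Definition~\ref{definition:firstreduction} allows only \emph{polynomially} many solutions of the target problem to map to one solution of the source problem; a satisfying $\vec x$ may have exponentially many witnesses $\vec y$, each yielding a distinct repair, so your $\tau$ would violate condition~3. The paper avoids this entirely by reducing from $\piSATe{1}$ (restricted to 3-DNF), which is already $\incP^{\SigmaP{2}}$-complete via $I$-reductions by Corollary~\ref{cor:complexity_sat}; with only a single universal block, the gadget (always-violated EGDs forcing a choice of one atom $p_i(0,1)$ or $p_i(1,0)$ per variable, an $a$-relation containing all triples except $(1,1,1)$, and a second EGD whose left-hand side embeds into $D$ iff some $\vec y$ falsifies every implicant) puts the relevant repairs in one-to-one correspondence with the models of $\forall\vec y\,\phi(\vec x,\vec y)$, and subset-minimality of the symmetric difference does the universal check. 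Beyond this structural flaw, you have deferred the entire construction — which is precisely the hard part of the theorem — so the proposal does not yet constitute a proof.
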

\begin{proof}
For the {\it membership} proof, it is important to note that 
repairs of $D$ w.r.t.\ a set of EGDs can only be obtained by deletions, i.e., 
if a constraint is violated, it does not help to add further atoms.
In other words, EGDs have a monotonic behaviour in the sense that if a database instance $I$ 
violates an EGD, then also every superset of $I$ does. 
This allows us to redefine $\repair$ as $\pEnum{R}$, where the relation $R$ consists of pairs $((D,C),V)$ 
such that
\begin{itemize}
\item $D$ is a database instance containing $N\geq0$ atoms, which are numbered
as $A_1, \dots, A_N$,
\item $C$ is a set of EGDs and
\item $V$ is a word of length $N$ over $\{0,1\}$, where $V(i) = 0$ means that we delete $A_i$ from $D$ and
$V(i) = 1$ means that we retain $A_i$ in $D$.
\end{itemize}
We observe that $\pExtSol{R}\in\SigmaP{2}$: Given some word $V_1\in\{0,1\}^{N-k}$, we need 
a non-deterministic guess for the instantiation of some $V_2\in\{0,1\}^k$ and then we 
need to call a {\sf DP}-oracle (or, equivalently, two 
{\sf NP}-oracles)
to check if the word $V_1V_2$
represents a repair. Thus $\pEnum{R}\in\delayP^{\SigmaP{2}}_p$ by Proposition~\ref{prop:new}, meaning
that $\repair\in\incP^{\SigmaP{2}}$.

\smallskip

For the {\it hardness} proof, we first note 
that the $\incP^{\SigmaP{2}}$-completeness
of $\piSATe{1}$ via $I$-reductions from Corollary~\ref{cor:complexity_sat}
can be strengthened in the sense that we may restrict the 
instances to 3-DNF (i.e., the matrix of the formul\ae\ under investigation 
is in DNF such that each implicant consists of 3 literals.
Then our $\incP^{\SigmaP{2}}$-completeness proof is by an $e$-reduction from 
$\piSATe{1}$, where we assume w.l.o.g., that the instances of $\piSATe{1}$
are in 3-DNF.

Consider an arbitrary instance of $\piSATe{1}$, i.e., 
a quantified boolean formula $\psi(\vec{x}) = \forall \vec{y} \phi (\vec{x} ,\vec{y})$.
Let $\vec{x} = x_1, \dots, x_k$ and 
$\vec{y} = y_1, \dots, y_\ell$.
By the restriction to 3-DNF, $\phi (\vec{x} ,\vec{y})$
is of the form $\phi (\vec{x} ,\vec{y}) = \bigvee_{i=1}^m 
(l_{i1} \wedge l_{i2} \wedge l_{i3})$, where $m$ denotes the number of implicants in
$\phi$ and $l_{ij}$ with $j \in \{1,2,3\}$ is the $j$-th literal in the $i$-th implicant, i.e., each $l_{ij}$ is of the form $x_\alpha$, $\neg x_\alpha$,
$y_\beta$, or $\neg y_\beta$ with $\alpha \in \{1, \dots, k\}$ and 
$\beta  \in \{1, \dots, \ell\}$.

From this we define an instance $\sigma(\psi(\vec{x}))=(D,C)$ of $\repair$.
The relation symbols used in this instance are 
$\{a,q\} \cup \{p_1, \dots, p_k\} \cup \{b_0,b_1\}$
It is convenient to 
define the following subformul\ae: 

\newcommand*{\ol}[1]{\overline{#1}}

\setcounter{equation}{0}

\begin{align*}
    \chi={}& b_0(v_0) \wedge b_1(v_1) \wedge q(v_0,v_1) 
                               \wedge q(v_1,v_0) \wedge \mbox{} \\
                     & a(v_0,v_0,v_0) \wedge a(v_0,v_0,v_1)\wedge a(v_0,v_1,v_0)
                                \wedge a(v_0,v_1,v_1) \wedge  \mbox{} \\       
                     & a(v_1,v_0,v_0) \wedge a(v_1,v_0,v_1)\wedge a(v_1,v_1,v_0) \\
    \pi={}& \bigwedge_{i=1}^k p_i(w_i,w'_i)\\
    \psi^*={}& \bigwedge_{i=1}^m a(l^*_{i1},l^*_{i2},l^*_{i3})
\end{align*}

\noindent
where we define $l^*_{ij} = z$ if $l_{ij}$ is the positive literal 
$z$ and $l_{ij}^*=z'$ if $l_{ij}$ is the negative literal 
$\neg z$.
For example,
$[(x \wedge \neg y \wedge z) \vee (\neg x \wedge y \wedge z)]^* = 
a(x,y'z) \wedge a(x',y,z)$. 

\smallskip

\noindent
Then the database $D$ of our instance of $\repair$ looks as follows:
\begin{align*}
    D={}& P \cup Q \cup A \cup E \mbox{ with } \\
    P={} & \{p_1(0,1), p_1(1,0),p_2(0,1), p_2(1,0), \dots, p_k(0,1), p_k(1,0)\}\\
    Q={} & \{q(0,1), q(1,0)\}\\
    A={} & \{a(0,0,0), a(0,0,1), a(0,1,0),a(0,1,1), a(1,0,0), a(1,0,1), a(1,1,0)\} \\
    E={} & \{b_0(0), b_1(1)\}\\[1.1ex]
\end{align*}
\noindent
Finally, we define the set $C$ of integrity constraints: 
\begin{align}
    C={}& \bigcup_{1\leq i\leq k} \{ \chi \wedge \pi \wedge p_i(x,x') \wedge p_i(x',x) 
               \rightarrow  x = x' \} \tag{C1}\\
        & \cup \{ \chi \wedge \pi \wedge \bigwedge_{i=1}^k p_i(x_i,x'_i)  \wedge 
                  \bigwedge_{j=1}^\ell q(y_j,y'_j) \wedge \psi^* 
                   \rightarrow x_1 = x'_1 \}\tag{C2}
\end{align}
The intuition of this reduction is as follows: 
all of the EGDs in line (C1) are clearly violated, i.e.: $\chi$ can be sent into the 
database $D$ by instantiating $v_0$ to 0 and $v_1$ to 1; 
likewise $\pi$ can be sent into the database $D$ by instantiating each pair $(w_i,w'_i)$
either to $(0,1)$ or to $(1,0)$. Finally, 
each of the atoms $p_i(x,x')$ and $p_i(x',x)$ can be sent into $D$ by instantiating  $(x,x')$ 
to either $(0,1)$ or $(1,0)$. 
In either case, the left-hand side of the EGD is satisfied but the right-hand side is not.
Hence, there are three possibilities to eliminate this EGD-violation: 
\begin{description}
\item[(D1)] we can delete one of the atoms in $Q \cup A \cup E$, 

\item[(D2)] we can delete both atoms $p_i(0,1)$ and $p_i(1,0)$ for some 
$i \in \{1, \dots, k\}$, or

\item[(D3)] we can delete one of the atoms $p_i(0,1), p_i(1,0)$ for every $i \in \{1,\dots, k\}$.

\end{description}

%

Both (D1) as well as (D2) lead to a repair, as the subformul\ae\ $\chi$ respectively $\pi$ are
always violated, so we need to consider (D3). The effect
of retaining $p_i(0,1)$ with $i \in \{1, \dots, k\}$ is that
the variables
$(x_i,x'_i)$ in line (C2) are instantiated to (0,1); likewise, 
the effect 
of retaining $p_i(1,0)$ is that the 
the variables
$(x_i,x'_i)$ in line (C2) are instantiated to (1,0). Hence, there is a one-to-one 
correspondence between the choices of retained ground atom $p_i(.,.)$ for every 
$i \in \{1, \dots, k\}$ and truth assignments to the $x_i$-variables in $\psi$, namely
retaining $p_i(0,1)$ (resp.\ $p_i(1,0)$) corresponds to 
setting the propositional variable $x_i$ in $\psi$ to 
false (resp.\ true) and to set its dual $\neg x_i$, which is encoded by $x'_i$ according to our definition of $\psi^*$, to true (resp.\ false).  For a repair $I$ that only uses deletions of the form (D3),
denote by $\rho(I)$ the corresponding truth assignment on the propositional variables $\{x_1, \dots, x_k\}$.
We then define the relation $\tau$ for the $e$-reduction as follows:
\begin{equation*}
\tau(\sigma(\psi(\vec{x})), I,-)=\left\{
     \begin{array}{ll}
       \emptyset & : I\text{ is the result of a deletion of the form (D1) or (D2)},\\
       \rho(I) & :\text{otherwise}.
     \end{array}
   \right.
\end{equation*}

To show that $\tau$ and $\sigma$ indeed define a valid $e$-reduction from $\piSATe{1}$,
first note that $\tau(\sigma(\psi(x)), I,-)$ maps to the empty set at most $k+11$ times. So it suffices to
show that
the repairs we can get if we apply a deletion of the form (D3) are indeed models of $\psi$.
We distinguish two cases: 
First suppose that the EGD in line (C2) is violated by the instance resulting from the deletion of exactly one of $p_i(0,1), p_i(1,0)$ for every $i \in \{1,\dots, k\}$. Then we 
have to delete at least one more atom from $D$. This means that we either delete 
one atom from $Q \cup A \cup E$ to falsify $\chi$ (and possibly also further atoms in 
$q(y_i,y'_i) \wedge \psi^*$)  or we delete 
the remaining atom with leading symbol $p_i$ to falsify 
atom $p_i(x_i,x'_i)$ on the left-hand side of the EGD. In either case, the resulting 
database instance is not a repair due to the minimality condition. 

It remains to consider the case that the EGD in line (C2) is satisfied 
by the instance $I$ resulting from the deletion of exactly one of $p_i(0,1), p_i(1,0)$ for every $i \in \{1,\dots, k\}$. Clearly, the equality on the right-hand side is always violated, no matter how we instantiate $(x_1,x'_1)$, since the only options are $(0,1)$ and $(1,0)$.
This means that the only way to satisfy the EGD in line (C2) is that there is no way to send the left-hand side into the database instance $I$. This means that no matter how we
instantiate $(y_j,y'_j)$ -- i.e., either to $(0,1)$ or to $(1,0)$ --
there is always one atom in $\psi^*$ that is instantiated to an $a$-atom outside $I$, 
namely $a(1,1,1)$. By our encoding of $\psi$ by $\psi^*$ this has the following 
meaning for the quantified boolean formula $\psi$: no matter how we extend
the truth assignment $\rho(I)$ to an assignment on 
$\{y_1, \dots, y_\ell\}$ (corresponding to the possible ways of instantiating $(y_j,y'_j)$
for every $j \in \{1, \dots, \ell\}$),
there is  at least one implicant in $\psi$ for which all three literals evaluate to 
true. In other words, there is a one-to-one correspondence between repairs 
obtained by deleting exactly one of $p_i(0,1), p_i(1,0)$ for every $i \in \{1,\dots, k\}$
and models of $\psi$.
%
\end{proof}

%
%
%
%
%
%
%
%

\section{Conclusion}
\label{sect:conclusion}

We introduced a hierarchy of enumeration complexity classes, extending the well-known tractable enumeration classes $\delayP$ and $\incP$, just as the $\DeltaP{k}$-classes of the polynomial-time hierarchy extend the class $\ptime$. We show that under reasonable complexity assumptions these hierarchies are strict. We introduced a declarative type of reduction among enumeration problems which is transitive and under which the classes in our hierarchies are closed.
Moreover, they turn out to be special cases of the procedural-style $D$- and $I$-reductions that we defined afterwards. These procedural style reductions allow one to get completeness results for the classes of our hierarchy. The $e$-reductions are conceptually simpler and easier to obtain; so they will be used to obtain further hardness results for enumeration problems. 
In this way we obtain completeness results for diverse problems such as
generalized satisfiability in the Schaefer framework, 
circumscription, 
model-based diagnosis, 
abduction, and repairs.

Prior to our work,  lower bounds for enumeration problems were only of the form ``$\pEnum{R}$ is not in $\delayP$ (or $\incP$) unless $\ptime=\NP$''.
We have provided here a framework which allows one to pinpoint the complexity of such problems in a better way in terms of completeness in our enumeration hierarchy. 

As far as future work is concerned, of course, we want to close the gap left in Figure~\ref{fig:hierarchy}, 
i.e., clarifying the precise relationship between the enumeration complexity classes 
$\delayP^{\SigmaP{k}}$ and $\delayP^{\DeltaP{k+1}}$. 
For a wider field of research activities in the area of hard enumeration problems, recall that our 
hierarchies of enumeration complexity classes build upon the tractable classes $\delayP$ and $\incP$. Note that in 
\cite{JPY1988}, yet another notion of tractable enumeration is proposed, namely the class $\outputP$ (also referred to as $\totalP$ 
in \cite{Strozecki2010}). A problem is in $\outputP$ if the time needed to output all solutions to a given instance 
is bounded by a polynomial in the combined size of the input {\em plus\/} the output. It is easy to show that 
this class  is closed w.r.t.\ $e$-reductions and only minor modifications of the $D$- and $I$-reductions are needed to close 
$\outputP$  also under these reductions. However, while we have seen quite a close relationship between $\delayP$ and $\incP$ 
in this work, the class $\outputP$ seems to behave very differently from the others. For instance, $\outputP$ seems 
to be incomparable (under common complexity-theoretic assumptions)
with {\em all\/} higher classes from our hierarchies built upon $\delayP$ or $\incP$. Sure, we may also define a hierarchy of classes 
by allowing, for instance, a $\SigmaP{k}$ oracle for some $k \geq 1$ on top of $\outputP$. However, it is unclear what complete problems for any of the resulting classes would look like. A major research effort is required to even get a basic understanding of such classes. 

Finally, a natural next research target is to put the machinery developed here to work. 
In Section \ref{sect:completeness}, we have already proved a few completeness 
results for our new complexity classes. Many more intractable enumeration problems (especially in the AI and database domains) wait
for a precise complexity classification. We believe that such work -- complementing the search for restrictions to make intractable
enumeration problems tractable, cf.\ \cite{KPS2016,DBLP:conf/pods/LivshitsK17} -- is needed to get a better understanding of the true sources of complexity 
of hard enumeration problems.

\section*{Acknowledgments}

We  would like to thank Phokion Kolaitis for his encouragement on this work during his visit to Marseilles 
and for suggesting to consider  database repairs  as complete problems for our complexity classes.

This work was supported by the
Vienna Science and Technology Fund (WWTF) through project ICT12-015,
the Austrian Science Fund (FWF): P25207-N23, P25518-N23, I836-N23, W1255-N23
and the French Agence Nationale de la Recherche,
AGGREG project reference ANR-14-CE25-0017. 

\bibliography{references}

\end{document}